\def\@email#1#2{%
 \endgroup
 \patchcmd{\titleblock@produce}
  {\frontmatter@RRAPformat}
  {\frontmatter@RRAPformat{\produce@RRAP{*#1\href{mailto:#2}{#2}}}\frontmatter@RRAPformat}
  {}{}
}%
\newcommand{\numberset}{\mathbb} 
\newcommand{\Z}{\numberset{Z}} 
\newcommand{\R}{\numberset{R}}
\let\C\relax
\newcommand{\C}{\numberset{C}}
\newcommand{\inv}{^{-1}}
\let\U\relax
\newcommand{\U}{\mathcal{U}_q}
\newcommand{\qsl}{\mathfrak{sl}_2}
\newcommand{\Hmatrix}{{\hat{H}}}
\newcommand{\mode}{{\tilde{e}}}
\newcommand{\ii}{\mathrm{i}}
\newcommand{\ee}{\mathrm{e}}
\let\oldcite\cite
\renewcommand{\cite}{~\oldcite}
\theoremstyle{plain}
\newtheorem{theorem}{Theorem}[section]
\newtheorem{theoremA}{Theorem}
\newtheorem*{theorem*}{Theorem}
\newtheorem*{corollary*}{Corollary}
\newtheorem*{assumption*}{Assumption}
\newtheorem*{question*}{Question}
\newtheorem{proposition}[theorem]{Proposition}
\newtheorem{lemma}[theorem]{Lemma}
\newtheorem{corollary}[theorem]{Corollary}
\theoremstyle{definition} 
\newtheorem{definition}[theorem]{Definition}
\newtheorem{remark}[theorem]{Remark}
\begin{document}

\title{Hofstadter-Toda spectral duality and quantum groups}

\author{Pasquale Marra} 
\email{pmarra@ms.u-tokyo.ac.jp}
\affiliation{
Graduate School of Mathematical Sciences,
The University of Tokyo, 3-8-1 Komaba, Meguro, Tokyo, 153-8914, Japan}
\affiliation{
Department of Physics \& Research and Education Center for Natural Sciences, 
Keio University, 4-1-1 Hiyoshi, Yokohama, Kanagawa, 223-8521, Japan}
\author{Valerio Proietti}
\affiliation{
Department of Mathematics, University of Oslo, P.O. box 1053, Blindern, 0316 Oslo, Norway}
\email{valeriop@math.uio.no}
\author{Xiaobing Sheng}
\affiliation{
Okinawa Institute of Science and Technology Graduate University, 1919-1 Tancha, Okinawa 904-0495, Japan}
\email{xiaobing.sheng@oist.jp}
\date{\today}

\begin{abstract}
The Hofstadter model allows to describe and understand several phenomena in condensed matter such as the quantum Hall effect, Anderson localization, charge pumping, and flat-bands in quasiperiodic structures, and is a rare example of fractality in the quantum world. An apparently unrelated system, the relativistic Toda lattice, has been extensively studied in the context of complex nonlinear dynamics, and more recently for its connection to supersymmetric Yang-Mills theories and topological string theories on Calabi-Yau manifolds in high-energy physics. Here we discuss a recently discovered spectral relationship between the Hofstadter model and the relativistic Toda lattice which has been later conjectured to be related to the Langlands duality of quantum groups. Moreover, by employing similarity transformations compatible with the quantum group structure, we establish a formula parametrizing the energy spectrum of the Hofstadter model in terms of elementary symmetric polynomials and Chebyshev polynomials. The main tools used are the spectral duality of tridiagonal matrices and the representation theory of the elementary quantum group.
\end{abstract}

\keywords{quantum Hall effect, Hofstadter model, Aubry-André model, quantum groups, Toda lattice.}

\maketitle

\section*{Introduction}\label{sec:intro} 

Among the diversity of natural phenomena, intriguing connections sometimes emerge, where different physical systems share similar mathematical equations or structures. 
These unexpected parallels may seem coincidental or, conversely, hint at a deeper unity within the physical and mathematical worlds. 
One such example is the relationship between the relativistic Toda lattice model\cite{toda_vibration_1967,ruijsenaars_relativistic_1990} in high energy physics and the Hofstadter\cite{harper_single_1955,hofstadter_energy_1976,aubry_analyticity_1980} model in condensed matter physics.

The Toda lattice model describes a one-dimensional lattice of $N$ particles with exponentially decaying interactions\cite{toda_vibration_1967}.
Initially conceived as a simple toy model for one-dimensional crystals, it is a prominent example of a nonlinear system in mathematical physics, which stands out for the remarkable property of being exactly solvable and integrable.
Moreover, the model exhibits the emergence of solitonic excitations, providing insights into the complex dynamics of nonlinear systems.
Furthermore, its relativistic generalization\cite{ruijsenaars_relativistic_1990} is studied in high-energy physics for its connection to supersymmetric Yang-Mills theories in five dimensions\cite{nekrasov_five-dimensional_1998} and topological string theories on Calabi-Yau manifolds\cite{katz_geometric_1997,aganagic_quantum_2012,hatsuda_exact_2016,hatsuda_hofstadters_2016,hatsuda_calabi-yau_2017}.

On the other hand, the Hofstadter model describes noninteracting charged fermions (e.g., electrons) on a two-dimensional lattice in a magnetic field (more generally, a gauge field) perpendicular to the plane or, alternatively, noninteracting fermions in a one-dimensional quasiperiodic lattice\cite{harper_single_1955,hofstadter_energy_1976,aubry_analyticity_1980}.
The most distinctive feature of the model is the presence of a fractal energy spectrum, which is one of the few examples of fractals in quantum physics, and the very rich phase diagram\cite{osadchy_hofstadter_2001} with topological gapped phases indexed by a topological invariant, the Chern number of the occupied bands\cite{thouless_quantized_1982}.
The Hofstadter model is the discrete counterpart of a quantum Hall system, in the sense that it describes the Landau levels of free electrons in a magnetic field regularized on a discrete lattice\cite{thouless_quantized_1982}.
Indeed, in the weak field limit, i.e., when the magnetic cyclotron radius becomes much larger than the lattice constant, the quantum Hall system and the Hofstadter model become physically equivalent.
It is however when the magnetic cyclotron radius is comparable with the lattice constant that the fractal properties emerge, as a consequence of the incommensuration between these two characteristic lengths.

The Hofstadter Hamiltonian is moreover identical to the model introduced by Aubry and André\cite{aubry_analyticity_1980}, describing fermions on a one-dimensional discrete lattice in the presence of a harmonic potential not necessarily commensurate with the lattice parameter.
In the Aubry-André model, the role of the magnetic flux per unit cell in the two-dimensional system is played by the angular wavenumber of the harmonic potential (in units of the lattice parameter), and the role of the second dimension is played by the phase shift of the harmonic potential with respect to the discrete lattice, giving rise to a so-called second \emph{synthetic} dimension. 
This also leads to the emergence of topological charge pumping\cite{thouless_quantization_1983}, which shares the same topological origin as the quantum Hall effect.
The fractal nature of the spectra emerges here as a consequence of the incommensuration between the periodicity of the harmonic potential and the periodicity of the underlying lattice.
This model can describe the effects of disorder, including Anderson localization\cite{aubry_analyticity_1980} and quasiperiodicity\cite{kraus_topological_2012,kraus_four-dimensional_2013,marra_topologically_2020}.

Physical realizations of the Hofstadter model include microwave photonic systems\cite{kuhl_microwave_1998}, Moiré superlattices in graphene\cite{ponomarenko_cloning_2013,dean_hofstadters_2013,hunt_massive_2013}, ultracold atoms in incommensurate optical lattices\cite{aidelsburger_realization_2013}, and interacting phonons in superconducting qubits\cite{roushan_spectroscopic_2017}.

Although originating from distinct branches of physics, the $N=2$ relativistic Toda lattice and Hofstadter models are described by Hamiltonians which transform one into the other by a formal substitution of the position and momenta operators $x,p\mapsto \ii x,\ii p$.
Mathematically, this corresponds to substituting Mathieu ($\cos$) and the modified Mathieu ($\cosh$) potentials or operators.
Recently, a deep relationship between the energy spectra of the Hofstadter model, the relativistic Toda lattice model, and their modular dual was found by Hatsuda, Katsura, and Tachikawa\cite{hatsuda_hofstadters_2016}.
Building on these results, it has been conjectured that the Hofstadter model is intimately related to the Langlands duality of quantum groups\cite{ikeda_hofstadters_2018}, suggesting a deep connection between the Langlands program, quantum geometry, and quasiperiodicity. 

Here, using the representation theory of the elementary quantum group, and the fundamental properties of tridiagonal matrices, we study the properties of a polynomial which appears to capture spectral information for both models. More precisely, the polynomial is implicitly defined by 
the fundamental self-recurring property 
of the Hofstadter butterfly.

We notice that both the Hofstadter and Toda lattice models 
are known to be related to the root systems $A_{N-1}$.
Hence it should be no surprise that \(\U(\qsl)\) (the simplest \emph{quantum group}) plays a central role in the study of these models. 
More generally, we stress the fact that noncommutative geometry is the most natural language to investigate the properties of the Hofstadter model and the related Aubry and André model, as well as other quasiperiodic and aperiodic structures emerging in condensed matter physics, such as quasicrystals, Penrose tilings, and disordered systems\cite{bellissard_the-noncommutative_2003,putnam_non-commutative_2010,loring_disordered_2011,oyono-oyono_c-algebras_2011}.

It should be emphasized that the first result below (Theorem \ref{thm:prel}) has already been understood\cite{hatsuda_hofstadters_2016} in the context of the relativistic Toda model, and our contribution is a clarification of the relationship with the characteristic polynomials associated with the Hofstadter model, making use of 
a ``quantum group-adjusted'' gauge introduced by Wiegmann and Zabrodin\cite{wiegmann_bethe-ansatz_1994},
and further studied by Hatsugai, Kohmoto, and Wu\cite{hatsugai_quantum_1996},
and the spectral duality developed by Molinari (see Ref.~\onlinecite{mol:tri}, where general Hamiltonian matrices similar to the one considered here are investigated).

The formula given in Theorem \ref{thm:formula_conc} and the parametrization in Theorem \ref{thm:par} are new, and they are obtained by using the representation theory of $\U(\qsl)$ and standard formulas involving Chebyshev polynomials. 

Let us briefly summarize our results. 
\begin{theoremA}[Ref.~\onlinecite{hatsuda_hofstadters_2016}]
\label{thm:A}
 Let $E,\tilde{E}$ denote the energy of the Toda Hamiltonian and its modular dual, respectively. The spectral transformation $E\mapsto \tilde{E}$, which also appears 
 as the self-similarity relation of the Hofstadter butterfly, 
 is defined by 
 \begin{equation}\label{eq:pprel}
 \det(E-\Hmatrix_{P/Q})=\det(\tilde{E}-\Hmatrix_{Q/P})
 \end{equation}
 where $\Hmatrix_{P/Q}$ is the finite-dimensional Hofstadter Hamiltonian with rational flux $\alpha=P/Q$, evaluated at the point $\vec{\nu} =(\pi/2Q,\pi/2Q)$ of the Brillouin zone.
\end{theoremA}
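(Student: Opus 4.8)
The plan is to reduce both sides of \eqref{eq:pprel} to one-variable characteristic polynomials and then to match them by a transfer-matrix argument. On the Toda side the map $E\mapsto\tilde E$ is the modular-dual energy relation established in \cite{hatsuda_hofstadters_2016}; the real content of \eqref{eq:pprel} is its identification with the Hofstadter characteristic polynomials, so I would work entirely with the finite matrices $\Hmatrix_{P/Q}$. First I would make $\Hmatrix_{P/Q}$ explicit: for $\alpha=P/Q$ in lowest terms the Harper operator on $\Z$ has period $Q$, and Bloch's theorem produces a $Q\times Q$ matrix $\Hmatrix_{P/Q}(\nu_x,\nu_y)$ which is tridiagonal with two ``wrap-around'' corner entries carrying the Bloch phases $\ee^{\pm\ii Q\nu_x}$, while $\nu_y$ enters additively in the diagonal Harper potential. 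Before anything else, I would conjugate this into the Wiegmann--Zabrodin / Hatsugai--Kohmoto--Wu gauge\cite{wiegmann_bethe-ansatz_1994,hatsugai_quantum_1996}, in which $\Hmatrix_{P/Q}$ becomes the \emph{balanced} tridiagonal operator $A+A\inv+B+B\inv$, with $A,B$ the Weyl operators obeying $AB=q^2BA$ at the root of unity $q=\ee^{\ii\pi P/Q}$ in its $Q$-dimensional cyclic representation; this is the gauge adapted to $\U(\qsl)$, and it is what will make the $P/Q\leftrightarrow Q/P$ symmetry visible.

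Next I would run the Chambers-type cofactor expansion. Expanding $\det(E-\Hmatrix_{P/Q}(\nu_x,\nu_y))$ along the corner entries splits it into the open-chain minor plus the two wrap-around contributions; these produce $-2\cos(Q\nu_x)$, and---the nonobvious part of Chambers' relation---the entire $\nu_y$-dependence collapses to a single $-2\cos(Q\nu_y)$, so that
\begin{equation}
\det\bigl(E-\Hmatrix_{P/Q}(\nu_x,\nu_y)\bigr)=\Delta_{P/Q}(E)-2\cos(Q\nu_x)-2\cos(Q\nu_y),
\end{equation}
with $\Delta_{P/Q}$ a polynomial of degree $Q$ depending only on the ratio $P/Q$. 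At $\vec\nu=(\pi/2Q,\pi/2Q)$ one has $Q\nu_x=Q\nu_y=\pi/2$, both cosines vanish, and $\det(E-\Hmatrix_{P/Q})=\Delta_{P/Q}(E)$; the identical computation for $\Hmatrix_{Q/P}$ at its own midband point $(\pi/2P,\pi/2P)$ gives $\det(\tilde E-\Hmatrix_{Q/P})=\Delta_{Q/P}(\tilde E)$. Hence \eqref{eq:pprel} is equivalent to the assertion that the two ``midband'' polynomials $\Delta_{P/Q}$ and $\Delta_{Q/P}$ cut out the same plane curve in $(E,\tilde E)$, which is exactly the self-similarity relation of the Hofstadter butterfly.

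To establish that identity I would pass to transfer matrices, writing $\Delta_{P/Q}(E)=\operatorname{tr}\prod_{k=1}^{Q}T_k(E)$ (up to an overall sign), with $T_k(E)=\left(\begin{smallmatrix}E-w_k&-1\\1&0\end{smallmatrix}\right)$ and $w_k=2\cos\big(2\pi\tfrac{P}{Q}k+\tfrac{\pi}{2Q}\big)$ the diagonal Harper potential at the distinguished phase, so that the whole $Q$-dependence is carried by an ordered product indexed by the periodic sequence $\{Pk\bmod Q\}$. Here I would invoke Molinari's spectral duality for (block-)tridiagonal matrices\cite{mol:tri}, which interchanges the period and the block size and thereby relates the $P/Q$ transfer product to the $Q/P$ one after the reparametrization $E\mapsto\tilde E$ forced by equating the two characteristic curves; the quantum-group picture supplies the conceptual reason, namely that the Weyl pair at $q=\ee^{\ii\pi P/Q}$ has a modular/Langlands-dual pair at $\tilde q=\ee^{\ii\pi Q/P}$ generating its commutant, under which $A+A\inv+B+B\inv$ is carried to $\tilde A+\tilde A\inv+\tilde B+\tilde B\inv=\Hmatrix_{Q/P}$.

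The step I expect to be the main obstacle is the boundary bookkeeping. Molinari's duality a priori matches the \emph{full} Bloch determinants only up to their wrap-around contributions, so I must check that the distinguished Brillouin-zone points are exactly the ones annihilating those corner terms on both sides simultaneously; otherwise one obtains an identity of determinants-modulo-boundary rather than the clean polynomial relation between $\Delta_{P/Q}$ and $\Delta_{Q/P}$. In the same step I have to pin down the precise change of variables $E\mapsto\tilde E$---which, since $\deg\Delta_{P/Q}=Q\neq P=\deg\Delta_{Q/P}$, is a genuine algebraic correspondence rather than a function when $P,Q>1$---and confirm that it agrees with the Toda modular-dual energy map of \cite{hatsuda_hofstadters_2016}, which closes the circle.
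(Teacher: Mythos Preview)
Your overall architecture---Chambers relation to strip off the $(\nu_x,\nu_y)$-dependence at the mid-band point, then rewrite each characteristic polynomial as the trace of a transfer-matrix product---matches the paper. The gap is in how you propose to link the two sides.

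You write that Molinari's spectral duality ``interchanges the period and the block size and thereby relates the $P/Q$ transfer product to the $Q/P$ one.'' This is not what that duality does. In the paper (Theorem~\ref{thm:spd}), Molinari's result is the identity $f(E,z)=(-1)^{Q-1}z^{-Q}\det(T(E)-z^Q)$: it exchanges the roles of the \emph{energy} $E$ and the \emph{Bloch phase} $z$ for a \emph{fixed} flux $\alpha=P/Q$, converting the $Q\times Q$ Hofstadter determinant into a $2\times 2$ transfer-matrix determinant. The numerator $P$ appears nowhere except inside the diagonal entries $\theta_k$; there is no mechanism in Molinari that swaps $P$ and $Q$.

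In the paper's argument the $P\leftrightarrow Q$ bridge is supplied externally, by the Toda side: because $H_{\text{Toda}}$ and its modular dual commute, simultaneous diagonalization yields the tracial identity \eqref{eq:trrel} between a $Q$-fold product and a $P$-fold product of $2\times 2$ matrices. Evaluating at $x=0$ turns the $\cosh$ into $\cos$ and identifies each side with a Hofstadter transfer-matrix trace; only \emph{then} is Molinari invoked, separately on each side, to rewrite those traces as the characteristic polynomials $f_{P/Q}$ and $f_{Q/P}$, with the Chambers relation pinning the evaluation to the mid-band point. What you describe as the ``conceptual reason'' (the Weyl pair at $q$ and its modular-dual pair at $\tilde q$ generating the commutant) is not a side remark---it is precisely the mechanism producing \eqref{eq:trrel}, and hence the actual engine of the proof. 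Your final paragraph treats the Toda modular-dual map as a consistency check to be verified at the end; in fact it is the input that makes the $P/Q\leftrightarrow Q/P$ step go through.
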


The polynomial relationship of Eq. \eqref{eq:pprel} was first proved 
by Hatsuda, Katsura, and Tachikawa\cite{hatsuda_hofstadters_2016} 
by comparing the eigenvalue equation of the Hofstadter Hamiltonian $H$, the Toda Hamiltonian $H_\text{Toda}$, and its modular dual.

Here, we clarify the argument by using the spectral duality of tridiagonal matrices with corners developed in Ref.~\onlinecite{mol:tri} and proving that the polynomials involved are indeed the characteristic polynomials determining the spectrum of the Hofstadter model (as observed in Ref.~\onlinecite[p.~10]{hatsuda_hofstadters_2016}).

\begin{theoremA}\label{thm:B}
We have a formula for $f(E)=\det(\Hmatrix_{P/Q}(\kappa_+,\kappa_-)-E)$ given as
\begin{gather*}
f(E)=\sum_{i=0}^{\lfloor Q/2\rfloor} (-1)^{Q+i}\,4^i\,E^{Q-2i}\,\mode_i(\sin^2(\pi\alpha),\sin^2(2\pi\alpha),\dots,\sin^2((Q-1)\pi\alpha)
\\+ 4\epsilon\cdot\cos(Q\kappa_-)\sin(Q\Bigr(\kappa_+ -\frac{\pi}{2}\Bigl)),
\end{gather*}
where $\epsilon\in\{\pm 1\}$ is a sign and $\vec{\nu}\mapsto \kappa_\pm$ is a transformation of the Brillouin torus.
\end{theoremA}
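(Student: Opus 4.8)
The plan is to compute $\det(\Hmatrix_{P/Q}(\kappa_+,\kappa_-)-E)$ by hand. The first step is to put the Hofstadter Hamiltonian into the Wiegmann--Zabrodin ``quantum group-adjusted'' gauge\cite{wiegmann_bethe-ansatz_1994,hatsugai_quantum_1996}: in that gauge $\Hmatrix_{P/Q}(\kappa_+,\kappa_-)$ is conjugate to a $Q\times Q$ cyclic tridiagonal matrix with vanishing diagonal whose consecutive off-diagonal entries multiply to the Jacobi parameters $\beta_m=4\sin^2(m\pi\alpha)$, $m=1,\dots,Q-1$ --- up to normalisation these are the quantum integers $[m]_q$ of $\U(\qsl)$ at $q=\ee^{\ii\pi\alpha}$, i.e.\ the matrix elements of the raising and lowering operators of the relevant representation --- while the single corner entry carries all the dependence on the Brillouin point through $\kappa_\pm$. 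Establishing this normal form, and checking that the similarity is compatible with the $\U(\qsl)$ structure, is the representation-theoretic input; from there the argument is a determinant expansion.

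I would then split $\det(\Hmatrix_{P/Q}(\kappa_+,\kappa_-)-E)$ using the spectral duality and cofactor machinery for tridiagonal matrices with corners developed in Ref.~\onlinecite{mol:tri}: it equals the characteristic polynomial of the open chain obtained by cutting the seam, corrected by the seam bond and by the two ``wrap-around'' permutations. For a zero-diagonal tridiagonal $Q\times Q$ matrix with off-diagonal products $\beta_1,\dots,\beta_{Q-1}$, the characteristic polynomial is the weighted matching polynomial of the path on $Q$ vertices,
\begin{equation*}
\sum_{i=0}^{\lfloor Q/2\rfloor}(-1)^{Q+i}\,E^{Q-2i}\,\mode_i(\beta_1,\dots,\beta_{Q-1}),
\end{equation*}
where $\mode_i$ denotes the elementary symmetric polynomial summed only over $i$-element index sets with no two consecutive integers; this follows by induction from the three-term recursion $D_k(E)=E\,D_{k-1}(E)-\beta_{k-1}D_{k-2}(E)$ for the leading principal minors. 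Substituting $\beta_m=4\sin^2(m\pi\alpha)$ and factoring out $4^i$ by homogeneity of $\mode_i$ yields the polynomial part of the statement.

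It remains to handle the corner. The decisive simplification is that the seam weight is $\beta_Q=4\sin^2(Q\pi\alpha)=4\sin^2(\pi P)=0$, so the seam bond contributes nothing to the matching polynomial --- this is exactly why only $\sin^2(m\pi\alpha)$ with $m\le Q-1$ appear. The two wrap-around permutations contribute, up to sign, the corner entry times $\prod_{m=1}^{Q-1}2\sin(m\pi\alpha)$; evaluating the product with the classical identity $\prod_{m=1}^{Q-1}2\sin(m\pi\alpha)=\pm Q$ and inserting the explicit corner entry (in this gauge it absorbs both Bloch angles into a real combination proportional to $\cos(Q\nu_1)+\cos(Q\nu_2)$), one collapses it --- through $2\cos A+2\cos B=4\cos\tfrac{A+B}{2}\cos\tfrac{A-B}{2}$ --- to $4\epsilon\cos(Q\kappa_-)\sin\!\big(Q(\kappa_+-\tfrac{\pi}{2})\big)$; the affine reparametrisation $\vec{\nu}\mapsto\kappa_\pm$ carries a $Q$-dependent quarter-period shift (which produces the inner $-\pi/2$) and $\epsilon\in\{\pm1\}$ records the parity sign of the sine-product identity. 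A shorter alternative is to invoke the Chambers relation: the $\vec{\nu}$-dependence of $\det(\Hmatrix_{P/Q}-E)$ is a priori $-2\cos(Q\nu_1)-2\cos(Q\nu_2)$, so its $\vec{\nu}$-independent part must equal the open-chain characteristic polynomial found above, as one verifies by specialising $\vec{\nu}$ so that $\cos(Q\nu_1)=\cos(Q\nu_2)=0$ (e.g.\ the base point $\vec{\nu}=(\pi/2Q,\pi/2Q)$ of Theorem~\ref{thm:A}). Combining the two parts gives the formula.

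I expect the main obstacle to be the first step together with the trigonometric bookkeeping in the corner term: producing the explicit similarity that realises the quantum group-adjusted gauge and confirming that the Jacobi parameters are precisely $4\sin^2(m\pi\alpha)$, and then getting the amplitude $4$, the inner shift $-\pi/2$, and the sign $\epsilon$ to come out correctly --- this needs the $\sin$-product identity, the relation between the seam phase and $Q\nu_1$, and the right choice of Brillouin-torus coordinates. The polynomial part, being the classical theory of (periodic) Jacobi matrices, is routine.
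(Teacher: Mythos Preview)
Your open-chain computation is essentially the paper's: the Wiegmann--Zabrodin gauge at the mid-band point yields the cornerless tridiagonal matrix $\Hmatrix'$ of Proposition~\ref{prop:poltheta}, whose Jacobi parameters are $4\sin^2(m\pi\alpha)$, and your three-term recursion for the matching polynomial is exactly the Laplace expansion carried out in Theorem~\ref{thm:formula} (with Lemma~\ref{lem:combsym} supplying the recursion for $\mode_k$).

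The gap is in your primary treatment of the corner. You posit a gauge in which the bulk Jacobi parameters stay equal to $4\sin^2(m\pi\alpha)$ \emph{independently of $\kappa_\pm$} while a corner absorbs all the Brillouin dependence, and then invoke $\beta_Q=4\sin^2(\pi P)=0$ to kill the seam-bond matching term. But the sine parameters are a feature of the mid-band point only: in the quantum-group gauge at a general point (the matrix $\Hmatrix'_c$ of Section~IV), the off-diagonals are $2\cos(k\pi\alpha+\kappa_+)$, so both the bulk products $4\cos^2(k\pi\alpha+\kappa_+)$ and the corner product $4\cos^2(Q\pi\alpha+\kappa_+)$ depend on $\kappa_+$, and the latter is generically nonzero. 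Diagonal conjugation preserves the products $a_kb_k$, so one cannot simultaneously force $\beta_m=4\sin^2(m\pi\alpha)$ and push the $\kappa_+$-dependence into a single corner; the gauge you describe is not constructed and the $\beta_Q=0$ shortcut is unjustified. The paper instead keeps the $\kappa_+$-dependent cosines, computes the corner contribution via Corollary~\ref{cor:corncont}, and evaluates the resulting product $\prod_{j=1}^Q 2\cos(j\pi\alpha+\kappa_+)$ by a root-of-unity identity (Lemma~\ref{lem:mfor}); this lemma, not $\prod 2\sin(m\pi\alpha)=\pm Q$, is what produces the factor $\sin(Q(\kappa_+-\pi/2))$ and the sign $\epsilon$.

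Your ``shorter alternative'' via the Chambers relation is correct and is in fact what the paper leans on: Eq.~\eqref{eq:chambers} is taken as known, the mid-band polynomial is the open-chain determinant just computed, and a sum-to-product identity rewrites $\cos(Q\nu_x)+\cos(Q\nu_y)$ in rotated coordinates (end of the proof of Theorem~\ref{thm:par}). For the statement of Theorem~B as written this suffices; the additional work via $\Hmatrix'_c$ and Lemma~\ref{lem:mfor} is what pins down the explicit change of variables and the value of $\epsilon(P,Q)$ recorded in Theorem~\ref{thm:par}.
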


For more details on the result above, see Theorem \ref{thm:par} and related discussion above it. 
This result should be interpreted as follows: in order to gain more insight into the mapping $E\mapsto \tilde{E}$, we can try to obtain explicit formulas for $f$ for all values of $\alpha=P/Q$ and then invert one side of Eq. \eqref{eq:pprel} (at least locally) to get an analytic expression $\tilde{E}=\tilde{E}(E)$. 
This theorem is an attempt towards this goal.
It should be noted that a formula for the Hofstadter spectrum at the mid-band point appeared previously in Ref.~\onlinecite{kreft_93}, by using a strategy related to the one employed in this paper, but without involving the notion of quantum group.

The layout of the paper is as follows. 
We first briefly review the relevant mathematical structures (generalized Clifford algebras, rotation algebra, the elementary quantum group) and physical models 
from a unified perspective based on the principal series representation of $\U(\qsl)$. 
The second section starts with a brief recall of the spectral duality of tridiagonal matrices and continues on proving Theorem \ref{thm:A} above. 
In the third section, 
we exploit the irreducible representations of the quantum group to turn $\Hmatrix$ into a bidiagonal matrix without corners and with one constant diagonal. This allows us to establish a formula for its characteristic polynomial (Theorem \ref{thm:B}). 
In the fourth section, we employ
Chebyshev polynomials in order to give a different parametrization of $\Hmatrix$ over the Brillouin torus (equivalent to the Chambers relation).
We then finish with a conclusion and a brief outlook.

\section{Preliminaries on mathematical structures and models}\label{sec:prelims}

In this section, we will introduce the Hofstadter and Toda models in a somewhat unified way by emphasizing the key mathematical notions which underlie both systems. 

Let us start by introducing the generalized Clifford algebra of order $n$ on three generators, i.e., elements $\{e_1,e_2,e_3\}$ satisfying
the Weyl braiding relations $e_ie_j=\omega e_je_i$ $(i<j)$ and \(e_i^n=\omega^n=1\) for a primitive $n$-th root of unity $\omega$. As the
classical case ($n=2, \omega=-1$) is generated by the Pauli matrices, the matrix representations of the $e_i$'s are also known as 
generalizations of the Pauli matrices. 

There are several such constructions, most notably the Gell-Mann matrices and Sylvester's
shift and clock matrices\cite{sylv:three}. The former are Hermitian and traceless (just like the Pauli matrices), while the latter are unitary
and traceless, which makes them preferable for our goals. 

\begin{definition}
 
Set \(e_1=V, e_2=VU, e_3=U\). Sylvester's construction is given as
\begin{align*}
 V &= \begin{pmatrix}
 0 & 1 & 0 & \cdots & 0\\
 0 & 0 & 1 & \cdots & 0\\
 0 & 0 & \ddots & \ddots & \vdots\\
 \vdots & \vdots & \vdots & \ddots & 1\\
 1 & 0 & 0 & \cdots & 0
 \end{pmatrix}, &
 U &= \begin{pmatrix}
 1 & 0 & 0 & \cdots & 0\\
 0 & \omega & 0 & \cdots & 0\\
 0 & 0 & \omega^2 & \cdots & 0\\
 \vdots & \vdots & \vdots & \ddots & \vdots\\
 0 & 0 & 0 & \cdots & \omega^{n-1}
 \end{pmatrix}.
\end{align*}
These are called the \emph{shift} and \emph{clock} matrices, respectively.
\end{definition}

The connection with quantum mechanics is evident from the explicit form of $V$ and $U$: the clock matrix $U$ amounts to the exponential
of position in a periodic space of $n$ ``hours'' 
(e.g., discrete lattice sites), 
and the circular shift matrix $V$ is just the translation operator in this cyclic
space, i.e., the exponential of the momentum.

Weyl's formulation of the canonical commutation relations reads \begin{equation}\label{eq:weylbraid}
UV(VU)^{-1}=UVU^{n-1}V^{n-1}=\omega\inv,
\end{equation} 
which justifies the name \emph{Weyl-Heisenberg matrices} for $V$ and $U$. 
The Hofstadter model is described by the Hamiltonian $H= V+V^*+U+U^*$.
Before going further, however, we need to introduce some additional mathematical structures. 
The relation in Eq. \eqref{eq:weylbraid} defines a universal $C^*$-algebra \(A_\omega\) studied by Rieffel
\cite{rieffel:nc}, known as noncommutative torus or \emph{rotation algebra}. It is arguable that the naming ``noncommutative torus'' should be reserved for the case \(\omega=\exp(2\pi \ii \alpha)\) and \(\alpha\) is \emph{irrational}, unlike what we assumed so far. This is because
when \(\alpha\) is \emph{rational}, \(A_\omega\) is Morita equivalent to the \emph{commutative} algebra of continuous functions on the
standard torus. By contrast, when $\alpha$ is irrational, the Morita equivalence class is determined by the orbit of the modular action of
\(\mathrm{PGL}_2(\Z)\) by linear fractional transformations on $\alpha$. Being these generated by \begin{equation}\label{eq:modtrans}
z\mapsto z+1,\qquad z\mapsto z\inv, \end{equation} in the mathematical physics literature $A_\omega$ is said to be in \emph{modular duality}
with $A_{\tilde{\omega}}$, where \(\tilde{\omega}=\exp(2\pi \ii \alpha\inv)\). Let us remark here that the operator algebras community does
not use this naming convention. 
However, other related dualities, such as Spanier-Whitehead duality in $K$-theory\cite{connes:ncg,val:shi} or the property of being each other's von Neumann algebra commutant (with coupling constant $\alpha$)\cite{fad:weyl}, are more frequently mentioned for
\(A_\omega\) and \( A_{\tilde{\omega}} \).

\Cref{eq:modtrans} provides our second contact point with the Hofstadter model, as those two transformations also appear as the
fundamental 
self-similarity relations of the Hofstadter butterfly\cite{hofstadter_energy_1976,hatsuda_hofstadters_2016,ikeda_hofstadters_2018}.
In addition, the modular duality discussed above has inspired Faddeev's notion of \emph{modular double} of a quantum group\cite{fad:moddouble}. The modular double of $\U(\qsl)$ plays a central role in the study of the relativistic Toda lattice\cite{kharchev_unitary_2002}, particularly in the argument of Ref.~\onlinecite{hatsuda_hofstadters_2016} where the polynomial relation which inspired this paper is established. To understand this, let us first recall the definition of the elementary quantum group. 

\begin{definition}
 The quantum group \(\U(\qsl)\) is the 
associative 
 algebra generated by elements \(E, F, K^{\pm 1} \) satisfying the relations
\begin{gather*} KE=q^2 EK,\qquad KF=q^{-2}FK\\ EF-FE=\frac{K-K\inv}{q-q\inv}, 
\end{gather*} 
where $q=\omega^{1/2}=\ee^{\ii\pi\alpha}$.
\end{definition}

In this paper, we will not need the coproduct which makes \(\U(\qsl)\) a bialgebra. When $\alpha$ is real, the commutation relations above are compatible with a certain involution, giving a real form for $\U(\qsl)$. 
The key
point is that the principal series representations of this quantum group factors through the rotation algebra $A_{q^2}$. Indeed, the following
proposition is readily verified. 

\begin{proposition}\label{prop:rept}
For any \(t\in \C \), there is a morphism of associative algebras $\phi\colon \U(\qsl)\to A_{q^2}$ given as follows:
\begin{equation*}
\phi(K)=tv\inv,\qquad \phi(E)=\frac{u\inv(1-v\inv)}{q-q\inv},\qquad \phi(F)=\frac{qu(t-t\inv v)}{q-q\inv}.
\end{equation*}
\end{proposition}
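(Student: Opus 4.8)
The plan is to read \cref{prop:rept} as a check on a presentation. By definition $\U(\qsl)$ is the free associative algebra on $E,F,K,K\inv$ modulo the relations $KK\inv=K\inv K=1$, $KE=q^2EK$, $KF=q^{-2}FK$, and $EF-FE=(K-K\inv)/(q-q\inv)$; therefore, to see that the assignment of the proposition extends (uniquely) to an algebra morphism $\phi$, it suffices, and is necessary, to verify that $\phi$ sends each of these relations to a valid identity in $A_{q^2}$. The relation $KK\inv=1$ already forces $\phi(K\inv)=t\inv v$, so the statement is to be understood with $t\in\C\setminus\{0\}$ (for $t=0$ the map $K\mapsto 0$ cannot be part of a unital morphism). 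The only other input is the single commutation relation of the rotation algebra, which by \cref{eq:weylbraid} reads $uv=q^{-2}vu$; I would first record its consequences $v\inv u\inv=q^2u\inv v\inv$, $v\inv u=q^{-2}uv\inv$, $uvu\inv=q^{-2}v$, and $u\inv v\inv u=q^{-2}v\inv$, since these (together with the commuting of powers of $v$ among themselves) are the only moves needed to normal-order any product of the $\phi$-images.

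Next I would dispatch the two monomial relations, which involve no cancellation. Substituting $\phi(K)=tv\inv$ and $\phi(E)=u\inv(1-v\inv)/(q-q\inv)$ and moving the $v\inv$ coming from $\phi(K)$ to the right past $u\inv$ via $v\inv u\inv=q^2u\inv v\inv$ gives $\phi(K)\phi(E)=q^2\phi(E)\phi(K)$ directly; the identity $\phi(K)\phi(F)=q^{-2}\phi(F)\phi(K)$ comes out the same way, using $v\inv u=q^{-2}uv\inv$ once, after which the remaining $v$-monomials commute freely.

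The substantive step is the commutator relation. I would normal-order $\phi(E)\phi(F)$ by conjugating out the middle letter: $u\inv(1-v\inv)u=1-u\inv v\inv u=1-q^{-2}v\inv$, so that $\phi(E)\phi(F)=\frac{q}{(q-q\inv)^2}(1-q^{-2}v\inv)(t-t\inv v)$, a Laurent polynomial of degree at most one in $v$. Symmetrically $u(t-t\inv v)u\inv=t-q^{-2}t\inv v$ yields $\phi(F)\phi(E)=\frac{q}{(q-q\inv)^2}(t-q^{-2}t\inv v)(1-v\inv)$. Expanding both and subtracting, the $v^0$ contributions $t+q^{-2}t\inv$ cancel and only the $v^{\pm1}$ terms survive, with combined coefficient $(1-q^{-2})(tv\inv-t\inv v)$; since $q(1-q^{-2})=q-q\inv$, the scalar prefactor collapses to $1/(q-q\inv)$, leaving $\phi(E)\phi(F)-\phi(F)\phi(E)=\frac{1}{q-q\inv}(tv\inv-t\inv v)=\frac{\phi(K)-\phi(K)\inv}{q-q\inv}$, as required.

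There is no genuine obstacle here—the proposition is \emph{readily verified}, as the text states—so the only value in the writeup is organization: the one place where sign and $q$-power errors are easy is in tracking the factors of $q^{-2}$ picked up when commuting $u^{\pm1}$ through $v^{\pm1}$, together with the final simplification $q(1-q^{-2})/(q-q\inv)^2=1/(q-q\inv)$, which is why I would list the four forms of the Weyl relation explicitly before computing. I would also add one sentence stressing what the statement is really doing: it exhibits $\phi$ as a factorization of the principal-series action of $\U(\qsl)$ through $A_{q^2}$ (in general far from injective), which is precisely the structural fact that lets the Hofstadter Hamiltonian $H=V+V^*+U+U^*$ be recovered from quantum-group data in the following sections; no injectivity or surjectivity of $\phi$ is claimed or needed.
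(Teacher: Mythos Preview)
Your verification is correct and is exactly the direct check the paper leaves to the reader under ``readily verified''; the normal-ordering via $uv=q^{-2}vu$ and the collapse $q(1-q^{-2})/(q-q^{-1})^2=1/(q-q^{-1})$ are carried out without error. Your remark that the assignment requires $t\neq 0$ (since $\phi(K)$ must be invertible) is a worthwhile clarification that the paper omits.
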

Above, we use lowercase letters to distinguish $A_\omega$'s generators from Sylvester's matrices.
By virtue of modular duality, setting $\alpha=P/Q$ and $\tilde{q}=\tilde{\omega}^{1/2}$, we have the following \emph{commuting}
representations of $A_{q^2}$ and $A_{\tilde{q}^2}$ on $L^2(\R)$:
\begin{gather}\label{eq:reptorus} \rho_0(v)=T_P,\quad \rho_0(u)=S_{-\ii Q},
\qquad \tilde{\rho}_0(\tilde{v})=T_Q,\quad \tilde{\rho}_0(\tilde{u})=S_{-\ii P},\\\notag T_s f(x)=f(x+s),\qquad S_{-is}f(x)=\exp(\frac{2\pi \ii x}{s})f(x).
\end{gather}
Following Ref.~\onlinecite{kharchev_unitary_2002}, we operate a Wick rotation on \(\rho_0, \tilde{\rho}_0 \), by changing $T_s$ into \(T_{\ii s}\) and \(S_{-\ii s} \) into \(S_{s}\). Note there is a fundamental difference between the representations of the quantum group
and those of the rotation algebra: the former are obtained from \emph{unbounded} representations of the latter.
The newly obtained representations will be denoted \(\rho, \tilde{\rho} \). 
Since they still commute, we can combine them with $\phi$ to form a representation of the modular double quantum group \begin{equation*} \U
(\qsl)\otimes \mathcal{U}_{\tilde{q}}(\qsl). \end{equation*}

The general definition of this object involves the Langlands dual Lie algebra\cite{kharchev_unitary_2002}. 
However, we do not need to worry about such details here as $\qsl$ is self-dual. 
We can anticipate that the Toda Hamiltonian is 
$H\propto\rho\phi(E+F)$.

In condensed matter physics, the $N$-particle Toda lattice model\cite{toda_vibration_1967} is a seminal toy model describing massive particles on a one-dimensional lattice subject to a nearest-neighbor and exponentially decaying two-body interaction.
In the periodic case (particles on a circle) the Hamiltonian reads 
\begin{equation}\label{eq:nrToda}
H=\sum_{n=1}^N \frac12 p_n^2 + (\ee^{-(q_n-q_{n-1})}+\ee^{-(q_{n+1}-q_{n})})
\end{equation}
with the periodic boundary condition $q_0\equiv q_N$, $q_{N+1}\equiv q_1$, where $p_n$ are the canonical momenta and $q_n$ the canonical coordinates satisfying the canonical Poisson bracket relations $\{q_n,p_m\}=\delta_{nm}$.
This Hamiltonian is nonrelativistic, being invariant up to Galilean transformations but not transformations of the Poincaré group, and is an early example of the family of integrable one-dimensional many-body problems, also called Calogero-Moser systems.
A general approach to obtain a relativistic generalization of the Toda lattice model (as well as other Calogero-Moser systems) is to deform the nonrelativistic Hamiltonian to make it compatible with the Poincaré group (see Ref.~\onlinecite{ruijsenaars_relativistic_1990} for more details).
This involves maintaining the nearest-neighbor exponential form of the interaction term, exponentiating the canonical momenta, and combining the resulting terms into an Hamiltonian which will coincide with the time-translation generator of the Poincaré group.
Moreover, we want our Hamiltonian to be quantum, that is, to elevate the classical variables to quantum operators satisfying the canonical commutation relation $[q_n,p_m]=\ii\hbar\delta_{nm}$. 
Following this general recipe and using the same approach as Refs.~\onlinecite{kharchev_unitary_2002,hatsuda_exact_2016}, we define our $N$-particle relativistic Toda lattice model as
\begin{equation*}
H_{\text{Toda}}(N)=
\sum_{n=1}^N 
\left(
1+
\lambda^2\ee^{q_{n}-q_{n+1}}
\right)\ee^{\lambda p_n},
\end{equation*}
with 
$\lambda\in\mathbb{R}$, which reduces to \cref{eq:nrToda} at the second order in $\lambda$.
In this sense, the relativistic (and quantum) Toda lattice model is a one-parameter deformation of the classical Toda lattice model.
For $N=2$ particles in the center of mass frame $p_1=-p_2$ and periodic boundary conditions $q_3=q_1$, setting $p= p_1$, $x=(q_1-q_2)$, and taking $\lambda=1$, the Hamiltonian becomes
\begin{equation*}
H_{\text{Toda}}=
\left(
1+
\ee^{x}
\right)\ee^p
+
\left(
1+
\ee^{-x}
\right)\ee^{-p}
,
\end{equation*}
with $[x,p]=\ii\hbar$.
To obtain a notation more suited for our scope, we rescale the coordinate $x\to2\pi\alpha x$ which yields 
\begin{equation}\label{eq:maybetoda}
H_{\text{Toda}}=
\left(
1+
\ee^{2\pi\alpha x}
\right)\ee^p 
+
\left(
1+
\ee^{-2\pi\alpha x}
\right)\ee^{-p} 
,
\end{equation}
and also gives $[x,p]=\ii\hbar/(2\pi\alpha)$.
The relationship between quantum mechanics and quantum geometry is thus revealed if one sets $\hbar=2\pi\alpha$, which leads us back to $[x,p]=\ii$ and $q=\ee^{\ii\pi\alpha}$.
This Hamiltonian hence is expressed in terms of $\rho(v)$ and $\rho(vu)$, namely the first two generators of the generalized Clifford algebra introduced above (with the difference that these are infinite-dimensional operators).
Finally, to uniform our notations to that of Ref.~\onlinecite{hatsuda_hofstadters_2016}, we introduce the canonical transformation $2\pi\alpha x'=2\pi\alpha x+p$ (preserving the commutation relation $[x',p]=[x,p]=\ii$).
By doing so and neglecting the prime, this yields the usual form of the \emph{relativistic Toda Hamiltonian} 
for $N=2$ particles: 
\begin{equation}\label{eq:todaham}
 H_{\text{Toda}}=
\ee^{p}+\ee^{2\pi\alpha x}+\ee^{-p}+\ee^{-2\pi\alpha x}, 
\end{equation}
with standard commutation relation $[x,p]=\ii$. 

\begin{remark}
The Hamiltonians in \cref{eq:maybetoda,eq:todaham} are equivalent up to a change of the canonical coordinates.
It is interesting also to
 note that one Hamiltonian can be 
 obtained from the other under the relabeling of Clifford
 generators $e_2\leftrightarrow e_3$
 which we used above. 
\end{remark}

With a slight abuse of language, even when $P$ and $Q$ are positive integer 
coprimes
such that $P/Q$ is a fraction in reduced form, we define the
modular dual Hamiltonian $\tilde{H}_{\text{Toda}}$ by swapping $P$ and $Q$ in Eq. \eqref{eq:todaham}. Let us note here that
\(H_{\text{Toda}}\) is obtained from \((q-q\inv)\rho\phi(E+F)\) by a suitable relabeling of Clifford generators:
\begin{align*}
 (q-q\inv)\rho\phi(E+F) &= u\inv - (vu)\inv + u - q^2 uv && (t=q\inv)\\
 & = u\inv - (vu)\inv + u - vu && (uv=q^{-2}vu)\\
 & = u\inv + v\inv + u + v && (e_2\leftrightarrow -e_1).
\end{align*}
This motivates us to find a similar expression for the Hofstadter model. 

The \emph{Hofstadter Hamiltonian} is related to the $N=2$ relativistic Toda Hamiltonian defined in Eq. \eqref{eq:todaham} by Wick rotations $p\to\ii p$, $x\to\ii x$ of both momentum and position operators
\begin{equation}\label{eq:hofham}
 H_{\text{Hof}}=
\ee^{\ii p}+\ee^{2\pi \ii\alpha x}+\ee^{-\ii p}+\ee^{-2\pi \ii\alpha x}. 
\end{equation}
Of course, this is not the conventional way to introduce $H_{\text{Hof}}$, so let us briefly outline a more orthodox derivation. 

The Hofstadter Hamiltonian describes the Hamiltonian of a charged particle in a magnetic field and in a periodic potential $V(x,y)$ describing a two-dimensional lattice.
There are two opposite ways to derive this Hamiltonian:
In the limit of strong fields, one 
takes the Hamiltonian of a charged particle in a field (describing the quantum Hall effect), which is diagonalized in terms of Landau levels, 
and then treats the periodic potential $V(x,y)$ as a perturbation on each Landau level (as done by Thouless-Kohmoto-Nightingale-den Nijs\cite{thouless_quantized_1982}).
On the other hand, in the limit of weak fields, one starts with the so-called ``tight-binding'' Hamiltonian describing the lowest energy sector of a charged particle in a periodic potential, and treats the magnetic field as a perturbation (as done by Harper\cite{harper_single_1955} and Hofstadter\cite{hofstadter_energy_1976}).
These two approaches lead to the same Hamiltonian, with one remarkable exception, as already noted in Ref.~\onlinecite{thouless_quantized_1982}:
the value of the magnetic flux is replaced as $\Phi\to1/\Phi$.
As noted in Ref.~\onlinecite{hatsuda_hofstadters_2016}, this formal operation coincides precisely with taking the modular dual of the Hamiltonian.

In the case of weak fields, following Ref.~\onlinecite{kohmoto_zero_1989,hatsuda_hofstadters_2016} one writes the tight-binding Hamiltonian describing a charged particle in a two-dimensional periodic potential in the usual form $H= T_x+T_x^*+T_y+T_y^*$ where $T_x=\sum_{nm} c^*_{n+1,m}c_{n,m}$ and $T_y=\sum_{nm} c^*_{n,m+1}c_{n,m}$ are the translation operators on the lattice along the two axes $x$ and $y$, with $c_{n,m}$ the fermion operators on the lattice at on the position $(x,y)=(n,m)$ with $n,m\in\Z$.
We now recall that the translation operator corresponding to a displacement $\vec{r}$ is also $T(\vec{r})=\ee^{-\ii \vec{r}\cdot \vec{p}}$, which gives $T_x=\ee^{-\ii p_x}$, $T_y=\ee^{-\ii p_y}$ (we are translating by one lattice site).
This gives
\begin{align}
H&= T_x+T_x^*+T_y+T_y^*=
\ee^{\ii p_x}+\ee^{-\ii p_x}+\ee^{\ii p_y}+\ee^{-\ii p_y}\\
&=\sum_{nm} c^*_{n+1,m}c_{n,m} + c^*_{n,m+1}c_{n,m} +\text{h.c.},
\end{align}
which also immediately gives the energy dispersion $E(k_x,k_y)=2\cos{k_x}+2\cos{k_y}$, i.e., the energy parameterized as a function of the momentum eigenvalues $k_x,k_y$.
The Hamiltonian is not gauge invariant, but it becomes gauge-invariant by performing the Peierls substitution $c^*_{n,m}c_{n',m'}\to\ee^{\ii \phi}c^*_{n,m}c_{n',m'}$ where the gauge-invariant phase $\phi$ is defined as $\phi=\int \vec{A} \cdot \dd \vec{r}$ where $\vec{r}=(x,y)$, $\vec{A}=(A_x,A_y)$ is the vector potential, and where the integral is evaluated on the shortest path connecting the lattice sites $(n',m')$ and $(n,m)$\cite{kohmoto_zero_1989}.
This substitution accounts for the presence of the magnetic field (as a weak perturbation), and results in promoting the translation operators to magnetic translation operators $T_x=\sum_{nm} c^*_{n+1,m}c_{n,m}\ee^{\ii \phi_x}$ and $T_y=\sum_{nm} c^*_{n,m+1}c_{nm} \ee^{\ii \phi_y}$ with the phases $\phi_x=\int_n^{n+1} A_x(x,y)\dd x$, $\phi_y=\int_m^{m+1} A_y(x,y)\dd y$.
Now, by taking a constant field perpendicular to the lattice $B_z=2\pi\alpha$ and choosing the Landau gauge $\vec{A}=(0,2\pi\alpha x)$ so that $\phi_x=0$ and $\phi_y=2\pi\alpha x$, one obtains the Hamiltonian
\begin{align}
H&= T_x+T_x^*+T_y+T_y^*=
\ee^{-\ii p_x}+\ee^{\ii p_x}+\ee^{-\ii (p_y- 2\pi\alpha x)}+\ee^{\ii (p_y- 2\pi\alpha x)}
\nonumber\\&=
\sum_{nm} c^*_{n+1,m}c_{n,m} + c^*_{n,m+1}c_{n,m} \ee^{-2\pi \ii n\alpha} +\text{h.c.}
\end{align}
The same result can be obtained directly by performing the substitution $\vec{k}\mapsto\vec{p}-\vec{A}$ directly in $E(k_x,k_y)$, promoting the energy dispersion to the Hamiltonian operator $H=2\cos{p_x}+2\cos{(p_y-2\pi\alpha x)}$\cite{hofstadter_energy_1976,kohmoto_zero_1989}.
In the limit of strong fields instead, by treating the periodic potential $V(x,y)$ as a perturbation on each Landau level, one obtains the same Hamiltonian but with the remarkable exception that one now gets $\alpha\to1/\alpha$, as already mentioned (see Ref.~\onlinecite{thouless_quantized_1982}).
Since $[p_y,x]=[p_y,p_x]=0$, one has that $[p_y,H]=0$:
Hence it is possible to simultaneously diagonalize the Hamiltonian and the momentum $p_y$.
In particular, for eigenvalues of the momentum $p_y\to0$, the Hamiltonian reduces to the Hamiltonian $H_{\text{Hof}}$ defined in \cref{eq:hofham}.

One can introduce an anisotropy parameter $R$ in the Hamiltonian, by taking $H= T_x+T_x^*+R(T_y+T_y^*)$.
The parameter $R$ can be understood as a measure of the degree of anisotropy of the system, 
with $R=1$ corresponding to $x$ and $y$ directions being perfectly symmetric.
To simplify notations, we will assume $R=1$ hereafter, except when explicitly noted.

Since $[p_y,H]=0$, we can assume wavefunctions as a product of plane waves in the $y$-direction and the usual Bloch functions in the $x$ direction, 
leading us to the following \emph{Harper equation}:
\begin{gather}
 \psi(n,m)= \ee^{\ii(\nu_x n+\nu_y m)} g_n,\notag\\
 \label{eq:mathieu}
 \ee^{\ii\nu_x}g_{n+1}+\ee^{-\ii\nu_x}g_{n-1}+2 R \cos(2\pi n\alpha-\nu_y)g_n=E g_n,
\end{gather}
where we used again the fact that the exponential of the momentum corresponds to the translation operator. Setting $\nu_x=0$, the left-hand side above
is an operator of $\ell^2(\Z)$ known as the \emph{almost Mathieu operator}, featured in Barry Simon's fifteen problems about Schrödinger
operators ``for the twenty-first century''\cite{barry:century}. This operator comes with a parameter $R>0$ which multiplies the cosine function in Eq. \eqref{eq:mathieu}.

Let us point out that the measure-theoretic properties
of the spectrum of the almost Mathieu operator crucially depend on $R$. 
When $\alpha$ is irrational, the spectrum is a Cantor set of Lebesgue measure $\lvert 4-4R\rvert$.
Hence, the case $R=1$ has spectrum of measure zero, and it can be shown to be 
surely purely singular continuous spectrum. 
The case $R>1$ is notable for having
almost surely pure point spectrum and exhibiting \emph{Anderson localization}
(see Refs.~\onlinecite{aubry_analyticity_1980,avila_solving_2006,avila_the-absolutely_2008}).

Let us assume $\alpha$ is rational now. Imposing $Q$-periodicity, and treating $g_n$ as
the $n$-th coordinate of $g$ in $\C^Q$, the equation above becomes the eigenvalue problem for the matrix below, which we record as a separate definition as it will play a major role in the 
rest of the paper.

\begin{definition}\label{def:hoffindim}
 The \emph{finite-dimensional Hofstadter Hamiltonian}:
\begin{align}
\Hmatrix & = \ee^{\ii\nu_x}V+\ee^{-\ii\nu_x}V^*+\ee^{\ii\nu_y}U+\ee^{-\ii\nu_x}U^*\notag\\
\label{eq:Hofmatrix}
 \Hmatrix & = \begin{pmatrix}
 \begin{smallmatrix}
 2\cos(2\pi\cdot 0 \cdot\alpha-\nu_y)
 \end{smallmatrix} & z & & & z\inv\\
 z\inv & \ddots & z & & \\
 & \ddots & \ddots & \ddots & \\
 & & z\inv & \ddots & z\\
 z & & & z\inv & 
 \begin{smallmatrix}
 2\cos(2\pi(Q-1)\alpha-\nu_y)
 \end{smallmatrix}
 \end{pmatrix},
\end{align}
where we have set the shorthand $z=\ee^{\ii\nu_x}$ (recall that $q^2=\omega$).
\end{definition}

It will be convenient to set $T_x=\ee^{\ii\nu_x}V, T_y=\ee^{\ii\nu_y}U$. 
The group spanned by $T_x,T_y$ is referred to as the group of \emph{magnetic translations}\cite{zak_magnetic_1964}.

Our goal is now to find an expression for $\Hmatrix$ in terms of quantum group elements $E,F$, in analogy with the Toda model,
in the spirit of Refs.~\onlinecite{wiegmann_bethe-ansatz_1994,hatsugai_quantum_1996}.
Toward this goal, we define a finite-dimensional representation $\varphi$ of $\U(\qsl)$ in terms of shift and clock operators. 
The notation suggests that
this should be a variant of the representation $\phi$ introduced above in the Toda model setting. Choosing $t=q^{-1}$ in Proposition \ref{prop:rept} and relabeling the Clifford
generators as above, we define $\varphi$ as follows:

\begin{equation}\label{eq:screp}
\varphi(K)=- q\inv UV\inv ,\qquad \varphi(E)=\frac{V\inv+U\inv}{q-q\inv},\qquad \varphi(F)=\frac{V+U}{q-q\inv}.
\end{equation}
We can now write the expression $\Hmatrix=(q-q\inv)\varphi(E+F)$ which we sought.

\section{Characteristic polynomial and spectral duality}

As already mentioned, 
Hatsuda, Katsura, and Tachikawa\cite{hatsuda_hofstadters_2016} found a relationship between 
the characteristic polynomial of the Hofstadter Hamiltonian $H$, the self-similarity relation defining the fractal properties of the Hofstadter butterfly,
and the spectral transformation between the Toda Hamiltonian $H_\text{Toda}$ and its modular dual.

Let us denote the polynomial induced by the modular duality between the Toda Hamiltonian $H_\text{Toda}$ and its modular dual defined in Ref.~\onlinecite{hatsuda_hofstadters_2016}
by $f=f_{P/Q}(E)$, where $E$ is the energy. 
We are going to prove that $f$ is the characteristic polynomial
$f(E)=\det (\Hmatrix - E)$ of the finite-dimensional Hofstadter Hamiltonian (note that $\Hmatrix$ depends on $P/Q$, although we suppressed this
dependence from the notation).

After that, we will establish a general formula for $f$ in terms of 
a set of polynomials closely related to the
elementary symmetric polynomials. 
Since the fractal properties of the Hofstadter butterfly are encoded by the self-similarity relations
\begin{equation}\label{eq:fundsym}
 (\alpha,E)\mapsto (\alpha+1, E),\qquad (\alpha,E)\mapsto (\alpha\inv, \tilde{E}),
\end{equation}
we hope that our formula will 
lead to a better understanding of the function $E\mapsto \tilde{E}$,
which is unknown so far, and whose physical significance is still unclear, to the best of our knowledge.

Let us briefly introduce some elements of spectral duality for tridiagonal matrices (see Ref.~\onlinecite{mol:tri} for more details).
The Harper equation can be written in matrix form:
\begin{equation}\label{eq:harpiter}
 \begin{pmatrix}
 g_{k+1}\\
 g_{k} 
 \end{pmatrix} =A_1
 \begin{pmatrix}
 E-\theta_k & -1\\
 1 & 0 
 \end{pmatrix} z\inv A_1\inv \begin{pmatrix}
 g_{k}\\
 g_{k-1} 
 \end{pmatrix},
\end{equation}
where we have set $\theta_k=2\cos (2\pi (k-1)\alpha-\nu_y)$ and $A_1=\mathrm{diag}(z\inv,1)$.
Denoting by $T_k(E)$ the matrix in Eq. \eqref{eq:harpiter} and iterating we obtain:
\begin{equation*}
 z^Q 
 \begin{pmatrix}
 g_{1}\\
 g_{Q} 
 \end{pmatrix}
 = T_Q(E)\cdots T_1(E)
 \begin{pmatrix}
 g_{1}\\
 g_{Q} 
 \end{pmatrix}.
\end{equation*}
Let us define the \emph{transfer matrix} $T=T(E)$ as the product of the $T_k$'s appearing above. We see
that $z^Q$ is an eigenvalue of $T(E)$ if and only if $E$ is an eigenvalue of $\Hmatrix=\Hmatrix(z)$. This
exchange of roles between the parameter $z$ and the energy $E$ is the basis of spectral duality. 
\begin{theorem}[Ref.~\onlinecite{mol:tri}]\label{thm:spd}
 We have the equality $f(E,z)=(-1)^{Q-1}z^{-Q}\det(T(E)-z^Q)$.
\end{theorem}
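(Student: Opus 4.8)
The plan is a direct determinant computation that separates the two corner entries of $\Hmatrix$ from the rest and then recognizes what remains as a $2\times 2$ trace. Expanding $\det(\Hmatrix(z)-E)$ by the Leibniz formula and inspecting which permutations can give a nonzero term, one sees that the only contributions of total $z$-degree $\pm Q$ come from the two orientations of the Hamiltonian cycle of the ``tridiagonal with corners'' sparsity pattern: the forward one contributes $(-1)^{Q-1}z^{Q}$ and the backward one $(-1)^{Q-1}z^{-Q}$ (neither uses a diagonal entry, hence neither carries any $E$), while every other permutation has total $z$-degree $0$. Hence $\det(\Hmatrix(z)-E)=(-1)^{Q-1}(z^{Q}+z^{-Q})+p(E)$ for some $p\in\C[E]$ independent of $z$. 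On the other side, recalling that $T_{k}(E)$ is the $2\times 2$ matrix $\left(\begin{smallmatrix}E-\theta_{k} & -1\\ 1 & 0\end{smallmatrix}\right)$ of \cref{eq:harpiter} (up to conjugation by $A_1=\mathrm{diag}(z\inv,1)$, which telescopes in the product and affects neither its trace, determinant, nor its diagonal entries), we have $\det T_{k}(E)=1$, hence $\det T(E)=1$ and $(-1)^{Q-1}z^{-Q}\det(T(E)-z^{Q})=(-1)^{Q-1}z^{-Q}\bigl(z^{2Q}-z^{Q}\operatorname{tr}T(E)+1\bigr)=(-1)^{Q-1}(z^{Q}+z^{-Q})+(-1)^{Q}\operatorname{tr}T(E)$. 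The theorem is thus equivalent to the $z$-free identity $p(E)=(-1)^{Q}\operatorname{tr}T(E)$; as a consistency check, both sides are polynomials of degree $Q$ in $E$ with leading coefficient $(-1)^{Q}$.

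First I would compute $p$. Let $J=J(E,z)$ be the tridiagonal truncation of $\Hmatrix(z)-E$ obtained by deleting the two corner entries; then $\Hmatrix(z)-E$ is a rank-two perturbation of $J$ supported on the positions $(1,Q)$ (entry $z\inv$) and $(Q,1)$ (entry $z$), so the matrix determinant lemma writes $\det(\Hmatrix(z)-E)$ as $\det J$ times the determinant of an explicit $2\times 2$ matrix built from the four corner entries of $J\inv$. Two of the corner cofactors of $J$ are immediate: deleting row $Q$ and column $1$ (respectively row $1$ and column $Q$) from the tridiagonal $J$ leaves a triangular matrix, so those cofactors equal $(-1)^{Q-1}z^{Q-1}$ and $(-1)^{Q-1}z^{1-Q}$; these reproduce the term $(-1)^{Q-1}(z^{Q}+z^{-Q})$ and have product $1$. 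For the remaining part of the $2\times 2$ determinant I would invoke the Desnanot--Jacobi (Dodgson condensation) identity for $J$, which expresses the relevant combination of the other two corner cofactors as $\det(J)$ times the central $(Q-2)\times(Q-2)$ principal minor $D'_{Q-2}$ of $J$ (delete the first and last rows and columns); after dividing by $\det J$ and using the value $1$ just obtained, this part contributes $-D'_{Q-2}$. Collecting terms, $\det(\Hmatrix(z)-E)=(-1)^{Q-1}(z^{Q}+z^{-Q})+(D_{Q}-D'_{Q-2})$ with $D_{Q}=\det J$; since the off-diagonal entries of a tridiagonal matrix enter its determinant only through the product $z\cdot z\inv$, the combination $D_{Q}-D'_{Q-2}$ is $z$-free, in agreement with the first paragraph, so that $p(E)=D_{Q}-D'_{Q-2}$. (This manipulation assumes $\det J\neq 0$, hence holds on a dense set and, by polynomiality, everywhere.)

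It remains to match $D_{Q}-D'_{Q-2}$ with $(-1)^{Q}\operatorname{tr}T(E)$. Leading and central principal minors of a tridiagonal matrix satisfy the three-term recurrence $D_{n}=(\theta_{n}-E)D_{n-1}-D_{n-2}$ of \cref{eq:harpiter}, so $D_{Q}$ is the $(1,1)$ entry of the ordered product $M'_{Q}\cdots M'_{1}$ with $M'_{k}=\left(\begin{smallmatrix}\theta_{k}-E & -1\\ 1 & 0\end{smallmatrix}\right)$, and $D'_{Q-2}$ the $(1,1)$ entry of $M'_{Q-1}\cdots M'_{2}$. Since $M'_{k}=-P\,T_{k}(E)\,P$ with $P=\mathrm{diag}(-1,1)$, and conjugation by a diagonal matrix fixes diagonal entries, we get $D_{Q}=(-1)^{Q}(T(E))_{11}$ and $D'_{Q-2}=(-1)^{Q}(T_{Q-1}(E)\cdots T_{2}(E))_{11}$. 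The elementary identity $(T_{Q}(E)\,B\,T_{1}(E))_{22}=-B_{11}$, valid for every $2\times 2$ matrix $B$ because the second row of $T_{Q}(E)$ is $(1,0)$ and the second column of $T_{1}(E)$ is $(-1,0)^{\top}$, applied with $B=T_{Q-1}(E)\cdots T_{2}(E)$ gives $(T_{Q-1}(E)\cdots T_{2}(E))_{11}=-(T(E))_{22}$. Hence $D_{Q}-D'_{Q-2}=(-1)^{Q}\bigl((T(E))_{11}+(T(E))_{22}\bigr)=(-1)^{Q}\operatorname{tr}T(E)$, which is exactly the identity needed, completing the proof.

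The step I expect to be the main obstacle is the bookkeeping of signs and powers of $z$: fixing the constant in front of $z^{\pm Q}$ to $(-1)^{Q-1}$ via the signature of a $Q$-cycle, evaluating the triangular corner minors of $J$ with the correct sign, applying Desnanot--Jacobi to exactly the pair of doubly-bordered submatrices that yields $D'_{Q-2}$, and keeping track of the conjugations ($P$ and $A_1$) that relate the recurrence matrices to the transfer matrices of \cref{eq:harpiter}. A slicker but less rigorous alternative---both sides are degree-$Q$ polynomials in $E$ with the same leading coefficient and the same root set $\operatorname{spec}\Hmatrix(z)$, since $z^{Q}\in\operatorname{spec}T(E)$ if and only if $E\in\operatorname{spec}\Hmatrix(z)$---would still require control of eigenvalue multiplicities, which the determinant computation above avoids; I would therefore keep the determinant argument and present the spectral picture only as a remark.
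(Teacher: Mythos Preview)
Your proof is correct and considerably more thorough than the paper's. The paper's argument is essentially the ``slicker but less rigorous alternative'' you mention in your last paragraph: it observes that $f(E,z)=0$ if and only if $\det(T(E)-z^Q)=0$ (this is exactly the spectral duality coming from the transfer-matrix iteration), computes $\det(T(E)-z^Q)=z^{2Q}-\operatorname{tr}T(E)\,z^Q+1$, and then simply says the two sides can be ``matched as polynomials in $E$'' after multiplying by $(-1)^{Q-1}z^{-Q}$. The multiplicity issue you flag is not addressed explicitly; presumably one appeals to generic simplicity of the spectrum of $\Hmatrix(z)$ and continuity in $z$, but this is left implicit. Your route---extracting the $z^{\pm Q}$ terms via the two Hamiltonian-cycle permutations, identifying the $z$-free remainder as $D_Q-D'_{Q-2}$ through the matrix determinant lemma and Desnanot--Jacobi, and then reading that difference off as $(-1)^Q\operatorname{tr}T(E)$ via the tridiagonal three-term recurrence and the $(T_Q B T_1)_{22}=-B_{11}$ trick---is a direct, self-contained determinant computation that never invokes the spectral correspondence and hence sidesteps the multiplicity question entirely. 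The cost is length and the sign/conjugation bookkeeping you anticipate; the benefit is that nothing is left to genericity. One small simplification: the identity $p(E)=D_Q-D'_{Q-2}$ already falls out of your first-paragraph cycle analysis (the only corner-using permutations besides the two Hamiltonian cycles are the transposition $\{1,Q\}$ paired with a tridiagonal permutation of $\{2,\dots,Q-1\}$, contributing $-D'_{Q-2}$), so Desnanot--Jacobi is not strictly needed.
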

\begin{proof}
 Recall $f(E,z)=\det(\Hmatrix(z)-E)$. By spectral duality this last quantity is zero if and only if
 $\det(T(E)-z^Q)$ is zero. This is the determinant of a $2$-by-$2$ matrix, hence it is $z^{2Q}
 -\trace(T(E))z^Q+\det(T)$. Notice $\det(T)=1$ by multiplicativity. We can match $f$ and
 $\det(T(E)-z^Q)$ as polynomials in $E$ multiplying by $(-1)^{Q-1}z^{-Q}$. 
\end{proof}

Since $\det(T(E)-z^Q)=z^{2Q}-\trace(T(E))z^Q+1$, the previous theorem implies
\begin{equation}\label{eq:dispx}
 f(E,\nu_x)= (-1)^{Q}\trace (T(E))+2(-1)^{Q-1}\cos(Q\nu_x).
\end{equation}

When $\alpha=P/Q$ is a rational number, the energy spectrum of the Hofstadter
model splits into $Q$ energy bands. The eigenvalues of $\Hmatrix$ give us a point in each band, and the
rest of the band is obtained by parametrizing over a torus, 
which coincide with the \emph{Brillouin zone}, whose
coordinates are given by the vector $\vec{\nu}=(\nu_x,\nu_y)$. The way $f$ depends on $\vec{\nu}$ is
well-understood\cite{chambers_linear-network_1965,kohmoto_zero_1989}: it is a simple translation as  shown by the \emph{Chambers relation},
\begin{equation}\label{eq:chambers}
 f(E,\nu_x,\nu_y)=f\Bigr( E,\frac{\pi}{2Q},\frac{\pi}{2Q}\Bigl)+2(-1)^{Q-1}(\cos(Q\nu_x)+\cos(Q\nu_y)).
\end{equation}

Note that Eq. \eqref{eq:dispx} is in fact a proof of the Chambers relation for the $x$-coordinate. We will
see how to obtain the $y$-dependence later after using 
the representation theory of the quantum group. 
We will refer to the points
$\vec{\nu}=(0,0)$, 
$\vec{\nu}=(\textstyle\frac{\pi}{2Q},\frac{\pi}{2Q})$, and
$\vec{\nu}=(\textstyle\frac{\pi}{Q},\frac{\pi}{Q})$ 
as the center, mid-band, and corner points, and restrict ourselves to the reduced Brillouin zone $\nu_x,\nu_y\in[0,2\pi/Q]$, except when explicitly noted.
This can be done without loss of generality, since the Hamiltonian is periodic in $\nu_x,\nu_y$ with a period $2\pi/Q$ up to unitary transformations\cite{marra_fractional_2015}.

It is well-known that the characteristic polynomial in $f(E,\nu_x,\nu_y)$ (of order $Q$) contains only terms with powers $E^{Q-2i}$.
This is because odd powers in $E$ necessarily contains terms in $\cos{(2\pi n\alpha-\nu_y)}$, which cancel each other when summed over $n=0,Q-1$ at the  mid-band point $\nu_y=2\pi/Q$.
Consequently, $f(E,\nu_x,\nu_y)$ is an even function of $E$ and the spectrum is symmetric under $E\mapsto-E$ if $Q$ is even, while $f(E,\nu_x,\nu_y)$ is an odd function and the spectrum at the mid-band point contains $E=0$ if $Q$ is odd, since $f( 0,{\pi}/{2Q},{\pi}/{2Q})=0$.
It is also well-known that $f( 0,{\pi}/{2Q},{\pi}/{2Q})=4(-1)^{Q/2}$ if $Q$ is even, which mandates that $f(0,0,0)=0$ if $Q/2$ is even and $f(0,\pi/Q,\pi/Q)=0$ if $Q/2$ is odd.
Consequently, the spectrum contains $E=0$ at the center point if $Q$ is doubly even, at the corner point if $Q$ is singly even, and at the mid-band point if $Q$ is odd, respectively.
In particular, the zeros $E=0$ in the spectra when $Q$ is even are doubly degenerate and form Dirac cones with a linear dependence on the momentum\cite{wen_winding_1989}.

Let us consider the Hamiltonian $H_{\text{Toda}}$ and its modular dual. 
Since these commute, the authors in Ref.~\onlinecite{hatsuda_hofstadters_2016} argue that, by simultaneous diagonalization, we arrive at a pair of difference equations which can be iterated, revealing a tracial relationship of the form
\begin{equation}\label{eq:trrel}
 \tr \prod_{j=0}^{Q-1}
 \begin{pmatrix}
 T_{j} & -1\\
 1 & 0 
 \end{pmatrix}
 =\tr \prod_{j=0}^{P-1}
 \begin{pmatrix}
 \tilde{T}_{j} & -1\\
 1 & 0 
 \end{pmatrix}
\end{equation}
(the highest index appears leftmost in the product). Above $T_j,\tilde{T}_j $ are defined based on the functions $T(x)=E-2\cosh(x)$, $\tilde{T}(x)=\tilde{E}-2\cosh (\alpha\inv x)$, and by shifting the coordinate as $x\mapsto \ii\cdot 2\pi j\alpha$, $x\mapsto \ii\cdot 2\pi j$. Note that the shift is happening along the \emph{imaginary} axis, effectively rotating the system (inverting the Wick rotation).

\Cref{eq:trrel} should be understood as a relationship between the transfer matrices of the Hofstadter model for flux values of $\alpha$ and $\alpha\inv$. For the simple case $(P,Q)=(2,3)$, Eq. \eqref{eq:trrel} reads as follows:
\[
 E^3-6E-2\cosh(3x) = \tilde{E}^2-4-2\cosh(3x).
\]
Clearly, if the $x$-dependent parts are equal, then evaluating at $x=0$ yields a polynomial relation for the $x$-independent parts $-f_{2/3}(E)-2=f_{3/2}(\tilde{E})-2$.

We are now able to apply the spectral duality discussed above (coupled with the Chambers relation) and prove the first main result of this section.

\begin{theorem}[Ref.~\onlinecite{hatsuda_hofstadters_2016}]\label{thm:prel}
 The spectral transformation $E\mapsto \tilde{E}$ induced by the modular duality is defined by the equation $(-1)^{Q} f_{P/Q}(E) = (-1)^{P} f_{Q/P}(\tilde{E})$ at the mid-band point $\vec{\nu}=(\pi/2Q,\pi/2Q)$.
\end{theorem}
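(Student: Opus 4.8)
The plan is to combine three facts already in hand: Molinari's spectral duality (\Cref{thm:spd}), repackaged as the trace formula~\eqref{eq:dispx}; the Chambers relation~\eqref{eq:chambers}; and the Hatsuda--Katsura--Tachikawa tracial identity~\eqref{eq:trrel}. The key observation is that the $2\times 2$ transfer matrices appearing on the two sides of~\eqref{eq:trrel} are, up to a trace-preserving conjugation by the diagonal matrix $A_1$ of~\eqref{eq:harpiter}, exactly the transfer matrices $T(E)$ of the finite-dimensional Hofstadter Hamiltonian, for fluxes $P/Q$ and $Q/P$ respectively. Hence~\eqref{eq:trrel} becomes an equality of traces of Hofstadter transfer matrices, which one then rewrites in terms of the characteristic polynomials $f$ at the mid-band point using~\eqref{eq:dispx} and~\eqref{eq:chambers}.

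First I would pin down this identification. The factor $\begin{pmatrix} T_j & -1\\ 1 & 0\end{pmatrix}$ in~\eqref{eq:trrel} is built from $T(x)=E-2\cosh x$ by the imaginary shift $x\mapsto \ii\,2\pi j\alpha$, so at $x=0$ it equals $\begin{pmatrix} E-2\cos(2\pi j\alpha) & -1\\ 1 & 0\end{pmatrix}$, which is precisely the Harper transfer factor at $\nu_y=0$ with on-site term $E-\theta_{j+1}$; each such factor has determinant $1$, so the product has determinant $1$, matching the hypothesis of \Cref{thm:spd}. Following the reduction already carried out above~\eqref{eq:trrel}, where one matches the $x$-dependent parts of the two sides (the $\cosh(Qx)$ and $\cosh(Px)$ twists, which correspond under the Wick rotation to the $\cos(Q\nu_x)$ and $\cos(P\nu_x)$ twists) and then sets $x=0$, the left-hand side of~\eqref{eq:trrel} becomes $\trace T(E)$ at the center point $\vec{\nu}=(0,0)$ for flux $P/Q$, and the right-hand side becomes $\trace \tilde{T}(\tilde{E})$ at the center point for flux $Q/P$. (One can check this against the worked case $(P,Q)=(2,3)$, where the left-hand side at $x=0$ is $E^3-6E-2$, which is indeed $\trace T(E)$ at the center point of $\Hmatrix_{2/3}$.)

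Next I would convert these traces into the mid-band characteristic polynomials. Specializing~\eqref{eq:dispx} to $\nu_x=0$, $\nu_y=0$ gives $f_{P/Q}(E,0,0)=(-1)^Q\trace T(E)+2(-1)^{Q-1}$, while the Chambers relation~\eqref{eq:chambers} at $\nu_x=\nu_y=0$ gives $f_{P/Q}(E,0,0)=f_{P/Q}\bigl(E,\tfrac{\pi}{2Q},\tfrac{\pi}{2Q}\bigr)+4(-1)^{Q-1}$. Eliminating $f_{P/Q}(E,0,0)$ between the two identities yields $\trace T(E)=(-1)^Q f_{P/Q}\bigl(E,\tfrac{\pi}{2Q},\tfrac{\pi}{2Q}\bigr)-2$ at the center point, and the identical computation for flux $Q/P$ (with $Q$ replaced by $P$ throughout) gives $\trace \tilde{T}(\tilde{E})=(-1)^P f_{Q/P}\bigl(\tilde{E},\tfrac{\pi}{2P},\tfrac{\pi}{2P}\bigr)-2$. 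Substituting both into~\eqref{eq:trrel}, the additive constants $-2$ cancel and we are left with $(-1)^Q f_{P/Q}(E)=(-1)^P f_{Q/P}(\tilde{E})$, evaluated at the respective mid-band points, which is the assertion.

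The step I expect to be the main obstacle is the identification in the second paragraph: the matrices in~\eqref{eq:trrel} arise from the simultaneous diagonalization of $H_{\text{Toda}}$ and its modular dual followed by an inverse Wick rotation, whereas $T(E)$ in~\eqref{eq:harpiter} is read off directly from the Harper equation, so one must keep careful track of the conjugation by $A_1=\mathrm{diag}(z\inv,1)$ (which preserves traces), of the placement of the $z\inv$ factors, of the sign $(-1)^{Q-1}$ inherited from \Cref{thm:spd}, and of the precise meaning of ``setting $x=0$'' --- namely, that it amounts to restricting to the center point of the Brillouin zone and discarding the $\cosh$-twist, consistently on both sides. Once this dictionary is in place the remaining manipulations are just bookkeeping of signs and constants; the coprimality $\gcd(P,Q)=1$ enters only to guarantee that $f_{P/Q}$ and $f_{Q/P}$ are the genuine characteristic polynomials of the $Q\times Q$ matrix $\Hmatrix_{P/Q}$ and the $P\times P$ matrix $\Hmatrix_{Q/P}$.
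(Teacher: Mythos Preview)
Your proposal is correct and follows essentially the same route as the paper's proof: both identify the factors in~\eqref{eq:trrel} (after the imaginary shift, at $x=0$) with the Harper transfer-matrix factors at $\nu_y=0$, invoke spectral duality in the form~\eqref{eq:dispx}, and use the Chambers relation~\eqref{eq:chambers} to pass to the mid-band point, yielding $P_\alpha(E)=(-1)^Q f_{P/Q}(E,\pi/2Q,\pi/2Q)$ on each side. Your version is more explicit in the sign and constant bookkeeping (e.g.\ the elimination giving $\trace T(E)=(-1)^Q f_{P/Q}-2$), whereas the paper's proof phrases the same step as ``the $x$-dependent part corresponds to the $\nu_y$-term in the Chambers relation'' and absorbs the constants into the identification of the polynomial $P_\alpha$; these are the same argument written at different levels of detail.
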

\begin{proof}
 Since $T_j$ and $\tilde{T}_j$ are defined by shifting along the imaginary axis, at $x=0$ the hyperbolic cosine function becomes a simple cosine function. Then the left-hand side of Eq. \eqref{eq:trrel}, evaluated at $x=0$, ought to be the trace of the transfer matrix from Eq. \eqref{eq:harpiter}
 (there is a slight difference due to the convention in 
 Ref.~\onlinecite{hatsuda_hofstadters_2016} that $[x,p]=2\pi \ii \alpha$, but after renormalizing accordingly, 
 the identification with the matrix $T(E)$ in Eq. \eqref{eq:harpiter} holds). 
Then by Theorem \ref{thm:spd} and Eq. \eqref{eq:dispx}, the $x$-dependent part in Eq. \eqref{eq:trrel} corresponds to the $\nu_y$-term in the Chambers relation. 
 
 In particular, since the right-hand side of Eq. \eqref{eq:trrel} is defined via modular duality (swapping $P$ with $Q$), the correction in the definition of $\tilde{T}$ is such that the $x$-dependent terms are always equal for any $P$ and $Q$. Since the polynomial (denoted $P$ in Ref.~\onlinecite{hatsuda_hofstadters_2016}) is defined by effectively setting to zero the $x$-dependent part, we have the identification $P_\alpha(E)=(-1)^Q f_{P/Q}(E,\pi/2Q,\pi/2Q)$.
\end{proof}
\begin{remark}
In Ref.~\onlinecite{hatsuda_hofstadters_2016} the authors define a polynomial $P$ including 
the anisotropy parameter
 $R$, in which case the relevant polynomial is $P_\alpha(E)+2R^{Q}$ (note we are adding $2R^{Q}$ rather than $2$).
\end{remark}

\section{Formula for the characteristic polynomial}

We now work towards finding a formula for $f(E)=\det(\Hmatrix-E)$. We can exploit the gauge invariance of $H_{\text{Hof}}$ and choose a different form for
the vector potential. Following Ref.~\onlinecite{wiegmann_bethe-ansatz_1994}, we can choose 
$\vec{A}= \alpha/2\cdot (-x-y,\, x+y+1)$. 
In this gauge, the mid-band point corresponds to
$\vec{\nu}=(\pi/2,\pi/2)$. Taking this into account, we can rewrite the representation $\varphi$ from Eq. \eqref{eq:screp} in terms of magnetic
translations:
\begin{equation}\label{eq:magrep}
\varphi(K)=- q\inv T_yT_x\inv ,\qquad \varphi(E)=\frac{-(T_x\inv+T_y\inv)}{\ii(q-q\inv)},\qquad \varphi(F)=\frac{T_x+T_y}{\ii(q-q\inv)}.
\end{equation}
As a first step we will represent $\U(\qsl)$ in such a way that $\varphi(F\pm E)$ is Hermitian, thus the (new) candidate
Hamiltonian $\Hmatrix^\prime= \ii(q-q\inv)\varphi(F\pm E)$ is also Hermitian.

As a preliminary step, we are interested in the representation $\varphi=\zeta_0$ described in Ref.~\onlinecite{wiegmann_bethe-ansatz_1994}, given as
\begin{equation}\label{eq:artrep}
 \zeta_0(E)_{k+1,k}=\pm\frac{q^{k}-q^{-k}}{q-q\inv},\qquad \zeta_0(F)_{k,k+1}=\frac{q^{k}-q^{-k}}{q-q\inv},
\end{equation}
where the matrix entries range in $k=1,\dots,Q-1$ (the element $K$ is diagonal, but we do not need this), the upper sign is for odd $P$, while the lower sign is for even $P$. Note $\zeta_0(F\pm E)$ is Hermitian, however we will need to slightly modify $\zeta_0$ to be more suitable for our needs.

From the representation theory of $\U(\qsl)$, when $q^2$ is a
primitive root of unity, we have essentially three families of representations. The first
two are low-dimensional, and hence they must be discarded in favor of the last one, which we denote $Z_{a,b}(\lambda)$. This family is topologically parametrized by a $3$-dimensional complex
space.
However, our system naturally comes with a single parameter $R$, which we have seen in Section \ref{sec:prelims} together with the almost Mathieu operator. 

Heuristically, this is the reason why the ``correct'' representation for our purposes is $Z_\lambda=Z_{0,0}(\lambda)$. A more formal
argument goes like this: the representation in Eq. \eqref{eq:artrep} has nonzero entries only along the
secondary diagonals
with zero corner elements.
The parameters $a,b$ do appear in the corners of $E$ and $F$, thus we need to
set them to zero. The following theorem illustrates the situation precisely.

\begin{theorem}[Ref.~\onlinecite{jan:qtumgrp}]
Let $q^2$ be a primitive $\ell$-th root of unity. The irreducible finite-dimensional representations of $\U(\qsl)$ are
organized in three families, denoted by $L(n,+)$, $L(n,-)$, and $Z_{a,b}(\lambda)$. The representations $L(n,\pm)$ have dimension
smaller than or equal to $\ell-1$. The canonical matrix forms of $E$ and $F$ under $Z_{a,b}(\lambda)$ have zero corner entries if
and only if $a=b=0$. In this case, the matrix forms are:
\begin{align*}
 Z_{\lambda}(E) &= -\begin{pmatrix}
 0 & b_1 & & & \\
 & 0 & \ddots & \\
 & & \ddots & b_{Q-1}&\\
 & & & 0 
 \end{pmatrix}, &
 Z_{\lambda}(F) &= \begin{pmatrix}
 0 & & & \\
 1 & 0 & & \\
 & \ddots & \ddots & \\
 & & 1 & 0 
 \end{pmatrix},
\end{align*}
where we have set $Z_\lambda= Z_{0,0}(\lambda)$ and $b_r=(q-q\inv)^{-2}(q^r-q^{-r})(q^{r-1}\lambda\inv-q^{1-r}\lambda)$, and the values $\lambda=\pm 1,\pm q, \dots, \pm q^{Q-2}$
are \emph{not} allowed.
\end{theorem}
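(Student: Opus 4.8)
The plan is to specialize the classification of simple finite-dimensional $\U(\qsl)$-modules at a root of unity found in Ref.~\onlinecite{jan:qtumgrp}, and then read off the matrix normalization stated above. Write $\ell$ for the order of $q^2$ as a root of unity; since $q=\ee^{\ii\pi P/Q}$ with $\gcd(P,Q)=1$, one has $\ell=Q$. First I would record that $E^{\ell}$, $F^{\ell}$, $K^{\ell}$ are central: all three commute with $K$ because $q^{2\ell}=1$, while $E^{\ell}$ commutes with $F$ because the standard identity $EF^{n}-F^{n}E=[n]_q\,F^{n-1}\,\dfrac{q^{1-n}K-q^{n-1}K\inv}{q-q\inv}$ vanishes at $n=\ell$ (the quantum integer $[\ell]_q$ vanishing, as $q^{\ell}=q^{-\ell}$), and symmetrically $F^{\ell}$ commutes with $E$. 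On a simple module $V$, Schur's lemma makes $E^{\ell},F^{\ell},K^{\ell}$ act by scalars; since $K$ is invertible $K^{\ell}$ is a nonzero scalar, so $K$ satisfies a squarefree polynomial and is diagonalizable, with spectrum a single orbit $\{\lambda q^{-2k}\}$ of multiplication by $q^{\pm2}$. This equips $V$ with a label $(a,b,\lambda)=(E^{\ell},F^{\ell},\lambda)$.

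The next step is the dichotomy that separates the two families. If $a=0$ then $E$ is nilpotent, so $\ker E\neq0$; as $\ker E$ is stable under $K$ it contains a weight vector $v_0$ with $Ev_0=0$, whence $V=\U(\qsl)v_0$ is a quotient of the Verma module $M(\lambda)$. In any such quotient $F$ acts as a truncated nilpotent shift on $v_0,Fv_0,F^2v_0,\dots$, so $F^{\ell}=0$, i.e.\ $b=0$; symmetrically $b=0$ forces $a=0$. Hence a simple $V$ has either $a=b=0$ (highest-weight type) or both $a,b\neq0$. In the latter case $E$ and $F$ are invertible and no weight vector is extremal; taking any weight vector $v_0$, setting $v_k=F^kv_0$ and $Fv_{\ell-1}=b\,v_0$, one sees $V=\mathrm{span}(v_0,\dots,v_{\ell-1})$, so $\dim V=\ell$ with one-dimensional weight spaces. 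These are the $Z_{a,b}(\lambda)$ with $(a,b)\neq(0,0)$.

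Now take $a=b=0$, so $V=M(\lambda)/N$ with $N$ the maximal submodule, generated by the singular vectors. Applying the commutator identity to $v_0$ gives $EF^kv_0=d_k(\lambda)\,v_{k-1}$, where $d_k(\lambda)=(q-q\inv)^{-2}(q^k-q^{-k})(q^{1-k}\lambda-q^{k-1}\lambda\inv)=-b_k$ with the $b_k$ exactly as in the statement. A singular vector sits at level $k$ precisely when $d_k(\lambda)=0$, which always occurs at $k=\ell$ (the factor $q^{\ell}-q^{-\ell}$ vanishes) and occurs at some $1\le k\le\ell-1$ iff $\lambda=\pm q^{k-1}$. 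Thus, if $\lambda=\pm q^{n}$ with $0\le n\le\ell-2$ then $V\cong L(n,\pm)$ has dimension $n+1\le\ell-1$; otherwise $N=\langle v_{\ell}\rangle$, $\dim V=\ell$, and this module $Z_{0,0}(\lambda)$ is simple exactly when $\lambda$ avoids $\pm1,\pm q,\dots,\pm q^{\ell-2}$. In the basis $v_0,\dots,v_{\ell-1}$ of $Z_{0,0}(\lambda)$, $K$ is diagonal, $F$ is the nilpotent shift $v_k\mapsto v_{k+1}$ (with $Fv_{\ell-1}=0$), and $E$ is strictly upper with $Ev_k=-b_kv_{k-1}$ and $Ev_0=0$; rescaling the basis so that the nonzero entries of $F$ are all $1$ yields precisely the displayed matrices.

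It remains to settle the corner statement. Within $Z_{a,b}(\lambda)$, the only matrix entries lying off the secondary diagonals are the $(1,\ell)$-entry of $F$, which equals $b$, and the $(\ell,1)$-entry of $E$, which equals a nonzero scalar multiple of $a$ (obtained by running around the cycle using $E^{\ell}=a$); hence both corners vanish if and only if $a=b=0$, and in that case the matrices are those exhibited above. I expect the main obstacle to be the Verma-quotient case analysis---locating the maximal submodule through the zeros of $d_k(\lambda)$ and, in the cyclic branch, verifying $\dim V=\ell$ with one-dimensional weight spaces---together with the short computation that $d_k(\lambda)=-b_k$; the rest is normalization and an appeal to Ref.~\onlinecite{jan:qtumgrp}.
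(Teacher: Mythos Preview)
The paper does not give its own proof of this theorem: it is quoted verbatim from Jantzen's textbook (Ref.~\onlinecite{jan:qtumgrp}) and used as a black box. Your sketch is essentially the standard argument one finds there, so in that sense the approaches coincide.

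That said, there is a genuine gap in your dichotomy. The implication ``$a=0\Rightarrow b=0$'' is false: semi-cyclic simple modules with $E^{\ell}=0$ but $F^{\ell}\neq 0$ (and vice versa) do exist. The faulty step is the sentence ``In any such quotient $F$ acts as a truncated nilpotent shift on $v_0,Fv_0,\dots$, so $F^{\ell}=0$.'' At a root of unity the Verma module $M(\lambda)$ has infinite-dimensional weight spaces (since $Kv_{\ell}=\lambda q^{-2\ell}v_{\ell}=\lambda v_{\ell}$), and for any $b\in\C$ the element $v_{\ell}-b\,v_0$ is again a highest-weight vector of weight $\lambda$; quotienting by the submodule it generates yields an $\ell$-dimensional module on which $F^{\ell}=b$. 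When $\lambda$ is generic this quotient is simple, furnishing a counterexample to your claim. This does not break the statement of the theorem---such modules are still members of the $Z_{a,b}(\lambda)$ family---but your case analysis must allow $(a,b)$ with exactly one coordinate zero, and the ``corner'' argument in your last paragraph should be phrased accordingly (the $(\ell,1)$-entry of $E$ is literally the parameter attached to $Ev_0$ in Jantzen's normalization, not something deduced from the vanishing of $E^{\ell}$).

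The remaining parts of your outline---centrality of $E^{\ell},F^{\ell},K^{\ell}$, diagonalizability of $K$, the Verma-quotient analysis for $a=b=0$, the computation $d_k(\lambda)=-b_k$, and the identification of the excluded values $\lambda=\pm q^{k}$---are correct and match the textbook argument.
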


Set $\zeta=Z_{q\inv}$ and $\Hmatrix^\prime=\ii(q-q\inv)\zeta(F-E)$. We have the following proposition.
\begin{proposition}\label{prop:poltheta}
 The characteristic polynomial of $\Hmatrix$ equals that of $\Hmatrix^\prime$,
 \begin{gather*}
 f(E)=\det(\Hmatrix-E)=\det(\Hmatrix^\prime-E),\\
 \Hmatrix^\prime= -2\sin(\pi\alpha)\begin{pmatrix}
 0 & \frac{\sin^2(\pi\cdot 1 \cdot\alpha)}{\sin^2(\pi\alpha)} & & \\
 1 & 0 & \ddots & \\
 & \ddots & \ddots & \frac{\sin^2(\pi(Q-1)\alpha)}{\sin^2(\pi\alpha)}\\
 & & 1 & 0
 \end{pmatrix}.
 \end{gather*} 
\end{proposition}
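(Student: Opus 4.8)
The plan is to realize $\Hmatrix^\prime$ as a conjugate of $\Hmatrix$, so that equality of characteristic polynomials is immediate, and then to read off the stated matrix by a short $q$-number computation. First I would use the gauge invariance of $H_{\text{Hof}}$ to pass from the Landau gauge of \cref{def:hoffindim} to the Wiegmann--Zabrodin gauge $\vec{A}=\alpha/2\cdot(-x-y,\,x+y+1)$; a gauge change is implemented by a unitary conjugation (together with a reparametrization of the Brillouin torus), hence leaves the characteristic polynomial unchanged. Working at the mid-band point $\vec{\nu}=(\pi/2,\pi/2)$ (in this gauge), the Hamiltonian becomes $T_x+T_x^*+T_y+T_y^*=\ii(q-q\inv)\varphi(F-E)$ with $\varphi$ the magnetic-translation representation of \cref{eq:magrep} (the dependence on a general point of the Brillouin torus is afterwards reinstated through the Chambers relation \cref{eq:chambers}). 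Thus $f(E)=\det\bigl(\ii(q-q\inv)\varphi(F-E)-E\bigr)$.

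Next I would identify $\varphi$ with $\zeta=Z_{q\inv}$ as a module over $\U(\qsl)$. Since $q^2=\omega$ is a primitive $Q$-th root of unity and $\dim\varphi=Q$, the classification recalled above excludes the low-dimensional families $L(n,\pm)$ and yields $\varphi\cong Z_{a,b}(\lambda)$ for some parameters. A diagonal similarity brings $\varphi(E),\varphi(F)$ to the corner-free bidiagonal form of \cref{eq:artrep} (the ``modification'' of $\zeta_0$ alluded to in the text), which forces $a=b=0$, and matching the off-diagonal entries of \cref{eq:artrep} with those of $Z_\lambda$ — equivalently, comparing the eigenvalues of $\varphi(K)=-q\inv T_yT_x\inv$, or the Casimir — pins down $\lambda=q\inv$, a value outside the forbidden set $\{\pm1,\pm q,\dots,\pm q^{Q-2}\}$, so that the intertwiner $S$ with $S\varphi(\cdot)S\inv=\zeta(\cdot)$ is invertible. (If the normal form used introduces the automorphism $E\leftrightarrow F$, $K\mapsto K\inv$, the resulting overall sign is harmless since $f(E)=(-1)^Q f(-E)$ by the $E\mapsto -E$ (anti)symmetry of the spectrum.) Consequently $f(E)=\det(\Hmatrix^\prime-E)$ with $\Hmatrix^\prime=\ii(q-q\inv)\zeta(F-E)$.

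It then remains to verify the explicit matrix. From $q=\ee^{\ii\pi\alpha}$ we get $q-q\inv=2\ii\sin(\pi\alpha)$, so $\ii(q-q\inv)=-2\sin(\pi\alpha)$, turning the subdiagonal $1$'s of $\zeta(F)$ into $-2\sin(\pi\alpha)$. On the superdiagonal $-\zeta(E)$ contributes $b_r=(q-q\inv)^{-2}(q^r-q^{-r})(q^{r-1}\lambda\inv-q^{1-r}\lambda)$; with $\lambda=q\inv$ the last factor collapses to $q^r-q^{-r}$, so $b_r=\bigl((q^r-q^{-r})/(q-q\inv)\bigr)^2=\sin^2(\pi r\alpha)/\sin^2(\pi\alpha)$, giving superdiagonal entries $-2\sin(\pi\alpha)\,\sin^2(\pi r\alpha)/\sin^2(\pi\alpha)$. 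This reproduces the displayed matrix, and the proposition follows.

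The main obstacle is the second step: proving that the magnetic-translation representation is exactly the irreducible $Z_{q\inv}$ with vanishing corners. This rests on (i) the Wiegmann--Zabrodin reduction to the corner-free bidiagonal form — which is what the special gauge and the mid-band point provide — together with the classification theorem to conclude $a=b=0$, and (ii) a careful bookkeeping of roots of unity to obtain $\lambda=q\inv$ and to check that it is admissible. Everything after that is routine: a similarity argument for the characteristic polynomial and a one-line simplification of $q$-numbers.
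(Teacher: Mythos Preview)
Your proposal is correct and follows essentially the same route as the paper's proof: gauge-transform to the Wiegmann--Zabrodin potential, invoke their result to realize the mid-band Hamiltonian via the representation $\zeta_0$ of \cref{eq:artrep}, then use the classification of $\U(\qsl)$-modules together with a diagonal conjugation to identify $\zeta_0\cong Z_{q^{-1}}$, and conclude by similarity invariance of the characteristic polynomial. Your explicit $q$-number simplification of the entries and the check that $\lambda=q^{-1}$ lies outside the forbidden set are welcome details that the paper leaves implicit; the paper also flags (and you do not) that the diagonal system $\Lambda\zeta_0(F)\Lambda^{-1}=\zeta(F)$ is overdetermined, though this does not affect the argument.
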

\begin{proof}
 By gauge invariance of $H_{\text{Hof}}$, we can compute $\det(\Hmatrix-E)$ when the vector potential is chosen as
 $\vec{A}= \alpha/2\cdot (-x-y,\, x+y+1)$. 
In this case, the results in Ref.~\onlinecite{wiegmann_bethe-ansatz_1994} imply that $\Hmatrix$ is given as prescribed by Eq. \eqref{eq:magrep}
 with $\varphi=\zeta_0$.
According to
 the classification 
 of the quantum group representations, $\zeta_0$ must be equivalent to $Z_\lambda$ for
 some $\lambda$. Indeed, a direct computation (by conjugation with a diagonal matrix) shows that $\lambda=q\inv$ (note that,
 if $\Lambda$ is such diagonal matrix, the system resulting from $\Lambda\zeta_0(F)\Lambda\inv=
 \zeta(F)$ is overdetermined). 
 Equivalent representations are related through similar matrices, 
i.e., which are equivalent up to a similarity (but not necessarily unitary) transformation,
 and it is well-known that 
 similarity preserves the characteristic polynomial.
\end{proof}

\begin{remark}\label{rem:nonhermitian}
Note that $\Hmatrix$ is hermitian, whereas $\Hmatrix^\prime$ is not.
However, the matrices $\Hmatrix$ and $\Hmatrix^\prime$ are equivalent up to a similarity transformation, and therefore isospectral with real spectra:
Indeed, $\Hmatrix^\prime$ is pseudo-hermitian, following the definitions introduced by Mostafazadeh\cite{mostafazadeh_pseudo-hermiticity_2002}.
Hence, the corresponding Hamiltonian $H^\prime$ is an example of a non-Hermitian Hamiltonian with real spectrum which is equivalent up to a similarity transformation to a more conventional hermitian Hamiltonian\cite{mostafazadeh_pseudo-hermiticity_2002,fernandez_non-hermitian_2015}.
\end{remark}

\begin{remark}\label{rem:cheb}
We see from Definition \ref{def:hoffindim} that (at least for $\nu_y=0$) the diagonal of $\Hmatrix$ is given by the Chebyshev polynomials of the first kind, in symbols $\theta_k=2T_{k-1}(\cos(2\pi\alpha))$. On the other hand, the Hamiltonian $\Hmatrix^\prime$ in the new ``quantum group-adjusted'' gauge\cite{wiegmann_bethe-ansatz_1994} shows the Chebyshev polynomials of the \emph{second} kind on the upper diagonal, that is, $\Hmatrix^\prime_{k,k+1}=U_{k-1}^2(\sin(\pi\alpha))$.
\end{remark}

\begin{remark}\label{rem:hopcost}
It is possible to take into account the 
anisotropy parameter
$R$ (i.e., the parameter which multiplies the cosine function in the almost Mathieu operator) in the expression for $\Hmatrix^\prime$ by choosing the representation $Z_{Rq\inv}$. The Hamiltonian will then be written as $Z_{Rq\inv}(RF - E)$. 
It is interesting to note that the parameter controlling the representation of the quantum group coincides with the parameter controlling the anisotropy of the system.
\end{remark}

The advantage of $\zeta$ over $\zeta_0$ is twofold: the freedom in choosing the parameter allows us to write an expression that is independent of the parity of $P$, and $\zeta(F)$ is represented through a matrix with a constant (secondary) diagonal, which makes
the formula for $\det(\Hmatrix^\prime-E)$ more easily guessed.

Recall that the \emph{elementary symmetric polynomials} appear when we expand a linear factorization of a monic polynomial:
\begin{align*}
\prod_{i=1}^n ( \lambda - x_i)=&\,\,\lambda^n - e_1(x_1,\ldots,x_n)\lambda^{n-1} + e_2(x_1,\ldots,x_n)\lambda^{n-2} + \cdots +\\
&(-1)^n e_n(x_1,\ldots,x_n),\\
e_k (x_1 , \ldots , x_n )=&\,\,\sum_{1\le j_1 < j_2 < \cdots < j_k \le n} x_{j_1} \dotsm x_{j_k}.
\end{align*} 
We shall need the following ``$2$-step modification'' of the $e_k$'s:
\begin{equation*}
\mode_k (x_1 , \ldots , x_n )=\,\,\sum_{\substack{1\le j_1 < j_2 < \cdots < j_k \le n\\
\lvert j_i-j_{i+1}\rvert \geq 2}} x_{j_1} \dotsm x_{j_k}.
\end{equation*} 
We also use the convention $\mode_0=1$. 

The following lemma is easily verified.
\begin{lemma}\label{lem:combsym}
Suppose $N$ is odd and $k=1,\dots,\lfloor N/2\rfloor$. We have the identity
\begin{equation*}
\mode_k(x_1,\dots,x_{N-1})=\mode_k(x_1,\dots,x_{N-2})+x_{N-1}\mode_{k-1}(x_1,\dots,x_{N-3}).
\end{equation*}
Suppose $N$ is even and $k=1,\dots,N/2-1$. We have the identities
\begin{align*}
\mode_k(x_1,\dots,x_{N-1}) &=\mode_k(x_1,\dots,x_{N-2})+x_{N-1}\mode_{k-1}(x_1,\dots,x_{N-3}),\\
\mode_{N/2}(x_1,\dots,x_{N-1}) &=x_{N-1}\mode_{N/2-1}(x_1,\dots,x_{N-3}).
\end{align*}
\end{lemma}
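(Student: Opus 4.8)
The plan is to derive all three identities from a single ``remove the largest index'' recurrence for $\mode_k$, supplemented by a trivial pigeonhole bound that forces the top term to vanish in the even case.

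First I would prove the general recurrence
\[
\mode_k(x_1,\dots,x_m)=\mode_k(x_1,\dots,x_{m-1})+x_m\,\mode_{k-1}(x_1,\dots,x_{m-2}),
\]
valid for every $m\ge1$ and $k\ge1$ (using the convention $\mode_0=1$, with an empty variable list when $k=1$). This is immediate from the definition: split the admissible tuples $1\le j_1<\cdots<j_k\le m$ according to whether $j_k<m$ or $j_k=m$. Those with $j_k<m$ are exactly the tuples counted by $\mode_k(x_1,\dots,x_{m-1})$; those with $j_k=m$ satisfy $j_{k-1}\le m-2$ by the spacing condition $\lvert j_{k-1}-j_k\rvert\ge2$, and conversely every admissible $(k-1)$-tuple in $\{1,\dots,m-2\}$ extends uniquely by appending $m$, giving the contribution $x_m\,\mode_{k-1}(x_1,\dots,x_{m-2})$. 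Setting $m=N-1$ yields the first displayed identity in all three cases at once; in fact it holds with no restriction on $k$ or on the parity of $N$.

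It then remains to see why the term $\mode_{N/2}(x_1,\dots,x_{N-2})$ produced by this recurrence vanishes when $N$ is even. For this I would record the elementary observation that an admissible tuple $1\le j_1<\cdots<j_k\le m$ forces $m\ge j_k\ge j_1+2(k-1)\ge 2k-1$, so $\mode_k(x_1,\dots,x_m)=0$ whenever $k>\lceil m/2\rceil$. Applied with $m=N-2$ and $k=N/2$, this gives $N/2>\lceil(N-2)/2\rceil=N/2-1$, hence $\mode_{N/2}(x_1,\dots,x_{N-2})=0$ and the recurrence collapses to the second identity; for $N$ odd and $k=\lfloor N/2\rfloor$ no such collapse occurs, since there $\mode_k(x_1,\dots,x_{N-2})$ is precisely the top-degree term, consistent with the lemma asserting only the single recurrence in that case. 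There is no serious obstacle here; the only point needing care is checking that the recurrence and the vanishing claim line up with the precise ranges of $k$ stated, i.e.\ the distinction between the boundary values $k=N/2$ and $k=N/2-1$ when $N$ is even.
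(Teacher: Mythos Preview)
Your argument is correct and is exactly the natural verification: the paper itself offers no proof beyond the phrase ``easily verified,'' and the split-by-largest-index recurrence together with the pigeonhole bound $j_k\ge 2k-1$ is the expected way to carry this out. There is nothing to add.
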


Let us introduce a simple formula for the determinant of tridiagonal matrices.

\begin{theorem}\label{thm:formula}
 Let $A=A(N)$ be the $N$-by-$N$ matrix with complex-valued entries described below. Then the determinant of $A$ is given by
 \begin{align*}
 A&=\begin{pmatrix}
 x & b_1 & & \\
 y & x & \ddots & \\
 & \ddots & \ddots & b_{N-1}\\
 & & y & x 
 \end{pmatrix}\\
 \det(A)&=\sum_{i=0}^{\lfloor N/2\rfloor} (-1)^i x^{N-2i}\,y^i \,\mode_i(b_1,b_2,\dots,b_{N-1}).
 \end{align*}
\end{theorem}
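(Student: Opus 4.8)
The plan is to evaluate $\det A$ directly from the Leibniz permutation expansion, which gives the cleanest conceptual proof, while noting that the recurrence in \cref{lem:combsym} also yields a quick inductive argument (and was presumably placed just above for that purpose).

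For the combinatorial route: since $A$ is tridiagonal, a permutation $\sigma\in S_N$ contributes a nonzero term $\mathrm{sgn}(\sigma)\prod_k a_{k,\sigma(k)}$ only if $\lvert\sigma(k)-k\rvert\le 1$ for all $k$, and a standard bijection/recursion argument shows that such $\sigma$ are exactly products of pairwise disjoint transpositions of the form $(j\ j{+}1)$. A fixed point contributes a factor $a_{kk}=x$; a transposition $(j\ j{+}1)$ contributes $\mathrm{sgn}=-1$ together with $a_{j,j+1}a_{j+1,j}=b_j\,y$. Hence
\[
\det A=\sum_{S}(-1)^{\lvert S\rvert}\,y^{\lvert S\rvert}\Bigl(\prod_{j\in S}b_j\Bigr)x^{N-2\lvert S\rvert},
\]
where $S$ ranges over all sets $\{j_1<\dots<j_i\}\subseteq\{1,\dots,N-1\}$ of left endpoints of a family of disjoint consecutive pairs, i.e.\ with $\lvert j_{r+1}-j_r\rvert\ge 2$. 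Grouping by $i=\lvert S\rvert$ and comparing the inner sum with the definition of $\mode_i$ gives the claimed formula at once.

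For the inductive alternative: expanding $\det A(N)$ along the bottom row $(0,\dots,0,y,x)$ yields the three-term recurrence $D_N=x\,D_{N-1}-y\,b_{N-1}\,D_{N-2}$, where $D_{N-1},D_{N-2}$ are determinants of the same shape with the upper diagonal truncated to $b_1,\dots,b_{N-2}$ and $b_1,\dots,b_{N-3}$ respectively. After checking the base cases $N=1$ ($D_1=x$) and $N=2$ ($D_2=x^2-yb_1$), one inserts the formula for $D_{N-1}$ and $D_{N-2}$, reindexes the second sum by $j=i+1$, and reads off the coefficient of $(-1)^i x^{N-2i}y^i$ as $\mode_i(b_1,\dots,b_{N-2})+b_{N-1}\,\mode_{i-1}(b_1,\dots,b_{N-3})$, which equals $\mode_i(b_1,\dots,b_{N-1})$ by \cref{lem:combsym}.

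The only point requiring care — in either approach — is the floor-function bookkeeping: matching the summation ranges $0\le i\le\lfloor N/2\rfloor$, $\lfloor(N-1)/2\rfloor$, $\lfloor(N-2)/2\rfloor$ across the two parities of $N$, and in the inductive proof ensuring the top index $i=\lfloor N/2\rfloor$ is handled by the right branch of \cref{lem:combsym} (for $N$ even only the reindexed second sum survives there, matching $\mode_{N/2}=b_{N-1}\mode_{N/2-1}$; for $N$ odd the first identity covers the whole range). There is no genuine obstacle beyond this routine index check.
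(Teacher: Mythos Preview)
Your inductive alternative is exactly the paper's proof: the paper expands along the last row/column to obtain $\det A(N)=x\det A(N-1)-y\,b_{N-1}\det A(N-2)$, checks $N=1,2$, reindexes the second sum, and invokes \cref{lem:combsym} separately for $N$ odd and $N$ even, just as you describe (including the floor bookkeeping).

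Your first, combinatorial route via the Leibniz expansion is a genuinely different and somewhat cleaner argument. It avoids induction and does not need \cref{lem:combsym} at all: the structure of the contributing permutations (products of disjoint adjacent transpositions) encodes the ``$2$-step'' condition in $\mode_i$ directly. The paper's inductive approach, by contrast, is more mechanical but makes the role of the recursion in \cref{lem:combsym} explicit, which is why that lemma is stated. Either argument is perfectly adequate here; the combinatorial one generalizes more readily (e.g.\ to nonconstant diagonal or subdiagonal), while the inductive one keeps the paper self-contained via the elementary identity already recorded.
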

\begin{proof}
 We proceed by induction on $N$. 
 Cases $N=1,2$ are straightforward. 
 For the induction step, notice that the standard Laplace expansion
 of the determinant (starting from the bottom right) yields the recurrence relation
 \[ \det A(N)=x\cdot \det A(N-1)-yb_{N-1}\cdot \det A(N-2).\]
 Suppose $N$ is odd. Notice that $(N-1)/2=\lfloor N/2\rfloor$ and $\lfloor (N-2)/2\rfloor=\lfloor N/2\rfloor-1$. We can compute the summation as follows:
 \begin{align*}
	x\cdot \det A(N-1)&=\sum_{i=0}^{\lfloor N/2\rfloor} (-1)^i x^{N-2i}\,y^i \,\mode_i(b_1,b_2,\dots,b_{N-2})\\
	-yb_{N-1}\cdot \det A(N-2)&=\sum_{i=0}^{\lfloor N/2\rfloor-1} (-1)^{i+1} x^{N-2(i+1)}\,y^{i+1}b_{N-1} \,\mode_i(b_1,b_2,\dots,b_{N-3})\\
	&=\sum_{i=1}^{\lfloor N/2\rfloor} (-1)^{i} x^{N-2i}\,y^{i}b_{N-1} \,\mode_{i-1}(b_1,b_2,\dots,b_{N-3})
 \end{align*}
 where in the last equality we reindexed starting from $1$ rather than $0$. When $i=0$ we get $x^N$
as expected. 
For the other indices we can sum the two summations and use Lemma \ref{lem:combsym} to obtain the result. 
Analogously, the case where $N$ is even is carried out 
using the corresponding identities in Lemma \ref{lem:combsym}.
\end{proof}

To calculate the determinants of tridiagonal matrixes with corners, we will need the following result.

\begin{corollary}\label{cor:corncont}
 Let $A^b,A_b$ be the $N$-by-$N$ matrices with complex-valued entries described below. Their determinants are given by
 \begin{align*}
 A^b&=\begin{pmatrix}
 x & b_1 & & b \\
 y_1 & x & \ddots & \\
 & \ddots & \ddots & b_{N-1}\\
 & & y_{N-1} & x 
 \end{pmatrix}&
 A_b&=\begin{pmatrix}
 x & b_1 & & \\
 y_1 & x & \ddots & \\
 & \ddots & \ddots & b_{N-1}\\
 b & & y_{N-1} & x 
 \end{pmatrix}
 \end{align*} 
 \begin{align*}
 \det(A^b)&=\det(A)+(-1)^{N-1}b\cdot y_1\cdots y_{N-1},\\
 \det(A_b)&=\det(A)+(-1)^{N-1}b\cdot b_1b_2\cdots b_{N-1}.
 \end{align*}
\end{corollary}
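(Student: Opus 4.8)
The plan is to treat the corner entry $b$ by the multilinearity of the determinant in the row (or column) that contains it, reducing each identity to the evaluation of one triangular minor. Concretely, $A^b$ and the corner-free matrix $A$ coincide in every entry except position $(1,N)$, where $A^b$ carries $b$ and $A$ carries $0$; expanding both determinants along the first row and subtracting, all cofactors for columns $1,\dots,N-1$ cancel, leaving $\det(A^b)=\det(A)+(-1)^{1+N}\,b\,M_{1,N}$, where $M_{1,N}$ denotes the $(1,N)$-minor (which is the same for $A$ and $A^b$, since the entry in which they differ sits in the deleted first row).

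The crux is to recognize $M_{1,N}$ as triangular. Deleting the first row and the last column of the tridiagonal matrix $A$ moves the subdiagonal entries $y_1,\dots,y_{N-1}$ onto the main diagonal of the resulting $(N-1)\times(N-1)$ block, while the old diagonal $x$ and the superdiagonal entries $b_i$ get pushed strictly above it; hence that block is upper triangular and $M_{1,N}=y_1\cdots y_{N-1}$. Since $(-1)^{1+N}=(-1)^{N-1}$, this gives the first formula. For $A_b$ one can either repeat the argument along the first column---the relevant minor $M_{N,1}$ turns out to be lower triangular with the superdiagonal entries $b_1,\dots,b_{N-1}$ on its diagonal---or simply pass to the transpose: $\det(A_b)=\det(A_b^{\top})$, and $A_b^{\top}$ has the shape of $A^b$ but with the roles of the families $(b_i)$ and $(y_i)$ interchanged, so the first formula (applied to $A^{\top}$ in place of $A$) yields $\det(A_b)=\det(A)+(-1)^{N-1}b\,b_1\cdots b_{N-1}$.

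I do not anticipate any genuine obstacle, as the content is entirely elementary. The only point requiring care is the bookkeeping of the index shift caused by deleting one row and one column: one must check that it is precisely the $y_i$'s (respectively the $b_i$'s), rather than a mixture involving $x$, that lands on the diagonal of the minor, and that the cofactor sign is $(-1)^{1+N}$ and not $(-1)^{N}$. Both are confirmed by inspecting the small cases $N=2,3$, after which the triangular shape is visibly stable under the Laplace recursion already used in the proof of \Cref{thm:formula}. Together with that theorem, this corollary furnishes the determinant evaluations needed later to account for the corner contributions arising when $\Hmatrix$ is parametrized away from the mid-band point of the Brillouin torus.
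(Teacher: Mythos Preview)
Your proof is correct and essentially the same as the paper's: both reduce the problem to computing the single cofactor at the corner position and then observe that the corresponding $(1,N)$-minor (respectively $(N,1)$-minor) is triangular with the $y_i$'s (respectively $b_i$'s) on its diagonal. The only cosmetic difference is that the paper packages the first step via the matrix determinant lemma $\det(A+u^tv)=\det(A)+v\cdot\mathrm{adj}(A)\,u^t$, whereas you obtain the same cofactor directly from Laplace expansion along the first row/column.
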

\begin{proof}
 The determinants can be computed using the matrix determinant lemma: given row vectors $u,v$ of length $N$, the formula
 $\det(A+u^tv)=\det(A)+v\cdot \mathrm{adj}(A)u^t$ holds (the subscript $t$ indicates transposition). 
 Recall that 
 the adjugate matrix $\mathrm{adj}(A)$ is the matrix such that $A\cdot \mathrm{adj}(A)=\det(A)$, and it is computed
 by taking the transpose of the cofactor matrix of $A$ (the matrix of signed minors).

 By setting $u=(1,0,\dots,0)$ and $v=(1,0,\dots,0)$ we can write $A^b=A+bu^tv$. Since $v\cdot\mathrm{adj}(A)u^t=
 \mathrm{adj}(A)_{n,1}$, we need to compute the $(1,n)$-minor of $A$. The corresponding submatrix is upper triangular,
 with the diagonal equal to the lower secondary diagonal of $A$. This implies $v\cdot\mathrm{adj}(A)u^t=y_1\cdots y_{N-1}$. 
 We obtain $\det(A^b)=\det(A)+(-1)^{N-1}b\cdot y_1\cdots y_{N-1}$.
The formula for $A_b$ can be proven analogously. 
\end{proof}

Combining the previous theorem with Proposition \ref{prop:poltheta}, we get the main result on the polynomial $f$. 
Recall that $\alpha=P/Q$ is a rational number in reduced form.
\begin{theorem}\label{thm:formula_conc}
We have a formula for $f(E)=\det(\Hmatrix-E)$ given as
\begin{equation*}
f(E)=\sum_{i=0}^{\lfloor Q/2\rfloor} (-1)^{Q+i}\,4^i\,E^{Q-2i}\,
\mode_i(\sin^2(\pi\alpha),\sin^2(2\pi\alpha),\dots,\sin^2((Q-1)\pi\alpha) ).
\end{equation*}
\end{theorem}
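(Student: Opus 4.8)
\emph{Proof proposal.} The plan is to obtain the formula as an essentially immediate consequence of \Cref{prop:poltheta} and \Cref{thm:formula}, the only genuine work being the bookkeeping of signs and of the factors $\sin(\pi\alpha)$. By \Cref{prop:poltheta} we may replace $\Hmatrix$ by $\Hmatrix^\prime$, so that $f(E)=\det(\Hmatrix^\prime-E)$; the crucial feature is that $\Hmatrix^\prime$, in contrast to the original $\Hmatrix$, is tridiagonal \emph{without corners}, so \Cref{thm:formula} applies to $\Hmatrix^\prime-E$ directly, with no recourse to \Cref{cor:corncont}. Reading off the matrix displayed in \Cref{prop:poltheta} (with the scalar $-2\sin(\pi\alpha)$ distributed over the entries), $\Hmatrix^\prime-E$ has exactly the shape required by \Cref{thm:formula}, with constant diagonal $x=-E$, constant lower secondary diagonal $y=-2\sin(\pi\alpha)$, and upper secondary diagonal
\[
b_k=-2\sin(\pi\alpha)\,\frac{\sin^2(k\pi\alpha)}{\sin^2(\pi\alpha)}=\frac{-2}{\sin(\pi\alpha)}\sin^2(k\pi\alpha),\qquad k=1,\dots,Q-1.
\]

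\Cref{thm:formula} then yields
\[
f(E)=\sum_{i=0}^{\lfloor Q/2\rfloor}(-1)^i\,(-E)^{Q-2i}\,\bigl(-2\sin(\pi\alpha)\bigr)^i\,\mode_i(b_1,\dots,b_{Q-1}),
\]
and I would now simplify the three $i$-dependent factors one at a time. First, $(-E)^{Q-2i}=(-1)^Q E^{Q-2i}$. Next, since $\mode_i$ is homogeneous of degree $i$ and each $b_k$ equals $\tfrac{-2}{\sin(\pi\alpha)}$ times $\sin^2(k\pi\alpha)$,
\[
\mode_i(b_1,\dots,b_{Q-1})=\Bigl(\frac{-2}{\sin(\pi\alpha)}\Bigr)^{i}\mode_i\bigl(\sin^2(\pi\alpha),\dots,\sin^2((Q-1)\pi\alpha)\bigr).
\]
Multiplying the factors $(-1)^i$, $(-2\sin(\pi\alpha))^i=(-1)^i2^i\sin^i(\pi\alpha)$, and $(-2/\sin(\pi\alpha))^i=(-1)^i2^i\sin^{-i}(\pi\alpha)$ gives the single sign $(-1)^{3i}=(-1)^i$, the factor $4^i$, and the cancellation of $\sin^{\pm i}(\pi\alpha)$; together with the $(-1)^Q$ from the first factor this is precisely $(-1)^{Q+i}4^i E^{Q-2i}\mode_i(\sin^2(\pi\alpha),\dots,\sin^2((Q-1)\pi\alpha))$, which is the claimed formula.

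There is no conceptual obstacle here: the dependence of the $\mode_i$'s on the parity of $Q$, and the combinatorial recursions behind it, are already dealt with inside \Cref{thm:formula} via \Cref{lem:combsym}, while the elimination of the parity-of-$P$ case distinction was arranged by the choice $\zeta=Z_{q\inv}$ in \Cref{prop:poltheta}. The one point worth a remark is that the simplification above divides by $\sin(\pi\alpha)$; this is legitimate because $\alpha=P/Q$ in reduced form forces $\sin(\pi P/Q)\neq 0$ as soon as $Q\geq 2$, whereas for $Q=1$ one has $\Hmatrix^\prime=(0)$ and $f(E)=-E$, which matches the formula trivially since only the $i=0$ summand is present.
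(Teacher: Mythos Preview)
Your proposal is correct and follows exactly the approach the paper itself takes: the paper simply states that the result follows by ``combining the previous theorem with Proposition~\ref{prop:poltheta},'' and you have carried out precisely that combination, spelling out the sign and factor bookkeeping that the paper leaves implicit. Your extra care about the legitimacy of dividing by $\sin(\pi\alpha)$ and the $Q=1$ edge case is a nice touch not present in the original.
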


\begin{remark}\label{rem:frem}
Let us note that the formula from Theorem \ref{thm:formula} makes at least two aspects clear: if we multiply $y$ by a quantity $x$ and the $b_i$'s by the inverse $x\inv$, the determinant of $A$ is unaffected. Moreover, when $N$ is odd, the determinant goes to zero if we set $x=0$. 
We thus recover the well-known result mentioned earlier:
the spectrum of the Hofstadter model at the mid-band point always contains $0$ when $Q$ is odd.
\end{remark}

\begin{remark}
As previously noted, since $f(0)= f( 0,{\pi}/{2Q},{\pi}/{2Q})=4(-1)^{Q/2}$ if $Q$ is even,
the equation above yields the identity
\begin{equation}
\mode_{Q/2}(\sin^2(\pi\alpha),\sin^2(2\pi\alpha),\dots,\sin^2((Q-1)\pi\alpha))
=4^{-(Q/2-1)}
,
\end{equation}
for even $Q$. 
\end{remark}

\section{Chambers relation and Chebyshev polynomials}

We are left with the question of determining the general dependence of the energy on the Brillouin torus. We want to emphasize that guessing the general form of $\Hmatrix^\prime$ is facilitated by drawing from all the relationships we have established so far with the Toda model, the quantum group, and the Chebyshev polynomials.

Firstly, we know from Section \ref{sec:prelims} that these Hamiltonians have the form $(q-q\inv)\phi(E+F)$ for a suitable representation. 
Secondly, we obtained an expression of the form $\Hmatrix^\prime=\ii(q-q\inv)\zeta(-E+F)=(q-q\inv)\zeta(\ii\inv E+\ii F)$. 
The natural guess away from the mid-band point is to set 
\begin{equation}\label{eq:temph}
\Hmatrix^\prime(\kappa_+,\kappa_-)=(q-q\inv)\zeta(\ee^{-\ii\kappa_-} E+\ee^{\ii\kappa_-}F),
\end{equation}
where the variables $\kappa_\pm$ are parametrizing the Brillouin zone. The dependence on $\kappa_+$ can be introduced by reverting the Chebyshev polynomials from the second to the first kind. In other words, we can modify $\zeta_0$ from Eq. \eqref{eq:artrep} to
\begin{equation*}
 \zeta_c(E)_{k+1,k}=\frac{\ee^{\ii\kappa_+}q^{k}+\ee^{-\ii\kappa_+}q^{-k}}{q-q\inv},\qquad \zeta_c(F)_{k,k+1}=\frac{\ee^{\ii\kappa_+}q^{k}+\ee^{-\ii\kappa_+}q^{-k}}{q-q\inv},
\end{equation*}
where we note that $\ee^{\ii\kappa_+}q^{k}+\ee^{-\ii\kappa_+}q^{-k}=2\cos(k\pi\alpha+\kappa_+)$.

Lastly, since the general form of the representation $Z_{a,b}(\lambda)$ includes corners (for nontrivial $a$ and $b$), we can interpret $\zeta_c$ above periodically (continuing the diagonals from the top when they reach the bottom), which has the effect of filling up precisely the two corners. Overall, we obtain the following form:
\begin{gather*}
 \Hmatrix^\prime_c(\kappa_+,\kappa_-)=(q-q\inv)\zeta_c(z\inv E+zF)=\\
 \begin{pmatrix}
 0 & \begin{smallmatrix}z \cdot 2\cos(1\cdot\pi\alpha+\kappa_+)\end{smallmatrix} &
 & \begin{smallmatrix}z\inv\cdot 2\cos(Q\cdot\pi\alpha+\kappa_+)\end{smallmatrix} \\
 \begin{smallmatrix}z\inv\cdot 2\cos(1\cdot\pi\alpha+\kappa_+)\end{smallmatrix} & 0 & \ddots & \\
 & \ddots & \ddots & \begin{smallmatrix}z \cdot 2\cos((Q-1)\cdot\pi\alpha+\kappa_+)\end{smallmatrix}\\
 \begin{smallmatrix}z \cdot 2\cos(Q\cdot\pi\alpha+\kappa_+)\end{smallmatrix} & 
 & \begin{smallmatrix}z\inv\cdot 2\cos((Q-1)\cdot\pi\alpha+\kappa_+)\end{smallmatrix} & 0 
 \end{pmatrix},
\end{gather*}
where we have set $z=\ee^{\ii(\kappa_- + \frac{\pi}{2})}$. 
Note that, compared to Eq. \eqref{eq:temph}, we shifted $\kappa_-\mapsto \kappa_- + \frac{\pi}{2}$, so that when $(\kappa_+,\kappa_-)=(\pi/2,0)$ we are in a situation equivalent to the mid-band point representation (note that the corners of $\Hmatrix^\prime_c$ vanish). For ease of comparison, we will also change Eq. \eqref{eq:chambers} and center the mid-band point at the origin by setting $\vec{\nu}=\vec{\nu}^\prime + (\pi/2Q,\pi/2Q)$, so that the second term in the new coordinates reads
\begin{equation}\label{eq:nch}
(-1)^{Q}\cdot 2(\sin(Q\nu_x^\prime)+\sin(Q\nu_y^\prime)).
\end{equation}

We can now use Corollary \ref{cor:corncont} to quantify the corner contributions, leading us to the Chambers relation in the coordinates $\kappa_\pm$. 
Let us consider the contribution of one corner, multiplied by $z^Q+z^{-Q}$, 
\begin{equation}\label{eq:cham}
2(-1)^{Q-1}\cos(Q\Bigl(\kappa_- + \frac{\pi}{2}\Bigr))\prod_{i=1}^Q 2\cos(i\cdot\pi\alpha+\kappa_+).
\end{equation}
We claim that Eq. \eqref{eq:cham} prescribes the energy dependence on $\vec{\nu}$ (see Theorem \ref{thm:par} and its proof for the precise statement).

To prove the claim, inspired by Remark \ref{rem:cheb}, we could proceed by induction on $Q$ and use the recurrence relations of the Chebyshev polynomials in the proof. 
Although this seems viable, we will follow a more direct approach.

\begin{lemma}\label{lem:mfor}
The following identity holds:
\begin{align*}
\prod_{j=1}^Q 2\cos(\frac{j \pi P}{Q} +\kappa)&=\begin{cases} 
\exp(\ii \frac{\pi}{2}[P(Q+1)+1]) \cdot 2 \sin(Q\kappa)& \text{if $Q$ is even}\\
\exp(\ii \frac{\pi}{2}P(Q+1)) \cdot 2 \cos(Q\kappa)& \text{if $Q$ is odd}
\end{cases}\\
&=\exp(\ii \frac{\pi}{2}(P+1)(Q+1))\sin(Q(\kappa+\pi/2)).
\end{align*}
\end{lemma}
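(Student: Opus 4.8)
The plan is to reduce the product $\prod_{j=1}^Q 2\cos(j\pi P/Q+\kappa)$ to a closed form by writing each cosine in exponential form and tracking the resulting phase factor. Setting $q=\ee^{\ii\pi\alpha}=\ee^{\ii\pi P/Q}$, we have $2\cos(j\pi P/Q+\kappa)=\ee^{\ii\kappa}q^j+\ee^{-\ii\kappa}q^{-j}$, so the product becomes $\prod_{j=1}^Q(\ee^{\ii\kappa}q^j+\ee^{-\ii\kappa}q^{-j})$. Factoring $q^{-j}$ out of the $j$-th term gives $q^{-\sum_{j=1}^Q j}\prod_{j=1}^Q(\ee^{\ii\kappa}q^{2j}+\ee^{-\ii\kappa})$, i.e. $q^{-Q(Q+1)/2}\prod_{j=1}^Q(\ee^{\ii\kappa}q^{2j}+\ee^{-\ii\kappa})$. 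Since $q^2=\ee^{2\ii\pi P/Q}=\omega$ is a primitive $Q$-th root of unity (as $\gcd(P,Q)=1$), the numbers $\{q^{2j}\}_{j=1}^Q$ run over all $Q$-th roots of unity, so $\prod_{j=1}^Q(\ee^{\ii\kappa}q^{2j}+\ee^{-\ii\kappa})=\ee^{\ii Q\kappa}\prod_{j=1}^Q(q^{2j}+\ee^{-2\ii\kappa})=\ee^{\ii Q\kappa}\bigl((-\ee^{-2\ii\kappa})^Q\cdot(-1)^Q+\text{(correction)}\bigr)$ — more precisely, using $\prod_{\zeta^Q=1}(X-\zeta)=X^Q-1$ with $X=-\ee^{-2\ii\kappa}$, we get $\prod_{j=1}^Q(q^{2j}+\ee^{-2\ii\kappa})=(-1)^Q\bigl((-\ee^{-2\ii\kappa})^Q-1\bigr)=(-1)^Q\bigl((-1)^Q\ee^{-2\ii Q\kappa}-1\bigr)=\ee^{-2\ii Q\kappa}-(-1)^Q$.

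Combining, the product equals $q^{-Q(Q+1)/2}\,\ee^{\ii Q\kappa}\bigl(\ee^{-2\ii Q\kappa}-(-1)^Q\bigr)=q^{-Q(Q+1)/2}\bigl(\ee^{-\ii Q\kappa}-(-1)^Q\ee^{\ii Q\kappa}\bigr)$. Now split on the parity of $Q$: if $Q$ is odd, $\ee^{-\ii Q\kappa}+\ee^{\ii Q\kappa}=2\cos(Q\kappa)$; if $Q$ is even, $\ee^{-\ii Q\kappa}-\ee^{\ii Q\kappa}=-2\ii\sin(Q\kappa)$. It remains to identify the prefactor. We have $q^{-Q(Q+1)/2}=\ee^{-\ii\pi P(Q+1)/2}$, using that $Q(Q+1)/2$ is an integer multiple adjustment — when $Q$ is odd, $Q(Q+1)/2=(Q+1)/2\cdot Q$, and $q^{-Q(Q+1)/2}=(q^Q)^{-(Q+1)/2}=\ee^{-\ii\pi P(Q+1)/2}$ since $q^Q=\ee^{\ii\pi P}$; when $Q$ is even the exponent $Q(Q+1)/2$ is $(Q/2)(Q+1)$, and a parallel manipulation gives $\ee^{-\ii\pi P(Q+1)/2}$ again. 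Multiplying this by the $-\ii$ coming from the even-$Q$ sine and by $-1$ from $-(-1)^Q$ in the odd case produces exactly the stated phases $\exp(\ii\frac{\pi}{2}[P(Q+1)+1])$ and $\exp(\ii\frac{\pi}{2}P(Q+1))$ respectively; one checks these two cases are unified by $\exp(\ii\frac{\pi}{2}(P+1)(Q+1))\sin(Q(\kappa+\pi/2))$ by expanding $(P+1)(Q+1)=P(Q+1)+(Q+1)$ and noting $\sin(Q\kappa+Q\pi/2)$ equals $\pm\cos(Q\kappa)$ or $\pm\sin(Q\kappa)$ according to $Q\bmod 4$, with the extra $\ee^{\ii\pi(Q+1)/2}$ absorbing the sign.

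The main obstacle is purely bookkeeping: getting every sign and every power of $\ii$ right across the four residue classes of $Q$ modulo $4$ (and the parity of $P$), since the phase $\exp(\ii\frac{\pi}{2}(P+1)(Q+1))$ is only well-defined modulo $4$ in the exponent and one must verify the claimed unified formula agrees with the case split in all instances. I would handle this by tabulating $Q\bmod 4$ against $P\bmod 2$ and checking the eight entries directly, rather than trying to manipulate the trigonometric identities symbolically. The substantive content — that $\{q^{2j}\}$ exhausts the $Q$-th roots of unity and hence the product telescopes via $X^Q-1$ — is the only place coprimality of $P$ and $Q$ enters, and it is immediate.
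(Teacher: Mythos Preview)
Your approach is essentially the same as the paper's: write each cosine as $\ee^{\ii\kappa}q^j+\ee^{-\ii\kappa}q^{-j}$, factor out $q^{-j}$, and collapse the remaining product using that $\{q^{2j}\}_{j=1}^Q$ exhausts the $Q$-th roots of unity via $\prod_\zeta(X-\zeta)=X^Q-1$. The only cosmetic difference is that the paper pulls out an additional $\xi^{2j}$ from each factor (and uses the inversion symmetry $\xi^{-2j}\leftrightarrow\xi^{2j}$ of the roots) so as to land on $\xi^{+Q(Q+1)/2}=\ee^{+\ii\pi P(Q+1)/2}$ directly, whereas you obtain $q^{-Q(Q+1)/2}=\ee^{-\ii\pi P(Q+1)/2}$; these agree because in both parity cases the relevant exponent $P(Q+1)/2$ (resp.\ $[P(Q+1)+1]/2$) is an integer, so $\ee^{\pm\ii\pi n}$ coincide. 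Your remark that ``$-1$ from $-(-1)^Q$ in the odd case'' is a slip (for $Q$ odd that factor is $+1$), but this is exactly the bookkeeping you flag for tabulation, and the computation itself is sound.
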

Note that $P$ cannot be even when $Q$ is even, hence the exponential factor above is always equal to either $1$ or $-1$.
\begin{proof}
We use the complex exponential formula for cosine:
\[
\prod_{j=1}^Q e^{\ii\bigl(\frac{ j \pi P}{Q} +\kappa\bigr)}+e^{-\ii\bigl(\frac{\ii j \pi P}{Q} +\kappa\bigr)}=\prod_{j=1}^Q\xi^{-j}e^{-\ii\kappa}(\xi^{2j}e^{2\ii\kappa}+1),
\]
where we set $\xi:=e^{\ii\frac{\pi P}{Q}}$. Note that $\xi^2$ is a primitive root of unity. We continue the computation:
\[
\xi^{-(1+\cdots+Q)}e^{-\ii Q\kappa}\prod_{j=1}^Q \xi^{2j} (e^{2\ii\kappa}- (-\xi^{-2j}))=\xi^{\frac{Q(Q+1)}{2}}e^{-\ii Q\kappa}\prod_{j=1}^Q (e^{2\ii\kappa}- (-\xi^{2j})),
\]
where in the last equality we used that roots of unity are symmetric with respect to the real axis (i.e., they are invariant under inversion). Considering the last product as a polynomial in the variable $e^{2\ii\kappa}$, we see that
\[
\prod_{j=1}^Q (e^{2\ii\kappa}- (-\xi^{2j}))=e^{2\ii Q\kappa} \pm 1,
\]
where the upper sign corresponds to the case where $Q$ is odd (i.e., $-\xi^2$ is a root of $-1$), and the lower sign corresponds to the case where $Q$ is even (in this case roots of unity are symmetric with respect to the origin, that is invariant under negation). Thus, we can conclude as follows:
\[
\xi^{\frac{Q(Q+1)}{2}}e^{-\ii Q\kappa}(e^{2\ii Q\kappa} \pm 1)=e^{\ii\frac{\pi}{2}P(Q+1)}(e^{\ii Q\kappa} \pm e^{-\ii Q\kappa}).
\]

\end{proof}

After this auxiliary result, we are ready to prove the last theorem of the paper.

\begin{theorem}\label{thm:par}
Consider the change of coordinates 
$\kappa_\pm=\frac12\epsilon(P,Q){((-1)^Q \nu^\prime_x\pm \nu^\prime_y)}$, 
where $\epsilon$ is a sign given as 
\[ 
\epsilon(P,Q)=\begin{cases}
(-1)^{ \frac{P-1}{2}} & \text{if $Q$ is even}\\
(-1)^{ \frac{(Q+1)(P+1)}{2}} & \text{if $Q$ is odd.}
\end{cases}
\]
Then the Chambers relation in the representation $\zeta_c$ is
\[
\det(\Hmatrix^\prime_c(\kappa_+,\kappa_-)-E)=\det(\Hmatrix^\prime_c(0,0)-E)+g(P,Q,\kappa_+,\kappa_-),
\]
where $g$ is a function determined as follows:
\[ 
g(P,Q,\kappa_+,\kappa_-)=\begin{cases}
\epsilon(P,Q) \cdot 4 \cos(Q\kappa_-)\sin(Q\kappa_+) & \text{if $Q$ is even}\\
\epsilon(P,Q) \cdot 4 \sin(Q\kappa_-)\cos(Q\kappa_+)  & \text{if $Q$ is odd.}
\end{cases}
\]
\end{theorem}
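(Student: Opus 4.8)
The plan is to reduce the assertion to a rank-two corner computation handled by Corollary \ref{cor:corncont}, together with the product evaluation of Lemma \ref{lem:mfor}. First I would fix a convenient gauge: conjugating $\Hmatrix^\prime_c(\kappa_+,\kappa_-)$ by the diagonal matrix $\mathrm{diag}(z,z^2,\dots,z^Q)$ with $z=\ee^{\ii(\kappa_-+\pi/2)}$ leaves the characteristic polynomial unchanged and replaces the matrix by one, say $B$, with zero diagonal, \emph{symmetric} secondary diagonals equal to $2\cos(j\pi\alpha+\kappa_+)$ for $j=1,\dots,Q-1$, and corner entries $z^{\mp Q}\cdot 2\cos(Q\pi\alpha+\kappa_+)$ in positions $(1,Q)$ and $(Q,1)$; thus all of the $\kappa_-$-dependence sits in the two corners and enters only through $z^{\pm Q}=\ee^{\pm\ii Q(\kappa_-+\pi/2)}$. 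Since no permutation can use a corner entry twice, $\det(B-E)$ is a Laurent polynomial in $z^{Q}$ supported in degrees $-1,0,1$, so I write $\det(B-E)=p_0(E,\kappa_+)+(z^{Q}+z^{-Q})\,r(E,\kappa_+)$, the coincidence of the $z^{Q}$ and $z^{-Q}$ coefficients following from the transpose symmetry of $B-E$. Applying Corollary \ref{cor:corncont} to the corner-free tridiagonal part and to each corner separately shows that $r$ is independent of $E$ and, after absorbing the corner value $2\cos(Q\pi\alpha+\kappa_+)$ into the secondary-diagonal product, equals $(-1)^{Q-1}\prod_{j=1}^{Q}2\cos(j\pi\alpha+\kappa_+)$; Lemma \ref{lem:mfor} (with $\kappa=\kappa_+$) then turns this product into $\pm\,2\sin(Q\kappa_+)$ for $Q$ even and $\pm\,2\cos(Q\kappa_+)$ for $Q$ odd, the sign being the explicit phase $\exp\!\big(\ii\tfrac{\pi}{2}(P+1)(Q+1)\big)$, whose combinatorics is precisely what yields $\epsilon(P,Q)$.

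The main obstacle is the remaining claim that $p_0(E,\kappa_+)$ does \emph{not} depend on $\kappa_+$ — equivalently, that the $\kappa_+$-dependence of $\det(\Hmatrix^\prime_c-E)$ is carried entirely by the corners. Expanding the rank-two corner update (equivalently, via Jacobi's complementary-minor identity) gives $p_0(E,\kappa_+)=\det(B_0-E)-w^{2}\det(B_0'-E)$, where $B_0$ is the corner-free tridiagonal part of $B$, $B_0'$ is its restriction to the indices $2,\dots,Q-1$, and $w=2\cos(Q\pi\alpha+\kappa_+)=\pm 2\cos\kappa_+$. Feeding both determinants into Theorem \ref{thm:formula} expresses $p_0$ through the $\mode_i$'s of the numbers $4\cos^{2}(j\pi\alpha+\kappa_+)$, and the asserted $\kappa_+$-independence reduces to an identity among these symmetric functions; the governing elementary fact is that $\sum_{j=1}^{Q}\cos\!\big(2(j\pi\alpha+\kappa_+)\big)=0$ (a rotated sum over all $Q$-th roots of unity), and this cancellation propagates through the $\mode_i$'s by the recursions of Lemma \ref{lem:combsym}. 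Specialising to $\kappa_+=\pi/2$, where the corners of $B$ vanish, then identifies $p_0(E,\kappa_+)$ with $\det(\Hmatrix^\prime-E)=f(E)$ of Theorem \ref{thm:formula_conc}. (Alternatively, this step can be deduced from the classical Chambers relation in Eq. \eqref{eq:chambers}, which already guarantees that the $\vec\nu$-dependence of the characteristic polynomial is additive and $E$-free; the computation above, however, keeps the argument self-contained.)

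Finally I would assemble the pieces. From the above, $\det(\Hmatrix^\prime_c(\kappa_+,\kappa_-)-E)=f(E)+(z^{Q}+z^{-Q})\,r(\kappa_+)$, while at $(\kappa_+,\kappa_-)=(0,0)$ one has $z^{Q}+z^{-Q}=2\cos(Q\pi/2)$ and, when $Q$ is even, $r(0)=0$ (the product in Lemma \ref{lem:mfor} vanishes at $\kappa_+=0$), so in every case $\det(\Hmatrix^\prime_c(0,0)-E)=f(E)$. Subtracting yields $g=(z^{Q}+z^{-Q})\,r(\kappa_+)$; writing $z^{Q}+z^{-Q}=2\cos\!\big(Q(\kappa_-+\pi/2)\big)$, which equals $2(-1)^{Q/2}\cos(Q\kappa_-)$ for $Q$ even and $-2(-1)^{(Q-1)/2}\sin(Q\kappa_-)$ for $Q$ odd, and combining with the expression for $r(\kappa_+)$, a short parity bookkeeping — using that $P$ must be odd whenever $Q$ is even, whence $Q(P+1)\equiv 0\pmod 4$ — collapses all the signs into the single factor $\epsilon(P,Q)$ and produces exactly $g(P,Q,\kappa_+,\kappa_-)$. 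Substituting the change of coordinates $\kappa_\pm=\tfrac12\epsilon(P,Q)((-1)^{Q}\nu'_x\pm\nu'_y)$ and applying the product-to-sum formulas then turns $g$ into $2(-1)^{Q}(\sin(Q\nu'_x)+\sin(Q\nu'_y))$, i.e.\ Eq. \eqref{eq:nch}, recovering the Chambers relation in these coordinates.
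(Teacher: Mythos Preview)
Your argument is correct and follows essentially the same route as the paper: isolate the corner contribution via Corollary~\ref{cor:corncont}, evaluate the resulting cosine product with Lemma~\ref{lem:mfor}, carry out the parity bookkeeping to extract $\epsilon(P,Q)$, and verify against Eq.~\eqref{eq:nch} under the stated change of variables; your preliminary diagonal conjugation is a helpful clarification that the paper leaves implicit. The one place you diverge is the $\kappa_+$-independence of $p_0$: the paper does \emph{not} attempt a direct proof but simply invokes the classical Chambers relation Eq.~\eqref{eq:chambers} together with Proposition~\ref{prop:poltheta} (deferring the general ``offset'' property to Ref.~\onlinecite{kreft_94}), which is precisely your parenthetical alternative. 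Your self-contained sketch---that the vanishing of $\sum_{j=1}^{Q}\cos\!\bigl(2(j\pi\alpha+\kappa_+)\bigr)$ ``propagates through the $\mode_i$'s by the recursions of Lemma~\ref{lem:combsym}''---is not a proof as written, since the $\mode_i$'s depend on the cyclic ordering of their arguments and not only on symmetric functions of them; so the fallback to Eq.~\eqref{eq:chambers} is the right move and aligns with the paper.
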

Note that the change of coordinates in the odd case equals the change of coordinates in the even case up to a sign and a swap of $\kappa_+$ with $\kappa_-$.

Before the proof, let us also note that the fundamental property of the Chambers relation, namely the relation 
\[
\det(\Hmatrix^\prime_c(x)-z)=\det \Hmatrix^\prime_c(x_0)-g(x_0)+g(x),\]
can be proved by standard techniques, see for example Ref.~\onlinecite[Section 4]{kreft_94}. We will not go into the details here and rather focus on the particular form of the ``offset'' function $g$.

\begin{proof}
From Eq. \eqref{eq:chambers} and Proposition \ref{prop:poltheta} we know we only need to focus on the function $g$. From Eq. \eqref{eq:cham} and Lemma \ref{lem:mfor}, we obtain the corner contribution as follows:
\[ 
(-1)^{Q-1}\exp(\ii \frac{\pi}{2}(P+1)(Q+1)) \cdot\begin{cases} 
4 \cos(Q\kappa_-)\sin(Q\kappa_+)& \text{if $Q$ is even}\\
4 \sin(Q\kappa_-)\cos(Q\kappa_+)& \text{if $Q$ is odd.}
\end{cases}
\]
If $Q$ is even, $P$ must be odd, and the factor before the curly bracket above depends on whether $P+1$ is singly or doubly even. Since $(-1)^{Q-1}$ is negative, the singly even case corresponds to a positive sign, while the doubly even case gives a negative sign. 
Overall, the sign is $(-1)^{\frac{P-1}{2}}$. 
When $Q$ is odd, the part preceding the curly bracket above is positive if $Q$ is congruent to $3$ modulo $4$, i.e., if $\lceil \frac{Q}{2}\rceil$ is even. 
In the other case, the sign depends on the parity of $P+1$. 
Overall, we see the sign can be expressed as $(-1)^{ \frac{(Q+1)(P+1)}{2}}$. 
It is left to verify that $g$ corresponds to Eq. \eqref{eq:nch}. Under the change of variables in the statement, we have that
\[
\nu^\prime_x=(-1)^Q\epsilon (\kappa_+ +\kappa_-),\quad \nu^\prime_y=(-1)^Q\epsilon [(-1)^Q(\kappa_+ - \kappa_-)].
\]
Since $\sin(-x)=-\sin(x)$, and using standard trigonometric formulas, Eq. \eqref{eq:nch} gets transformed to
\[
2\epsilon[\sin(\kappa_+ +\kappa_-)+\sin((-1)^Q(\kappa_+ - \kappa_-))]=\begin{cases}
4\epsilon \cos(Q\kappa_-)\sin(\kappa_+) & \text{if $Q$ is even}\\
4\epsilon \sin(Q\kappa_-)\cos(\kappa_+)  & \text{if $Q$ is odd.}
\end{cases}
\]
\end{proof}

Let us end this section with a small remark about the gauge associated with $\Hmatrix^\prime$ (similar considerations hold for $\Hmatrix_c^\prime$). As noted in Ref.~\onlinecite{wiegmann_bethe-ansatz_1994}, the Harper equation at the mid-band point in this setting becomes a difference functional equation:
\[
\ii(z\inv+qz)\Psi(qz)-\ii(z\inv+q\inv z)\Psi(q\inv z)=E\Psi(z).
\]
Recall that the form of the vector potential is chosen to be $\vec{A}= \alpha/2\cdot (-x-y,\, x+y+1)$. 
Hence a convenient coordinate for the Bloch wave function is $l=n+m$. The Harper equation for $\psi=(\psi_l)$ is obtained from the equation above by setting $z=q^l$ and $\psi_l=\Psi(q^l)$. Notice that $l$ ranges in $1,\dots, 2Q$, presenting a ``doubling'' of the period in comparison with the Harper equation in Section \ref{sec:prelims}. Clearly, this is also reflected in the entries of $\Hmatrix^\prime$ where $\pi\alpha$ (rather than $2\pi\alpha$) appears.

On this basis, it is arguable that the representation $\zeta_c$ should be extended to matrices of order $2Q$ and that the characteristic polynomial of $\Hmatrix^\prime$ should be computed under this convention. As the sine and cosine functions are $\pi$-periodic up to a sign, it is not hard to prove that the resulting characteristic polynomial is the square of the one computed above. 

More precisely, using Theorems \ref{thm:formula} \& \ref{thm:formula_conc}, we have that $b_j=\sin^2(j\pi\alpha)/\sin^2(\pi\alpha)$, so that $b_Q=0$, and $b_{Q+j}=b_j$. Since the variables $b_j$ enter the formula only through 
polynomials which are a
modification of the elementary symmetric polynomials, these are computed by expanding the linear factorization of the monic polynomial $\lambda^2 \prod_1^{Q-1} (\lambda-b_i)^2$.

From here it can be seen that $\det(\Hmatrix^\prime_{2Q}-E)=\left(\det(\Hmatrix^\prime_{Q}-E)\right)^2$, from which we deduce that the period doubling in the Schrödinger equation above does not affect the eigenvalues up to multiplicity.

\section*{Conclusions}

Concluding, we discussed and clarified the spectral relationship between the Hofstadter model in condensed matter physics and the relativistic Toda lattice in high-energy physics found by Hatsuda, Katsura, and Tachikawa\cite{hatsuda_hofstadters_2016} in the framework of the representation theory of the elementary quantum group.
Furthermore, we derived a formula parametrizing the energy spectrum of the Hofstadter model in the Brillouin zone in terms of elementary symmetric polynomials and Chebyshev polynomials, building on previous work on 
the Hofstadter model by Wiegmann and Zabrodin\cite{wiegmann_bethe-ansatz_1994} 
and on
tridiagonal matrices by Molinari\cite{mol:tri}.
We hope that our work will serve as a basis for a deeper understanding of 
the self-similarity properties
of the Hofstadter butterfly and contribute to shed light on the connection between the Hofstadter and the relativistic Toda lattice models and, more generally, on the connection between noncommutative quantum geometry and the quantum world.

\subsection*{Acknowledgements}
P.~M. thanks Hosho Katsura for useful discussions.
P.~M. is supported by the Japan Science and Technology Agency (JST) of the Ministry of Education, Culture, Sports, Science and Technology (MEXT), CREST Grant~No.~JPMJCR19T2, the Japan Society for the Promotion of Science (JSPS) Grant-in-Aid for Early-Career Scientists KAKENHI Grants No.~JP23K13028 and No.~JP20K14375.
V.~P.~is supported by the JST CREST Grant.~No.~JPMJCR19T2 and by Marie Skłodowska-Curie Individual Fellowship (project number 101063362).
X.~S.~is partially supported by KAKENHI Grant No. JP21K03259 and Grants from Joint Project of OIST, Hikami Unit and the University of Tokyo.

\subsection*{Data Availability Statement}

Data sharing is not applicable to this article as no new data were created or analyzed in this study.


\begin{thebibliography}{51}%
\makeatletter
\providecommand \@ifxundefined [1]{%
 \@ifx{#1\undefined}
}%
\providecommand \@ifnum [1]{%
 \ifnum #1\expandafter \@firstoftwo
 \else \expandafter \@secondoftwo
 \fi
}%
\providecommand \@ifx [1]{%
 \ifx #1\expandafter \@firstoftwo
 \else \expandafter \@secondoftwo
 \fi
}%
\providecommand \natexlab [1]{#1}%
\providecommand \enquote  [1]{``#1''}%
\providecommand \bibnamefont  [1]{#1}%
\providecommand \bibfnamefont [1]{#1}%
\providecommand \citenamefont [1]{#1}%
\providecommand \href@noop [0]{\@secondoftwo}%
\providecommand \href [0]{\begingroup \@sanitize@url \@href}%
\providecommand \@href[1]{\@@startlink{#1}\@@href}%
\providecommand \@@href[1]{\endgroup#1\@@endlink}%
\providecommand \@sanitize@url [0]{\catcode `\\12\catcode `\$12\catcode
  `\&12\catcode `\#12\catcode `\^12\catcode `\_12\catcode `\%12\relax}%
\providecommand \@@startlink[1]{}%
\providecommand \@@endlink[0]{}%
\providecommand \url  [0]{\begingroup\@sanitize@url \@url }%
\providecommand \@url [1]{\endgroup\@href {#1}{\urlprefix }}%
\providecommand \urlprefix  [0]{URL }%
\providecommand \Eprint [0]{\href }%
\providecommand \doibase [0]{http://dx.doi.org/}%
\providecommand \selectlanguage [0]{\@gobble}%
\providecommand \bibinfo  [0]{\@secondoftwo}%
\providecommand \bibfield  [0]{\@secondoftwo}%
\providecommand \translation [1]{[#1]}%
\providecommand \BibitemOpen [0]{}%
\providecommand \bibitemStop [0]{}%
\providecommand \bibitemNoStop [0]{.\EOS\space}%
\providecommand \EOS [0]{\spacefactor3000\relax}%
\providecommand \BibitemShut  [1]{\csname bibitem#1\endcsname}%
\let\auto@bib@innerbib\@empty
\bibitem [{\citenamefont {Toda}(1967)}]{toda_vibration_1967}%
  \BibitemOpen
  \bibfield  {author} {\bibinfo {author} {\bibfnamefont {M.}~\bibnamefont
  {Toda}},\ }\bibfield  {title} {\enquote {\bibinfo {title} {Vibration of a
  chain with nonlinear interaction},}\ }\href {\doibase 10.1143/JPSJ.22.431}
  {\bibfield  {journal} {\bibinfo  {journal} {J. Phys. Soc. Jpn.}\ }\textbf
  {\bibinfo {volume} {22}},\ \bibinfo {pages} {431--436} (\bibinfo {year}
  {1967})}\BibitemShut {NoStop}%
\bibitem [{\citenamefont {Ruijsenaars}(1990)}]{ruijsenaars_relativistic_1990}%
  \BibitemOpen
  \bibfield  {author} {\bibinfo {author} {\bibfnamefont {S.~N.~M.}\
  \bibnamefont {Ruijsenaars}},\ }\bibfield  {title} {\enquote {\bibinfo {title}
  {Relativistic {Toda} systems},}\ }\href {\doibase 10.1007/BF02097366}
  {\bibfield  {journal} {\bibinfo  {journal} {Commun. Math. Phys.}\ }\textbf
  {\bibinfo {volume} {133}},\ \bibinfo {pages} {217--247} (\bibinfo {year}
  {1990})}\BibitemShut {NoStop}%
\bibitem [{\citenamefont {Harper}(1955)}]{harper_single_1955}%
  \BibitemOpen
  \bibfield  {author} {\bibinfo {author} {\bibfnamefont {P.~G.}\ \bibnamefont
  {Harper}},\ }\bibfield  {title} {\enquote {\bibinfo {title} {Single band
  motion of conduction electrons in a uniform magnetic field},}\ }\href
  {\doibase 10.1088/0370-1298/68/10/304} {\bibfield  {journal} {\bibinfo
  {journal} {Proc. Phys. Soc. A}\ }\textbf {\bibinfo {volume} {68}},\ \bibinfo
  {pages} {874--878} (\bibinfo {year} {1955})}\BibitemShut {NoStop}%
\bibitem [{\citenamefont {{Hofstadter}}(1976)}]{hofstadter_energy_1976}%
  \BibitemOpen
  \bibfield  {author} {\bibinfo {author} {\bibfnamefont {D.~R.}\ \bibnamefont
  {{Hofstadter}}},\ }\bibfield  {title} {\enquote {\bibinfo {title} {Energy
  levels and wave functions of {Bloch} electrons in rational and irrational
  magnetic fields},}\ }\href {\doibase 10.1103/PhysRevB.14.2239} {\bibfield
  {journal} {\bibinfo  {journal} {Phys. Rev. B}\ }\textbf {\bibinfo {volume}
  {14}},\ \bibinfo {pages} {2239--2249} (\bibinfo {year} {1976})}\BibitemShut
  {NoStop}%
\bibitem [{\citenamefont {Aubry}\ and\ \citenamefont
  {Andr{\'e}}(1980)}]{aubry_analyticity_1980}%
  \BibitemOpen
  \bibfield  {author} {\bibinfo {author} {\bibfnamefont {S.}~\bibnamefont
  {Aubry}}\ and\ \bibinfo {author} {\bibfnamefont {G.}~\bibnamefont
  {Andr{\'e}}},\ }\bibfield  {title} {\enquote {\bibinfo {title} {Analyticity
  breaking and {Anderson} localization in incommensurate lattices},}\
  }\href@noop {} {\bibfield  {journal} {\bibinfo  {journal} {Ann. Israel. Phys.
  Soc.}\ }\textbf {\bibinfo {volume} {3}},\ \bibinfo {pages} {133} (\bibinfo
  {year} {1980})}\BibitemShut {NoStop}%
\bibitem [{\citenamefont {Nekrasov}(1998)}]{nekrasov_five-dimensional_1998}%
  \BibitemOpen
  \bibfield  {author} {\bibinfo {author} {\bibfnamefont {N.}~\bibnamefont
  {Nekrasov}},\ }\bibfield  {title} {\enquote {\bibinfo {title}
  {Five-dimensional gauge theories and relativistic integrable systems},}\
  }\href {\doibase https://doi.org/10.1016/S0550-3213(98)00436-2} {\bibfield
  {journal} {\bibinfo  {journal} {Nucl. Phys. B}\ }\textbf {\bibinfo {volume}
  {531}},\ \bibinfo {pages} {323--344} (\bibinfo {year} {1998})}\BibitemShut
  {NoStop}%
\bibitem [{\citenamefont {Katz}, \citenamefont {Klemm},\ and\ \citenamefont
  {Vafa}(1997)}]{katz_geometric_1997}%
  \BibitemOpen
  \bibfield  {author} {\bibinfo {author} {\bibfnamefont {S.}~\bibnamefont
  {Katz}}, \bibinfo {author} {\bibfnamefont {A.}~\bibnamefont {Klemm}}, \ and\
  \bibinfo {author} {\bibfnamefont {C.}~\bibnamefont {Vafa}},\ }\bibfield
  {title} {\enquote {\bibinfo {title} {Geometric engineering of quantum field
  theories},}\ }\href {\doibase https://doi.org/10.1016/S0550-3213(97)00282-4}
  {\bibfield  {journal} {\bibinfo  {journal} {Nucl. Phys. B}\ }\textbf
  {\bibinfo {volume} {497}},\ \bibinfo {pages} {173--195} (\bibinfo {year}
  {1997})}\BibitemShut {NoStop}%
\bibitem [{\citenamefont {Aganagic}\ \emph {et~al.}(2012)\citenamefont
  {Aganagic}, \citenamefont {Cheng}, \citenamefont {Dijkgraaf}, \citenamefont
  {Krefl},\ and\ \citenamefont {Vafa}}]{aganagic_quantum_2012}%
  \BibitemOpen
  \bibfield  {author} {\bibinfo {author} {\bibfnamefont {M.}~\bibnamefont
  {Aganagic}}, \bibinfo {author} {\bibfnamefont {M.~C.~N.}\ \bibnamefont
  {Cheng}}, \bibinfo {author} {\bibfnamefont {R.}~\bibnamefont {Dijkgraaf}},
  \bibinfo {author} {\bibfnamefont {D.}~\bibnamefont {Krefl}}, \ and\ \bibinfo
  {author} {\bibfnamefont {C.}~\bibnamefont {Vafa}},\ }\bibfield  {title}
  {\enquote {\bibinfo {title} {Quantum geometry of refined topological
  strings},}\ }\href {\doibase 10.1007/JHEP11(2012)019} {\bibfield  {journal}
  {\bibinfo  {journal} {J. High Energy Phys.}\ }\textbf {\bibinfo {volume}
  {2012}},\ \bibinfo {pages} {19} (\bibinfo {year} {2012})}\BibitemShut
  {NoStop}%
\bibitem [{\citenamefont {Hatsuda}\ and\ \citenamefont
  {Mari{\~n}o}(2016)}]{hatsuda_exact_2016}%
  \BibitemOpen
  \bibfield  {author} {\bibinfo {author} {\bibfnamefont {Y.}~\bibnamefont
  {Hatsuda}}\ and\ \bibinfo {author} {\bibfnamefont {M.}~\bibnamefont
  {Mari{\~n}o}},\ }\bibfield  {title} {\enquote {\bibinfo {title} {Exact
  quantization conditions for the relativistic {Toda} lattice},}\ }\href
  {\doibase 10.1007/JHEP05(2016)133} {\bibfield  {journal} {\bibinfo  {journal}
  {J. High Energy Phys.}\ }\textbf {\bibinfo {volume} {2016}},\ \bibinfo
  {pages} {133} (\bibinfo {year} {2016})}\BibitemShut {NoStop}%
\bibitem [{\citenamefont {Hatsuda}, \citenamefont {Katsura},\ and\
  \citenamefont {Tachikawa}(2016)}]{hatsuda_hofstadters_2016}%
  \BibitemOpen
  \bibfield  {author} {\bibinfo {author} {\bibfnamefont {Y.}~\bibnamefont
  {Hatsuda}}, \bibinfo {author} {\bibfnamefont {H.}~\bibnamefont {Katsura}}, \
  and\ \bibinfo {author} {\bibfnamefont {Y.}~\bibnamefont {Tachikawa}},\
  }\bibfield  {title} {\enquote {\bibinfo {title} {{Hofstadter}'s butterfly in
  quantum geometry},}\ }\href {\doibase 10.1088/1367-2630/18/10/103023}
  {\bibfield  {journal} {\bibinfo  {journal} {New J. Phys.}\ }\textbf {\bibinfo
  {volume} {18}},\ \bibinfo {pages} {103023} (\bibinfo {year}
  {2016})}\BibitemShut {NoStop}%
\bibitem [{\citenamefont {Hatsuda}, \citenamefont {Sugimoto},\ and\
  \citenamefont {Xu}(2017)}]{hatsuda_calabi-yau_2017}%
  \BibitemOpen
  \bibfield  {author} {\bibinfo {author} {\bibfnamefont {Y.}~\bibnamefont
  {Hatsuda}}, \bibinfo {author} {\bibfnamefont {Y.}~\bibnamefont {Sugimoto}}, \
  and\ \bibinfo {author} {\bibfnamefont {Z.}~\bibnamefont {Xu}},\ }\bibfield
  {title} {\enquote {\bibinfo {title} {Calabi-{Yau} geometry and electrons on
  2d lattices},}\ }\href {\doibase 10.1103/PhysRevD.95.086004} {\bibfield
  {journal} {\bibinfo  {journal} {Phys. Rev. D}\ }\textbf {\bibinfo {volume}
  {95}},\ \bibinfo {pages} {086004} (\bibinfo {year} {2017})}\BibitemShut
  {NoStop}%
\bibitem [{\citenamefont {Osadchy}\ and\ \citenamefont
  {Avron}(2001)}]{osadchy_hofstadter_2001}%
  \BibitemOpen
  \bibfield  {author} {\bibinfo {author} {\bibfnamefont {D.}~\bibnamefont
  {Osadchy}}\ and\ \bibinfo {author} {\bibfnamefont {J.~E.}\ \bibnamefont
  {Avron}},\ }\bibfield  {title} {\enquote {\bibinfo {title} {{Hofstadter
  butterfly as quantum phase diagram}},}\ }\href {\doibase 10.1063/1.1412464}
  {\bibfield  {journal} {\bibinfo  {journal} {J. Math. Phys.}\ }\textbf
  {\bibinfo {volume} {42}},\ \bibinfo {pages} {5665--5671} (\bibinfo {year}
  {2001})}\BibitemShut {NoStop}%
\bibitem [{\citenamefont {Thouless}\ \emph {et~al.}(1982)\citenamefont
  {Thouless}, \citenamefont {Kohmoto}, \citenamefont {Nightingale},\ and\
  \citenamefont {den Nijs}}]{thouless_quantized_1982}%
  \BibitemOpen
  \bibfield  {author} {\bibinfo {author} {\bibfnamefont {D.~J.}\ \bibnamefont
  {Thouless}}, \bibinfo {author} {\bibfnamefont {M.}~\bibnamefont {Kohmoto}},
  \bibinfo {author} {\bibfnamefont {M.~P.}\ \bibnamefont {Nightingale}}, \ and\
  \bibinfo {author} {\bibfnamefont {M.}~\bibnamefont {den Nijs}},\ }\bibfield
  {title} {\enquote {\bibinfo {title} {Quantized {Hall} conductance in a
  two-dimensional periodic potential},}\ }\href {\doibase
  10.1103/PhysRevLett.49.405} {\bibfield  {journal} {\bibinfo  {journal} {Phys.
  Rev. Lett.}\ }\textbf {\bibinfo {volume} {49}},\ \bibinfo {pages} {405--408}
  (\bibinfo {year} {1982})}\BibitemShut {NoStop}%
\bibitem [{\citenamefont {Thouless}(1983)}]{thouless_quantization_1983}%
  \BibitemOpen
  \bibfield  {author} {\bibinfo {author} {\bibfnamefont {D.~J.}\ \bibnamefont
  {Thouless}},\ }\bibfield  {title} {\enquote {\bibinfo {title} {Quantization
  of particle transport},}\ }\href {\doibase 10.1103/PhysRevB.27.6083}
  {\bibfield  {journal} {\bibinfo  {journal} {Phys. Rev. B}\ }\textbf {\bibinfo
  {volume} {27}},\ \bibinfo {pages} {6083--6087} (\bibinfo {year}
  {1983})}\BibitemShut {NoStop}%
\bibitem [{\citenamefont {Kraus}\ \emph {et~al.}(2012)\citenamefont {Kraus},
  \citenamefont {Lahini}, \citenamefont {Ringel}, \citenamefont {Verbin},\ and\
  \citenamefont {Zilberberg}}]{kraus_topological_2012}%
  \BibitemOpen
  \bibfield  {author} {\bibinfo {author} {\bibfnamefont {Y.~E.}\ \bibnamefont
  {Kraus}}, \bibinfo {author} {\bibfnamefont {Y.}~\bibnamefont {Lahini}},
  \bibinfo {author} {\bibfnamefont {Z.}~\bibnamefont {Ringel}}, \bibinfo
  {author} {\bibfnamefont {M.}~\bibnamefont {Verbin}}, \ and\ \bibinfo {author}
  {\bibfnamefont {O.}~\bibnamefont {Zilberberg}},\ }\bibfield  {title}
  {\enquote {\bibinfo {title} {Topological states and adiabatic pumping in
  quasicrystals},}\ }\href {\doibase 10.1103/PhysRevLett.109.106402} {\bibfield
   {journal} {\bibinfo  {journal} {Phys. Rev. Lett.}\ }\textbf {\bibinfo
  {volume} {109}},\ \bibinfo {pages} {106402} (\bibinfo {year}
  {2012})}\BibitemShut {NoStop}%
\bibitem [{\citenamefont {Kraus}, \citenamefont {Ringel},\ and\ \citenamefont
  {Zilberberg}(2013)}]{kraus_four-dimensional_2013}%
  \BibitemOpen
  \bibfield  {author} {\bibinfo {author} {\bibfnamefont {Y.~E.}\ \bibnamefont
  {Kraus}}, \bibinfo {author} {\bibfnamefont {Z.}~\bibnamefont {Ringel}}, \
  and\ \bibinfo {author} {\bibfnamefont {O.}~\bibnamefont {Zilberberg}},\
  }\bibfield  {title} {\enquote {\bibinfo {title} {Four-dimensional quantum
  {Hall} effect in a two-dimensional quasicrystal},}\ }\href {\doibase
  10.1103/PhysRevLett.111.226401} {\bibfield  {journal} {\bibinfo  {journal}
  {Phys. Rev. Lett.}\ }\textbf {\bibinfo {volume} {111}},\ \bibinfo {pages}
  {226401} (\bibinfo {year} {2013})}\BibitemShut {NoStop}%
\bibitem [{\citenamefont {Marra}\ and\ \citenamefont
  {Nitta}(2020)}]{marra_topologically_2020}%
  \BibitemOpen
  \bibfield  {author} {\bibinfo {author} {\bibfnamefont {P.}~\bibnamefont
  {Marra}}\ and\ \bibinfo {author} {\bibfnamefont {M.}~\bibnamefont {Nitta}},\
  }\bibfield  {title} {\enquote {\bibinfo {title} {Topologically quantized
  current in quasiperiodic {Thouless} pumps},}\ }\href {\doibase
  10.1103/PhysRevResearch.2.042035} {\bibfield  {journal} {\bibinfo  {journal}
  {Phys. Rev. Res.}\ }\textbf {\bibinfo {volume} {2}},\ \bibinfo {pages}
  {042035} (\bibinfo {year} {2020})}\BibitemShut {NoStop}%
\bibitem [{\citenamefont {Kuhl}\ and\ \citenamefont
  {St\"ockmann}(1998)}]{kuhl_microwave_1998}%
  \BibitemOpen
  \bibfield  {author} {\bibinfo {author} {\bibfnamefont {U.}~\bibnamefont
  {Kuhl}}\ and\ \bibinfo {author} {\bibfnamefont {H.-J.}\ \bibnamefont
  {St\"ockmann}},\ }\bibfield  {title} {\enquote {\bibinfo {title} {Microwave
  realization of the {Hofstadter} butterfly},}\ }\href {\doibase
  10.1103/PhysRevLett.80.3232} {\bibfield  {journal} {\bibinfo  {journal}
  {Phys. Rev. Lett.}\ }\textbf {\bibinfo {volume} {80}},\ \bibinfo {pages}
  {3232--3235} (\bibinfo {year} {1998})}\BibitemShut {NoStop}%
\bibitem [{\citenamefont {Ponomarenko}\ \emph {et~al.}(2013)\citenamefont
  {Ponomarenko}, \citenamefont {Gorbachev}, \citenamefont {Yu}, \citenamefont
  {Elias}, \citenamefont {Jalil}, \citenamefont {Patel}, \citenamefont
  {Mishchenko}, \citenamefont {Mayorov}, \citenamefont {Woods}, \citenamefont
  {Wallbank}, \citenamefont {Mucha-Kruczynski}, \citenamefont {Piot},
  \citenamefont {Potemski}, \citenamefont {Grigorieva}, \citenamefont
  {Novoselov}, \citenamefont {Guinea}, \citenamefont {Fal'ko},\ and\
  \citenamefont {Geim}}]{ponomarenko_cloning_2013}%
  \BibitemOpen
  \bibfield  {author} {\bibinfo {author} {\bibfnamefont {L.~A.}\ \bibnamefont
  {Ponomarenko}}, \bibinfo {author} {\bibfnamefont {R.~V.}\ \bibnamefont
  {Gorbachev}}, \bibinfo {author} {\bibfnamefont {G.~L.}\ \bibnamefont {Yu}},
  \bibinfo {author} {\bibfnamefont {D.~C.}\ \bibnamefont {Elias}}, \bibinfo
  {author} {\bibfnamefont {R.}~\bibnamefont {Jalil}}, \bibinfo {author}
  {\bibfnamefont {A.~A.}\ \bibnamefont {Patel}}, \bibinfo {author}
  {\bibfnamefont {A.}~\bibnamefont {Mishchenko}}, \bibinfo {author}
  {\bibfnamefont {A.~S.}\ \bibnamefont {Mayorov}}, \bibinfo {author}
  {\bibfnamefont {C.~R.}\ \bibnamefont {Woods}}, \bibinfo {author}
  {\bibfnamefont {J.~R.}\ \bibnamefont {Wallbank}}, \bibinfo {author}
  {\bibfnamefont {M.}~\bibnamefont {Mucha-Kruczynski}}, \bibinfo {author}
  {\bibfnamefont {B.~A.}\ \bibnamefont {Piot}}, \bibinfo {author}
  {\bibfnamefont {M.}~\bibnamefont {Potemski}}, \bibinfo {author}
  {\bibfnamefont {I.~V.}\ \bibnamefont {Grigorieva}}, \bibinfo {author}
  {\bibfnamefont {K.~S.}\ \bibnamefont {Novoselov}}, \bibinfo {author}
  {\bibfnamefont {F.}~\bibnamefont {Guinea}}, \bibinfo {author} {\bibfnamefont
  {V.~I.}\ \bibnamefont {Fal'ko}}, \ and\ \bibinfo {author} {\bibfnamefont
  {A.~K.}\ \bibnamefont {Geim}},\ }\bibfield  {title} {\enquote {\bibinfo
  {title} {Cloning of {Dirac} fermions in graphene superlattices},}\ }\href
  {\doibase 10.1038/nature12187} {\bibfield  {journal} {\bibinfo  {journal}
  {Nature}\ }\textbf {\bibinfo {volume} {497}},\ \bibinfo {pages} {594--597}
  (\bibinfo {year} {2013})}\BibitemShut {NoStop}%
\bibitem [{\citenamefont {Dean}\ \emph {et~al.}(2013)\citenamefont {Dean},
  \citenamefont {Wang}, \citenamefont {Maher}, \citenamefont {Forsythe},
  \citenamefont {Ghahari}, \citenamefont {Gao}, \citenamefont {Katoch},
  \citenamefont {Ishigami}, \citenamefont {Moon}, \citenamefont {Koshino},
  \citenamefont {Taniguchi}, \citenamefont {Watanabe}, \citenamefont {Shepard},
  \citenamefont {Hone},\ and\ \citenamefont {Kim}}]{dean_hofstadters_2013}%
  \BibitemOpen
  \bibfield  {author} {\bibinfo {author} {\bibfnamefont {C.~R.}\ \bibnamefont
  {Dean}}, \bibinfo {author} {\bibfnamefont {L.}~\bibnamefont {Wang}}, \bibinfo
  {author} {\bibfnamefont {P.}~\bibnamefont {Maher}}, \bibinfo {author}
  {\bibfnamefont {C.}~\bibnamefont {Forsythe}}, \bibinfo {author}
  {\bibfnamefont {F.}~\bibnamefont {Ghahari}}, \bibinfo {author} {\bibfnamefont
  {Y.}~\bibnamefont {Gao}}, \bibinfo {author} {\bibfnamefont {J.}~\bibnamefont
  {Katoch}}, \bibinfo {author} {\bibfnamefont {M.}~\bibnamefont {Ishigami}},
  \bibinfo {author} {\bibfnamefont {P.}~\bibnamefont {Moon}}, \bibinfo {author}
  {\bibfnamefont {M.}~\bibnamefont {Koshino}}, \bibinfo {author} {\bibfnamefont
  {T.}~\bibnamefont {Taniguchi}}, \bibinfo {author} {\bibfnamefont
  {K.}~\bibnamefont {Watanabe}}, \bibinfo {author} {\bibfnamefont {K.~L.}\
  \bibnamefont {Shepard}}, \bibinfo {author} {\bibfnamefont {J.}~\bibnamefont
  {Hone}}, \ and\ \bibinfo {author} {\bibfnamefont {P.}~\bibnamefont {Kim}},\
  }\bibfield  {title} {\enquote {\bibinfo {title} {Hofstadter's butterfly and
  the fractal quantum {Hall} effect in {Moir{\'e}} superlattices},}\ }\href
  {\doibase 10.1038/nature12186} {\bibfield  {journal} {\bibinfo  {journal}
  {Nature}\ }\textbf {\bibinfo {volume} {497}},\ \bibinfo {pages} {598--602}
  (\bibinfo {year} {2013})}\BibitemShut {NoStop}%
\bibitem [{\citenamefont {Hunt}\ \emph {et~al.}(2013)\citenamefont {Hunt},
  \citenamefont {Sanchez-Yamagishi}, \citenamefont {Young}, \citenamefont
  {Yankowitz}, \citenamefont {LeRoy}, \citenamefont {Watanabe}, \citenamefont
  {Taniguchi}, \citenamefont {Moon}, \citenamefont {Koshino}, \citenamefont
  {Jarillo-Herrero},\ and\ \citenamefont {Ashoori}}]{hunt_massive_2013}%
  \BibitemOpen
  \bibfield  {author} {\bibinfo {author} {\bibfnamefont {B.}~\bibnamefont
  {Hunt}}, \bibinfo {author} {\bibfnamefont {J.~D.}\ \bibnamefont
  {Sanchez-Yamagishi}}, \bibinfo {author} {\bibfnamefont {A.~F.}\ \bibnamefont
  {Young}}, \bibinfo {author} {\bibfnamefont {M.}~\bibnamefont {Yankowitz}},
  \bibinfo {author} {\bibfnamefont {B.~J.}\ \bibnamefont {LeRoy}}, \bibinfo
  {author} {\bibfnamefont {K.}~\bibnamefont {Watanabe}}, \bibinfo {author}
  {\bibfnamefont {T.}~\bibnamefont {Taniguchi}}, \bibinfo {author}
  {\bibfnamefont {P.}~\bibnamefont {Moon}}, \bibinfo {author} {\bibfnamefont
  {M.}~\bibnamefont {Koshino}}, \bibinfo {author} {\bibfnamefont
  {P.}~\bibnamefont {Jarillo-Herrero}}, \ and\ \bibinfo {author} {\bibfnamefont
  {R.~C.}\ \bibnamefont {Ashoori}},\ }\bibfield  {title} {\enquote {\bibinfo
  {title} {Massive {Dirac} fermions and {Hofstadter} butterfly in a van der
  {Waals} heterostructure},}\ }\href {\doibase 10.1126/science.1237240}
  {\bibfield  {journal} {\bibinfo  {journal} {Science}\ }\textbf {\bibinfo
  {volume} {340}},\ \bibinfo {pages} {1427--1430} (\bibinfo {year}
  {2013})}\BibitemShut {NoStop}%
\bibitem [{\citenamefont {Aidelsburger}\ \emph {et~al.}(2013)\citenamefont
  {Aidelsburger}, \citenamefont {Atala}, \citenamefont {Lohse}, \citenamefont
  {Barreiro}, \citenamefont {Paredes},\ and\ \citenamefont
  {Bloch}}]{aidelsburger_realization_2013}%
  \BibitemOpen
  \bibfield  {author} {\bibinfo {author} {\bibfnamefont {M.}~\bibnamefont
  {Aidelsburger}}, \bibinfo {author} {\bibfnamefont {M.}~\bibnamefont {Atala}},
  \bibinfo {author} {\bibfnamefont {M.}~\bibnamefont {Lohse}}, \bibinfo
  {author} {\bibfnamefont {J.~T.}\ \bibnamefont {Barreiro}}, \bibinfo {author}
  {\bibfnamefont {B.}~\bibnamefont {Paredes}}, \ and\ \bibinfo {author}
  {\bibfnamefont {I.}~\bibnamefont {Bloch}},\ }\bibfield  {title} {\enquote
  {\bibinfo {title} {Realization of the {Hofstadter} {Hamiltonian} with
  ultracold atoms in optical lattices},}\ }\href {\doibase
  10.1103/PhysRevLett.111.185301} {\bibfield  {journal} {\bibinfo  {journal}
  {Phys. Rev. Lett.}\ }\textbf {\bibinfo {volume} {111}},\ \bibinfo {pages}
  {185301} (\bibinfo {year} {2013})}\BibitemShut {NoStop}%
\bibitem [{\citenamefont {Roushan}\ \emph {et~al.}(2017)\citenamefont
  {Roushan}, \citenamefont {Neill}, \citenamefont {Tangpanitanon},
  \citenamefont {Bastidas}, \citenamefont {Megrant}, \citenamefont {Barends},
  \citenamefont {Chen}, \citenamefont {Chen}, \citenamefont {Chiaro},
  \citenamefont {Dunsworth}, \citenamefont {Fowler}, \citenamefont {Foxen},
  \citenamefont {Giustina}, \citenamefont {Jeffrey}, \citenamefont {Kelly},
  \citenamefont {Lucero}, \citenamefont {Mutus}, \citenamefont {Neeley},
  \citenamefont {Quintana}, \citenamefont {Sank}, \citenamefont {Vainsencher},
  \citenamefont {Wenner}, \citenamefont {White}, \citenamefont {Neven},
  \citenamefont {Angelakis},\ and\ \citenamefont
  {Martinis}}]{roushan_spectroscopic_2017}%
  \BibitemOpen
  \bibfield  {author} {\bibinfo {author} {\bibfnamefont {P.}~\bibnamefont
  {Roushan}}, \bibinfo {author} {\bibfnamefont {C.}~\bibnamefont {Neill}},
  \bibinfo {author} {\bibfnamefont {J.}~\bibnamefont {Tangpanitanon}}, \bibinfo
  {author} {\bibfnamefont {V.~M.}\ \bibnamefont {Bastidas}}, \bibinfo {author}
  {\bibfnamefont {A.}~\bibnamefont {Megrant}}, \bibinfo {author} {\bibfnamefont
  {R.}~\bibnamefont {Barends}}, \bibinfo {author} {\bibfnamefont
  {Y.}~\bibnamefont {Chen}}, \bibinfo {author} {\bibfnamefont {Z.}~\bibnamefont
  {Chen}}, \bibinfo {author} {\bibfnamefont {B.}~\bibnamefont {Chiaro}},
  \bibinfo {author} {\bibfnamefont {A.}~\bibnamefont {Dunsworth}}, \bibinfo
  {author} {\bibfnamefont {A.}~\bibnamefont {Fowler}}, \bibinfo {author}
  {\bibfnamefont {B.}~\bibnamefont {Foxen}}, \bibinfo {author} {\bibfnamefont
  {M.}~\bibnamefont {Giustina}}, \bibinfo {author} {\bibfnamefont
  {E.}~\bibnamefont {Jeffrey}}, \bibinfo {author} {\bibfnamefont
  {J.}~\bibnamefont {Kelly}}, \bibinfo {author} {\bibfnamefont
  {E.}~\bibnamefont {Lucero}}, \bibinfo {author} {\bibfnamefont
  {J.}~\bibnamefont {Mutus}}, \bibinfo {author} {\bibfnamefont
  {M.}~\bibnamefont {Neeley}}, \bibinfo {author} {\bibfnamefont
  {C.}~\bibnamefont {Quintana}}, \bibinfo {author} {\bibfnamefont
  {D.}~\bibnamefont {Sank}}, \bibinfo {author} {\bibfnamefont {A.}~\bibnamefont
  {Vainsencher}}, \bibinfo {author} {\bibfnamefont {J.}~\bibnamefont {Wenner}},
  \bibinfo {author} {\bibfnamefont {T.}~\bibnamefont {White}}, \bibinfo
  {author} {\bibfnamefont {H.}~\bibnamefont {Neven}}, \bibinfo {author}
  {\bibfnamefont {D.~G.}\ \bibnamefont {Angelakis}}, \ and\ \bibinfo {author}
  {\bibfnamefont {J.}~\bibnamefont {Martinis}},\ }\bibfield  {title} {\enquote
  {\bibinfo {title} {Spectroscopic signatures of localization with interacting
  photons in superconducting qubits},}\ }\href {\doibase
  10.1126/science.aao1401} {\bibfield  {journal} {\bibinfo  {journal}
  {Science}\ }\textbf {\bibinfo {volume} {358}},\ \bibinfo {pages} {1175--1179}
  (\bibinfo {year} {2017})}\BibitemShut {NoStop}%
\bibitem [{\citenamefont {Ikeda}(2018)}]{ikeda_hofstadters_2018}%
  \BibitemOpen
  \bibfield  {author} {\bibinfo {author} {\bibfnamefont {K.}~\bibnamefont
  {Ikeda}},\ }\bibfield  {title} {\enquote {\bibinfo {title} {{Hofstadter}'s
  butterfly and {Langlands} duality},}\ }\href {\doibase 10.1063/1.4998635}
  {\bibfield  {journal} {\bibinfo  {journal} {J. Math. Phys.}\ }\textbf
  {\bibinfo {volume} {59}},\ \bibinfo {pages} {061704} (\bibinfo {year}
  {2018})}\BibitemShut {NoStop}%
\bibitem [{\citenamefont
  {Bellissard}(2003)}]{bellissard_the-noncommutative_2003}%
  \BibitemOpen
  \bibfield  {author} {\bibinfo {author} {\bibfnamefont {J.}~\bibnamefont
  {Bellissard}},\ }\enquote {\bibinfo {title} {The noncommutative geometry of
  aperiodic solids},}\ in\ \href {\doibase 10.1142/9789812705068_0002} {\emph
  {\bibinfo {booktitle} {Geometric and Topological Methods for Quantum Field
  Theory}}}\ (\bibinfo  {publisher} {World Scientific},\ \bibinfo {year}
  {2003})\ pp.\ \bibinfo {pages} {86--156}\BibitemShut {NoStop}%
\bibitem [{\citenamefont {Putnam}(2010)}]{putnam_non-commutative_2010}%
  \BibitemOpen
  \bibfield  {author} {\bibinfo {author} {\bibfnamefont {I.~F.}\ \bibnamefont
  {Putnam}},\ }\bibfield  {title} {\enquote {\bibinfo {title} {Non-commutative
  methods for the {K}-theory of {C*}-algebras of aperiodic patterns from
  cut-and-project systems},}\ }\href {\doibase 10.1007/s00220-009-0968-0}
  {\bibfield  {journal} {\bibinfo  {journal} {Commun. Math. Phys.}\ }\textbf
  {\bibinfo {volume} {294}},\ \bibinfo {pages} {703--729} (\bibinfo {year}
  {2010})}\BibitemShut {NoStop}%
\bibitem [{\citenamefont {Loring}\ and\ \citenamefont
  {Hastings}(2011)}]{loring_disordered_2011}%
  \BibitemOpen
  \bibfield  {author} {\bibinfo {author} {\bibfnamefont {T.~A.}\ \bibnamefont
  {Loring}}\ and\ \bibinfo {author} {\bibfnamefont {M.~B.}\ \bibnamefont
  {Hastings}},\ }\bibfield  {title} {\enquote {\bibinfo {title} {Disordered
  topological insulators via {C*}-algebras},}\ }\href {\doibase
  10.1209/0295-5075/92/67004} {\bibfield  {journal} {\bibinfo  {journal} {EPL}\
  }\textbf {\bibinfo {volume} {92}},\ \bibinfo {pages} {67004} (\bibinfo {year}
  {2011})}\BibitemShut {NoStop}%
\bibitem [{\citenamefont {Oyono-Oyono}\ and\ \citenamefont
  {Petite}(2011)}]{oyono-oyono_c-algebras_2011}%
  \BibitemOpen
  \bibfield  {author} {\bibinfo {author} {\bibfnamefont {H.}~\bibnamefont
  {Oyono-Oyono}}\ and\ \bibinfo {author} {\bibfnamefont {S.}~\bibnamefont
  {Petite}},\ }\bibfield  {title} {\enquote {\bibinfo {title} {C*-algebras of
  penrose hyperbolic tilings},}\ }\href {\doibase
  https://doi.org/10.1016/j.geomphys.2010.09.019} {\bibfield  {journal}
  {\bibinfo  {journal} {J. Geom. Phys.}\ }\textbf {\bibinfo {volume} {61}},\
  \bibinfo {pages} {400--424} (\bibinfo {year} {2011})}\BibitemShut {NoStop}%
\bibitem [{\citenamefont {Wiegmann}\ and\ \citenamefont
  {Zabrodin}(1994)}]{wiegmann_bethe-ansatz_1994}%
  \BibitemOpen
  \bibfield  {author} {\bibinfo {author} {\bibfnamefont {P.~B.}\ \bibnamefont
  {Wiegmann}}\ and\ \bibinfo {author} {\bibfnamefont {A.~V.}\ \bibnamefont
  {Zabrodin}},\ }\bibfield  {title} {\enquote {\bibinfo {title} {Bethe-ansatz
  for the {Bloch} electron in magnetic field},}\ }\href {\doibase
  10.1103/PhysRevLett.72.1890} {\bibfield  {journal} {\bibinfo  {journal}
  {Phys. Rev. Lett.}\ }\textbf {\bibinfo {volume} {72}},\ \bibinfo {pages}
  {1890--1893} (\bibinfo {year} {1994})}\BibitemShut {NoStop}%
\bibitem [{\citenamefont {Hatsugai}, \citenamefont {Kohmoto},\ and\
  \citenamefont {Wu}(1996)}]{hatsugai_quantum_1996}%
  \BibitemOpen
  \bibfield  {author} {\bibinfo {author} {\bibfnamefont {Y.}~\bibnamefont
  {Hatsugai}}, \bibinfo {author} {\bibfnamefont {M.}~\bibnamefont {Kohmoto}}, \
  and\ \bibinfo {author} {\bibfnamefont {Y.-S.}\ \bibnamefont {Wu}},\
  }\bibfield  {title} {\enquote {\bibinfo {title} {Quantum group, {Bethe}
  ansatz equations, and {Bloch} wave functions in magnetic fields},}\ }\href
  {\doibase 10.1103/PhysRevB.53.9697} {\bibfield  {journal} {\bibinfo
  {journal} {Phys. Rev. B}\ }\textbf {\bibinfo {volume} {53}},\ \bibinfo
  {pages} {9697--9712} (\bibinfo {year} {1996})}\BibitemShut {NoStop}%
\bibitem [{\citenamefont {Molinari}(1997)}]{mol:tri}%
  \BibitemOpen
  \bibfield  {author} {\bibinfo {author} {\bibfnamefont {L.}~\bibnamefont
  {Molinari}},\ }\bibfield  {title} {\enquote {\bibinfo {title} {Transfer
  matrices and tridiagonal-block {Hamiltonians} with periodic and scattering
  boundary conditions},}\ }\href {\doibase 10.1088/0305-4470/30/3/021}
  {\bibfield  {journal} {\bibinfo  {journal} {J. Phys. A, Math. Gen.}\ }\textbf
  {\bibinfo {volume} {30}},\ \bibinfo {pages} {983--997} (\bibinfo {year}
  {1997})}\BibitemShut {NoStop}%
\bibitem [{\citenamefont {Kreft}(1993)}]{kreft_93}%
  \BibitemOpen
  \bibfield  {author} {\bibinfo {author} {\bibfnamefont {C.}~\bibnamefont
  {Kreft}},\ }\bibfield  {title} {\enquote {\bibinfo {title} {{Explicit
  computation of the discriminant for the Harper equation with rational
  flux}},}\ }\href@noop {} {\bibfield  {journal} {\bibinfo  {journal}
  {Differentialgeometrie und Quantenphysik}\ ,\ \bibinfo {pages}
  {SFB--288--89}} (\bibinfo {year} {1993})}\BibitemShut {NoStop}%
\bibitem [{\citenamefont {Sylvester}(2011)}]{sylv:three}%
  \BibitemOpen
  \bibfield  {author} {\bibinfo {author} {\bibfnamefont {J.~J.}\ \bibnamefont
  {Sylvester}},\ }\href@noop {} {\emph {\bibinfo {title} {The collected
  mathematical papers of {James} {Joseph} {Sylvester}. {Volume} {III}:
  (1870--1883). {Edited} by {H}. {F}. {Baker}}}},\ \bibinfo {edition} {reprint
  of the 1909 hardback}\ ed.,\ edited by\ \bibinfo {editor} {\bibfnamefont
  {H.~F.}\ \bibnamefont {Baker}}\ (\bibinfo  {publisher} {Cambridge: Cambridge
  University Press},\ \bibinfo {year} {2011})\BibitemShut {NoStop}%
\bibitem [{\citenamefont {{Rieffel}}(1981)}]{rieffel:nc}%
  \BibitemOpen
  \bibfield  {author} {\bibinfo {author} {\bibfnamefont {M.~A.}\ \bibnamefont
  {{Rieffel}}},\ }\bibfield  {title} {\enquote {\bibinfo {title} {{C*-algebras
  associated with irrational rotations}},}\ }\href {\doibase
  10.2140/pjm.1981.93.415} {\bibfield  {journal} {\bibinfo  {journal} {{Pac. J.
  Math.}}\ }\textbf {\bibinfo {volume} {93}},\ \bibinfo {pages} {415--429}
  (\bibinfo {year} {1981})}\BibitemShut {NoStop}%
\bibitem [{\citenamefont {Connes}(1994)}]{connes:ncg}%
  \BibitemOpen
  \bibfield  {author} {\bibinfo {author} {\bibfnamefont {A.}~\bibnamefont
  {Connes}},\ }\href@noop {} {\emph {\bibinfo {title} {Noncommutative
  Geometry}}}\ (\bibinfo  {publisher} {Academic Press},\ \bibinfo {year}
  {1994})\BibitemShut {NoStop}%
\bibitem [{\citenamefont {Nishikawa}\ and\ \citenamefont
  {Proietti}(2020)}]{val:shi}%
  \BibitemOpen
  \bibfield  {author} {\bibinfo {author} {\bibfnamefont {S.}~\bibnamefont
  {Nishikawa}}\ and\ \bibinfo {author} {\bibfnamefont {V.}~\bibnamefont
  {Proietti}},\ }\bibfield  {title} {\enquote {\bibinfo {title} {Groups with
  {Spanier--Whitehead} duality},}\ }\href {\doibase 10.2140/akt.2020.5.465}
  {\bibfield  {journal} {\bibinfo  {journal} {Annals of {$K$}-theory}\ }\textbf
  {\bibinfo {volume} {5}},\ \bibinfo {pages} {465--500} (\bibinfo {year}
  {2020})}\BibitemShut {NoStop}%
\bibitem [{\citenamefont {Faddeev}(1995)}]{fad:weyl}%
  \BibitemOpen
  \bibfield  {author} {\bibinfo {author} {\bibfnamefont {L.~D.}\ \bibnamefont
  {Faddeev}},\ }\bibfield  {title} {\enquote {\bibinfo {title} {Discrete
  {Heisenberg}-{Weyl} group and modular group},}\ }\href {\doibase
  10.1007/BF01872779} {\bibfield  {journal} {\bibinfo  {journal} {Lett. Math.
  Phys.}\ }\textbf {\bibinfo {volume} {34}},\ \bibinfo {pages} {249--254}
  (\bibinfo {year} {1995})}\BibitemShut {NoStop}%
\bibitem [{\citenamefont {Faddeev}(2014)}]{fad:moddouble}%
  \BibitemOpen
  \bibfield  {author} {\bibinfo {author} {\bibfnamefont {L.~D.}\ \bibnamefont
  {Faddeev}},\ }\bibfield  {title} {\enquote {\bibinfo {title} {Modular double
  of the quantum group {{\(\mathrm{SL}_q(2,\mathbb{R})\)}}},}\ }in\ \href
  {\doibase 10.1007/978-4-431-55285-7_3} {\emph {\bibinfo {booktitle} {Lie
  theory and its applications in physics. Selected papers based on the
  presentations at the 10th international workshop, LT 10, Varna, Bulgaria,
  June 17--23, 2013}}}\ (\bibinfo  {publisher} {Tokyo: Springer},\ \bibinfo
  {year} {2014})\ pp.\ \bibinfo {pages} {21--31}\BibitemShut {NoStop}%
\bibitem [{\citenamefont {Kharchev}, \citenamefont {Lebedev},\ and\
  \citenamefont {Semenov-Tian-Shansky}(2002)}]{kharchev_unitary_2002}%
  \BibitemOpen
  \bibfield  {author} {\bibinfo {author} {\bibfnamefont {S.}~\bibnamefont
  {Kharchev}}, \bibinfo {author} {\bibfnamefont {D.}~\bibnamefont {Lebedev}}, \
  and\ \bibinfo {author} {\bibfnamefont {M.}~\bibnamefont
  {Semenov-Tian-Shansky}},\ }\bibfield  {title} {\enquote {\bibinfo {title}
  {Unitary representations of {{\(U_{q} (\mathfrak s\mathfrak l(2,\mathbb
  R))\)}}, the modular double and the multiparticle {{\(q\)}}-deformed {Toda}
  chain},}\ }\href {\doibase 10.1007/s002200100592} {\bibfield  {journal}
  {\bibinfo  {journal} {Commun. Math. Phys.}\ }\textbf {\bibinfo {volume}
  {225}},\ \bibinfo {pages} {573--609} (\bibinfo {year} {2002})}\BibitemShut
  {NoStop}%
\bibitem [{\citenamefont {Kohmoto}(1989)}]{kohmoto_zero_1989}%
  \BibitemOpen
  \bibfield  {author} {\bibinfo {author} {\bibfnamefont {M.}~\bibnamefont
  {Kohmoto}},\ }\bibfield  {title} {\enquote {\bibinfo {title} {Zero modes and
  the quantized {Hall} conductance of the two-dimensional lattice in a magnetic
  field},}\ }\href {\doibase 10.1103/PhysRevB.39.11943} {\bibfield  {journal}
  {\bibinfo  {journal} {Phys. Rev. B}\ }\textbf {\bibinfo {volume} {39}},\
  \bibinfo {pages} {11943--11949} (\bibinfo {year} {1989})}\BibitemShut
  {NoStop}%
\bibitem [{\citenamefont {Simon}(2000)}]{barry:century}%
  \BibitemOpen
  \bibfield  {author} {\bibinfo {author} {\bibfnamefont {B.}~\bibnamefont
  {Simon}},\ }\bibfield  {title} {\enquote {\bibinfo {title} {Schr{\"o}dinger
  operators in the twenty-first century},}\ }in\ \href@noop {} {\emph {\bibinfo
  {booktitle} {Mathematical physics 2000. International congress, London, GB,
  2000}}}\ (\bibinfo  {publisher} {London: Imperial College Press},\ \bibinfo
  {year} {2000})\ pp.\ \bibinfo {pages} {283--288}\BibitemShut {NoStop}%
\bibitem [{\citenamefont {Avila}\ and\ \citenamefont
  {Jitomirskaya}(2006)}]{avila_solving_2006}%
  \BibitemOpen
  \bibfield  {author} {\bibinfo {author} {\bibfnamefont {A.}~\bibnamefont
  {Avila}}\ and\ \bibinfo {author} {\bibfnamefont {S.}~\bibnamefont
  {Jitomirskaya}},\ }\enquote {\bibinfo {title} {Solving the ten martini
  problem},}\ in\ \href {\doibase 10.1007/3-540-34273-7_2} {\emph {\bibinfo
  {booktitle} {Mathematical Physics of Quantum Mechanics: Selected and Refereed
  Lectures from QMath9}}}\ (\bibinfo  {publisher} {Springer Berlin
  Heidelberg},\ \bibinfo {address} {Berlin, Heidelberg},\ \bibinfo {year}
  {2006})\ pp.\ \bibinfo {pages} {5--16}\BibitemShut {NoStop}%
\bibitem [{\citenamefont {Avila}(2008)}]{avila_the-absolutely_2008}%
  \BibitemOpen
  \bibfield  {author} {\bibinfo {author} {\bibfnamefont {A.}~\bibnamefont
  {Avila}},\ }\href@noop {} {\enquote {\bibinfo {title} {The absolutely
  continuous spectrum of the almost {Mathieu} operator},}\ } (\bibinfo {year}
  {2008}),\ \Eprint {http://arxiv.org/abs/0810.2965} {arXiv:0810.2965
  [math.DS]} \BibitemShut {NoStop}%
\bibitem [{\citenamefont {Zak}(1964)}]{zak_magnetic_1964}%
  \BibitemOpen
  \bibfield  {author} {\bibinfo {author} {\bibfnamefont {J.}~\bibnamefont
  {Zak}},\ }\bibfield  {title} {\enquote {\bibinfo {title} {Magnetic
  translation group},}\ }\href {\doibase 10.1103/PhysRev.134.A1602} {\bibfield
  {journal} {\bibinfo  {journal} {Phys. Rev.}\ }\textbf {\bibinfo {volume}
  {134}},\ \bibinfo {pages} {A1602--A1606} (\bibinfo {year}
  {1964})}\BibitemShut {NoStop}%
\bibitem [{\citenamefont {Chambers}(1965)}]{chambers_linear-network_1965}%
  \BibitemOpen
  \bibfield  {author} {\bibinfo {author} {\bibfnamefont {W.~G.}\ \bibnamefont
  {Chambers}},\ }\bibfield  {title} {\enquote {\bibinfo {title} {Linear-network
  model for magnetic breakdown in two dimensions},}\ }\href {\doibase
  10.1103/PhysRev.140.A135} {\bibfield  {journal} {\bibinfo  {journal} {Phys.
  Rev.}\ }\textbf {\bibinfo {volume} {140}},\ \bibinfo {pages} {A135--A143}
  (\bibinfo {year} {1965})}\BibitemShut {NoStop}%
\bibitem [{\citenamefont {Marra}, \citenamefont {Citro},\ and\ \citenamefont
  {Ortix}(2015)}]{marra_fractional_2015}%
  \BibitemOpen
  \bibfield  {author} {\bibinfo {author} {\bibfnamefont {P.}~\bibnamefont
  {Marra}}, \bibinfo {author} {\bibfnamefont {R.}~\bibnamefont {Citro}}, \ and\
  \bibinfo {author} {\bibfnamefont {C.}~\bibnamefont {Ortix}},\ }\bibfield
  {title} {\enquote {\bibinfo {title} {Fractional quantization of the
  topological charge pumping in a one-dimensional superlattice},}\ }\href
  {\doibase 10.1103/PhysRevB.91.125411} {\bibfield  {journal} {\bibinfo
  {journal} {Phys. Rev. B}\ }\textbf {\bibinfo {volume} {91}},\ \bibinfo
  {pages} {125411} (\bibinfo {year} {2015})}\BibitemShut {NoStop}%
\bibitem [{\citenamefont {Wen}\ and\ \citenamefont
  {Zee}(1989)}]{wen_winding_1989}%
  \BibitemOpen
  \bibfield  {author} {\bibinfo {author} {\bibfnamefont {X.~G.}\ \bibnamefont
  {Wen}}\ and\ \bibinfo {author} {\bibfnamefont {A.}~\bibnamefont {Zee}},\
  }\bibfield  {title} {\enquote {\bibinfo {title} {Winding number, family index
  theorem and electron hopping in magnetic field},}\ }\href {\doibase
  10.1016/0550-3213(89)90062-X} {\bibfield  {journal} {\bibinfo  {journal}
  {Nucl. Phys. B}\ }\textbf {\bibinfo {volume} {316}},\ \bibinfo {pages}
  {641--662} (\bibinfo {year} {1989})}\BibitemShut {NoStop}%
\bibitem [{\citenamefont {Jantzen}(1996)}]{jan:qtumgrp}%
  \BibitemOpen
  \bibfield  {author} {\bibinfo {author} {\bibfnamefont {J.~C.}\ \bibnamefont
  {Jantzen}},\ }\href@noop {} {\emph {\bibinfo {title} {Lectures on quantum
  groups}}},\ \bibinfo {series} {Grad. Stud. Math.}, Vol.~\bibinfo {volume}
  {6}\ (\bibinfo  {publisher} {Providence, RI: AMS, American Mathematical
  Society},\ \bibinfo {year} {1996})\BibitemShut {NoStop}%
\bibitem [{\citenamefont
  {Mostafazadeh}(2002)}]{mostafazadeh_pseudo-hermiticity_2002}%
  \BibitemOpen
  \bibfield  {author} {\bibinfo {author} {\bibfnamefont {A.}~\bibnamefont
  {Mostafazadeh}},\ }\bibfield  {title} {\enquote {\bibinfo {title}
  {{Pseudo-Hermiticity versus PT symmetry: The necessary condition for the
  reality of the spectrum of a non-Hermitian Hamiltonian}},}\ }\href {\doibase
  10.1063/1.1418246} {\bibfield  {journal} {\bibinfo  {journal} {J.. Math.
  Phys.}\ }\textbf {\bibinfo {volume} {43}},\ \bibinfo {pages} {205--214}
  (\bibinfo {year} {2002})}\BibitemShut {NoStop}%
\bibitem [{\citenamefont
  {Fern{\'{a}}ndez}(2015)}]{fernandez_non-hermitian_2015}%
  \BibitemOpen
  \bibfield  {author} {\bibinfo {author} {\bibfnamefont {F.~M.}\ \bibnamefont
  {Fern{\'{a}}ndez}},\ }\bibfield  {title} {\enquote {\bibinfo {title}
  {Non-hermitian hamiltonians and similarity transformations},}\ }\href
  {\doibase 10.1007/s10773-015-2724-x} {\bibfield  {journal} {\bibinfo
  {journal} {Int. J. Theor. Phys.}\ }\textbf {\bibinfo {volume} {55}},\
  \bibinfo {pages} {843--850} (\bibinfo {year} {2015})}\BibitemShut {NoStop}%
\bibitem [{\citenamefont {Kreft}(1994)}]{kreft_94}%
  \BibitemOpen
  \bibfield  {author} {\bibinfo {author} {\bibfnamefont {C.}~\bibnamefont
  {Kreft}},\ }\bibfield  {title} {\enquote {\bibinfo {title} {{The Spectrum of
  the discrete quantum pendulum}},}\ }\href@noop {} {\bibfield  {journal}
  {\bibinfo  {journal} {Differentialgeometrie und Quantenphysik}\ ,\ \bibinfo
  {pages} {SFB--288--80}} (\bibinfo {year} {1994})}\BibitemShut {NoStop}%
\end{thebibliography}

%

\end{document}